\newcommand{\pushright}[1]{\ifmeasuring@#1\else\omit\hfill$\displaystyle#1$\fi\ignorespaces}
\newcommand{\pushleft}[1]{\ifmeasuring@#1\else\omit$\displaystyle#1$\hfill\fi\ignorespaces}
\renewcommand{\norm}[1]{\|#1\|}
\newcommand{\normI}[1]{\left\|#1\right\|_{\scriptscriptstyle 1}}
\newcommand{\normsup}[1]{\left\|#1\right\|_{\scriptscriptstyle\infty}}
\newcommand{\supp}{\mathrm{supp}}
\newcommand{\lrangle}[1]{\langle #1 \rangle}
\newcommand{\Id}{\mathrm{Id}}
\newcommand{\Z}{\mathbb{Z}}
\newcommand{\R}{\mathbb{R}}
\newcommand{\Rd}{\mathbb{R}^d}
\newcommand{\Zd}{\mathbb{Z}^d}
\newcommand{\Ed}{\mathbb{E}^d}
\newcommand{\pairings}{\mathrm{Pairings}}
\newcommand{\bbB}{\mathbb{B}}
\newcommand{\bbE}{\mathbb{E}}
\newcommand{\bbR}{\mathbb{R}}
\newcommand{\bbS}{\mathbb{S}}
\newcommand{\bbZ}{\mathbb{Z}}
\newcommand{\rme}{\mathrm{e}}
\newcommand{\rmB}{\mathrm{B}}
\newcommand{\rmJ}{\mathrm{J}}
\newcommand{\calE}{\mathcal{E}}
\newcommand{\calF}{\mathcal{F}}
\newcommand{\calG}{\mathcal{G}}
\newcommand{\calH}{\mathcal{H}}
\newcommand{\calP}{\mathcal{P}}
\theoremstyle{plain}
\newtheorem{theorem}{Theorem}[section]
\newtheorem{lemma}[theorem]{Lemma}
\newtheorem{proposition}[theorem]{Proposition}
\newtheorem{corollary}[theorem]{Corollary}
\newtheorem{remark}{Remark}[section]
\newtheorem{claim}{Claim}
\theoremstyle{definition}
\newtheorem{definition}{Definition}[section]
\newtheorem{obs}{Observation}
\author{Alessandro Giuliani}
\address{Dipartimento di Matematica e Fisica, Universit\`a degli Studi Roma Tre, L.go S. L. Murialdo 1, 00146 Roma, Italy \& \\
Centro Linceo Interdisciplinare {\it Beniamino Segre}, Accademia Nazionale dei Lincei, Via della Lungara 10, 00165 Roma, Italy}
\email{alessandro.giuliani@uniroma3.it}
\author{S\'{e}bastien Ott}
\address{D\'epartement de Math\'ematiques, Universit\'e de Fribourg,
Chemin du Mus\'ee 23, 1700 Fribourg, Switzerland}
\email{ott.sebast@gmail.com}
\date{\today}
\title{Low temperature asymptotic expansion for classical $O(N)$ vector models}
\begin{document}

\begin{abstract}
We consider classical $O(N)$ vector models in dimension three and higher and investigate the nature of the low-temperature expansions for their multipoint 
spin correlations. We prove that such expansions define asymptotic series, and derive explicit estimates on the error terms associated with their finite 
order truncations. The result applies, in particular, 
to the spontaneous magnetization of the 3D Heisenberg model. The proof combines a priori bounds on the moments of the local spin observables, following from 
reflection positivity and the infrared bound, with an integration-by-parts method applied systematically 
to a suitable integral representation of the correlation functions. Our method generalizes an  approach, proposed originally by Bricmont and collaborators \cite{Bricmont+Fontaine+Lebowitz+Lieb+Spencer-1980} in the context of the rotator model, to the case of non-abelian symmetry and non-gradient observables. 
\end{abstract}

\maketitle


\section{Introduction}

In 1976, in one of the most influential papers in mathematical statistical mechanics, Fr\"ohlich-Simon-Spencer (FSS)  \cite{Frohlich+Simon+Spencer-1976} proved the existence
of orientational long range order for classical $O(N)$ vector models in three or more dimensions. In 1981, in the special case of the $O(2)$ model, also known as the rotator or XY model,
Bricmont-Fontaine-Lebowitz-Lieb-Spencer (BFLLS) 
\cite{Bricmont+Fontaine+Lebowitz+Lieb+Spencer-1980} extended the 
FSS result, proving, in particular, that the formal low-temperature expansion for its magnetization defines an asymptotic series; that is, the difference between 
the magnetization at temperature $T$ and the truncation of its low-temperature series at any finite order $n$ is $o(T^n)$ as $T\to 0$. A direct generalization of this result to $O(N)$ models with $N\ge 3$, including the 
physically relevant case of the Heisenberg model, remained open since, due to the difficulty in extending the BFLLS method to the case of non-abelian rotational symmetry. 
By `direct',  here, we mean by similar elementary methods, combining  a priori bounds on the moments of local spin observables, based on reflection positivity and the related infrared bound \cite{Frohlich+Simon+Spencer-1976, Frohlich+Israel+Lieb+Simon-1978}, with systematic integration-by-parts applied to a suitable integral representation of
the spin correlations. In this paper we provide such a direct, self-contained, proof of the asymptotic nature of the low-temperature expansion for the magnetization and other 
spin correlation functions of classical $O(N)$ vector models, for any $N\ge 3$. Technically, our proof is based on an extension of the BFLLS proof, capable of handling non-abelian symmetry and non-gradient observables. 

Let us recall that there are other, more involved, approaches that can  be used to investigate the low-temperature properties of classical $O(N)$ vector models: we refer here to Balaban's multiscale analysis \cite{Balaban-1995, Balaban-1996, Balaban-1998} of its correlation functions \cite{Balaban+OCarroll-1999}. Balaban's construction implies, in  particular, 
that the low-temperature expansion for the magnetization is an asymptotic series; it actually provides much more detailed informations on  the correlations at low temperatures, including their 
large distance asymptotics and decay exponents. However, the construction is extremely involved and it is hard to extract from it explicit informations in a simple and direct way. 

For this reason, it is important to develop simpler approaches to the problem, particularly in cases, such as the classical $O(N)$ vector model, where the system displays reflection positivity, 
which in turn implies several strong a priori bounds on correlation functions. 
The proof described below shows that, even in models with non-abelian continuous symmetry, 
there is no intrinsic obstacle in using the infrared bound to derive bounds on the low-temperature expansion for correlations. 
However, even if our method allows us to control multipoint spin correlation at any fixed coordinates, 
it does not provide informations on their asymptotic behavior as the relative distance between the spins diverges to infinity. We believe that, in order to compute the 
critical exponents and the dominant asymptotics of spin correlations, multiscale analysis is 
inevitable. It would be extremely nice to be able to use systematically reflection positivity within a multiscale scheme, thus simplifying Balaban's approach, particularly in 
connection with 
the technically involved procedure related to the small/large field decomposition. It would also be nice to develop methods, inspired by the BFLLS approach, capable of 
controlling the low-temperature series for the magnetization of reflection positive quantum spin systems, such as the 3D quantum XY model or quantum Hesienberg anti-ferromagnet. 
These long term goals are behind the very motivations of the current work. It remains to be seen whether such challenging problems can be solved or at least attacked via
extensions of the current work, and we hope to come back to these problems in future publications. 

\subsection{Definition of the model and finite-volume infrared bound.}\label{sec:1.1}

Let \(\Lambda_L\) be the \(d\)-dimensional discrete torus of side \(L\), in dimension $d\ge 3$. Each site of $\Lambda_L$ carries an $N$-component unit vector, called `spin'; we assume that $N\ge 2$, and denote 
by \(\Omega_L = (\bbS^{N-1})^{\Lambda_L} \) the corresponding space of spin configurations. For \(f:\Omega_L \to \R\), \(\beta\geq 0\), and \(h\geq 0\), we define the un-normalized expectations as follows:
\begin{equation}\label{Zetas}	\begin{split}
&	Z_{L;\beta}(f) = \int_{\Omega_L} \prod_{x\in\Lambda_L} d\nu_N(S_x) f(S) e^{\beta \sum_{x\in\Lambda_L}\sum_{e} S_x\cdot S_{x+e} },\\
&	Z_{L;\beta, h}(f) \equiv Z_{L;\beta}\big(f  e^{h\sum_{x\in\Lambda_L} S_x^N }\big),
\end{split}\end{equation}
where \(\nu_N\) is the Lebesgue (area) measure on \(\bbS^{N-1}\). 
Their normalized counterparts are:
\begin{equation}\label{eq:averages}
	\lrangle{f}_{L;\beta} = \frac{Z_{L;\beta}(f)}{Z_{L;\beta}(1)},\quad \lrangle{f}_{L;\beta,h} = \frac{Z_{L;\beta,h}(f)}{Z_{L;\beta,h}(1)}
\end{equation}We will write \(Z_{L;\beta,h}\equiv Z_{L;\beta,h}(1)\), and often omit \(\beta\) from the notation. \(\mu_{L;h}\) will denote the probability measure with expectation \(\lrangle{\ }_{L;h}\). 

{\it Reflection positivity}  of $\mu_{L;h}$ implies the following finite-volume version of the \emph{infrared bound} ~\cite[Theorem 2.1]{Frohlich+Simon+Spencer-1976}, 
which is one of the basic ingredients of our analysis. 
We denote by $\hat\rme_k$ with $k=1,\ldots,d$ the elements of the canonical basis of unit vectors of $\mathbb R^d$, and 
$\nabla^{\hat\rme_k}_x f=  f(x+\hat\rme_k)-f(x) = (\nabla^{\hat \rme_k}f)(x)$. We also denote by \(\Delta_{L}\) the Laplacian in the torus \(\Lambda_L\) and by \(G_L = (-\Delta_L)^{-1}\)
its finite-volume Green's function.

\begin{proposition}
	\label{thm:Infrared_finite_volume.0}
	For any \(h,\beta\geq 0\), any \(L\), and any \(f_1,\cdots,f_d :\Lambda_L\to\R^{N}\),
	\begin{equation*}
	\lrangle{e^{\sum_{k=1}^d\sum_{x\in\Lambda_L} S_x\cdot\nabla^{\hat\rme_k}_xf_k }}_{L;h}\leq e^{\frac{1}{2\beta}\sum_{k=1}^d\sum_{x\in\Lambda_L}(f_k(x))^2 }.
	\end{equation*} In particular, 
	for \(f\) such that \(\sum_{x}f(x) =0\), letting $f_k(x)=(-\Delta_L)^{-1}\nabla^{\hat\rme_k}_x f$ in  the previous equation, one has:
	\begin{equation}
	\label{eq:infrared_fin_vol.0}
	\lrangle{e^{\sum_{x\in\Lambda_L} S_x\cdot f(x) }}_{L;h}\leq e^{\frac{1}{2\beta}(f,G_L f) }.
	\end{equation}
\end{proposition}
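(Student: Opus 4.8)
The plan is to reduce both displayed inequalities to the \emph{Gaussian domination} bound, which is the genuine analytic content of reflection positivity; the remaining manipulations are completing the square and a discrete summation by parts on the torus. Since $\norm{S_x}=1$, I first rewrite $S_x\cdot S_{x+\hat\rme_k}=1-\tfrac12\babs{\nabla^{\hat\rme_k}_x S}^2$ and introduce, for an arbitrary real shift field $v=(v_k)_{1\le k\le d}$ with $v_k:\Lambda_L\to\R^N$, the perturbed partition function
\begin{equation*}
Z(v)=\int_{\Omega_L}\prod_{x}d\nu_N(S_x)\,e^{h\sum_{x}S_x^N}\exp\Bigl(-\tfrac{\beta}{2}\sum_{k=1}^d\sum_{x}\babs{\nabla^{\hat\rme_k}_x S-v_k(x)}^2\Bigr),
\end{equation*}
so that $Z(0)$ equals $Z_{L;\beta,h}$ up to a multiplicative constant independent of $v$. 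The crux is the claim that $Z(v)\le Z(0)$ for every $v$.

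I would obtain this from reflection positivity. With respect to reflections through hyperplanes orthogonal to the coordinate axes (both site- and bond-reflections), the measure $\mu_{L;h}$ is reflection positive — note that the magnetic term $e^{h\sum_x S_x^N}$ is ultralocal and reflection invariant, hence harmless — so the associated Cauchy--Schwarz inequality gives $Z(v)^2\le Z(v^+)\,Z(v^-)$, where $v^\pm$ are obtained by reflecting the restriction of $v$ to one side of the plane. Iterating this estimate over all reflection planes and all $d$ directions reduces a general $v$ to a field that is constant along each axis; I then observe that a constant shift $v_k\equiv c_k$ only multiplies $Z$ by $\exp(-\tfrac{\beta}{2}\sum_{k,x}\babs{c_k}^2)\le 1$, because the cross term $\beta\sum_{k,x}c_k\cdot\nabla^{\hat\rme_k}_x S$ telescopes to $0$ on the torus. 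This yields $Z(v)\le Z(0)$, and it is the step I expect to require the most care, chiefly in setting up the reflections correctly and controlling the iteration down to constant fields.

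Expanding the square and factoring out $Z(0)$ gives $Z(v)=Z(0)\,e^{-\frac{\beta}{2}\sum_{k,x}\babs{v_k(x)}^2}\lrangle{e^{\beta\sum_{k,x}v_k(x)\cdot\nabla^{\hat\rme_k}_x S}}_{L;h}$, so Gaussian domination is exactly $\lrangle{e^{\beta\sum_{k,x}v_k\cdot\nabla^{\hat\rme_k}_x S}}_{L;h}\le e^{\frac{\beta}{2}\sum_{k,x}\babs{v_k(x)}^2}$. A summation by parts on the torus rewrites the linear form $\beta\sum_{k,x}v_k(x)\cdot\nabla^{\hat\rme_k}_x S$ as $\sum_{k,x}S_x\cdot\nabla^{\hat\rme_k}_x f_k$ for a field $f_k$ obtained from $v_k$ by a translation and the rescaling by $\beta$, under which $\tfrac{1}{2\beta}\sum_{k,x}\babs{f_k(x)}^2=\tfrac{\beta}{2}\sum_{k,x}\babs{v_k(x)}^2$; since $v\mapsto f$ is a bijection, this is precisely the first displayed bound.

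Finally, for $f$ with $\sum_x f(x)=0$ I solve the discrete divergence equation $\sum_{k}\nabla^{\hat\rme_k}_x f_k=f$, whose solution is the stated $f_k=(-\Delta_L)^{-1}\nabla^{\hat\rme_k}_x f$ (up to the immaterial choice of forward versus backward difference), using that $\sum_k(\nabla^{\hat\rme_k})^{*}\nabla^{\hat\rme_k}=\sum_k\nabla^{\hat\rme_k}(\nabla^{\hat\rme_k})^{*}=-\Delta_L$ and that the constraint $\sum_x f=0$ guarantees $G_Lf$ is well defined. The exponent then becomes $\sum_{x}S_x\cdot f(x)$, while the quadratic term evaluates, by the same Laplacian identity, to $\sum_{k,x}\babs{f_k(x)}^2=(G_Lf,-\Delta_L G_Lf)=(f,G_Lf)$, giving \eqref{eq:infrared_fin_vol.0}.
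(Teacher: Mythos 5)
Your proof is correct and is essentially the same argument the paper relies on: the paper does not prove this proposition itself but quotes it from Fr\"ohlich--Simon--Spencer \cite[Theorem 2.1]{Frohlich+Simon+Spencer-1976}, whose proof is exactly the scheme you reconstruct --- Gaussian domination $Z(v)\le Z(0)$ via iterated reflection positivity (unaffected by the single-site field $h\sum_x S_x^N$), completing the square, summation by parts, and then the choice $f_k=G_L\nabla^{\hat\rme_k}f$ together with $\sum_k(\nabla^{\hat\rme_k})^*\nabla^{\hat\rme_k}=-\Delta_L$ to get \eqref{eq:infrared_fin_vol.0}. Your explicit handling of the forward/backward-difference (adjoint) convention, which is what makes the divergence equation and the evaluation $\sum_{k,x}|f_k(x)|^2=(G_Lf,-\Delta_LG_Lf)=(f,G_Lf)$ come out right, is the correct reading of the statement.
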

 
\subsection{Construction of the infinite volume measure and infrared bound.}\label{sec:1.1bis}

We will work with the successive limits \(L\to \infty\) (thermodynamic limit), and \(h\searrow 0\) of \(\mu_{L;h}\). We denote by $\mu_h$ a (arbitrarily chosen) 
cluster point of $\mu_{L;h}$ as $L\to\infty$, which exists by compactness and inherits translation invariance from the finite volume measures, and by $\langle\cdot\rangle_h$ 
the corresponding expectation. It is believed (but, to the best of the authors knowledge, not yet proved) that the limit of $\mu_{L;h}$ as \(L\to \infty\) always exists and is exponentially 
mixing when \(h>0\). 

A crucial ingredient for our construction is the following infinite volume version of the infrared bound: 
\begin{equation}\lrangle{e^{\sum_{x} (S_x-\lrangle{S_x}_h)\cdot f(x) }}_h\leq e^{\frac{1}{2\beta}(f,G f) },\label{eq:ch}\end{equation}
for any $f$ of finite support, with $G=(-\Delta)^{-1}$ the Green's function associated with the lattice Laplacian \(\Delta\) on \(\Zd\). As proved in \cite[Corollary 2.5]{Frohlich+Simon+Spencer-1976}, if $\mu_h$ is {\it ergodic}, then \eqref{eq:ch} readily follows from \eqref{eq:infrared_fin_vol.0}. 
When \(N=2,3\), one can extract from Lee-Yang theory (see~\cite{Frohlich+Rodriguez-2012,Frohlich+Rodriguez-2017}) that the limit state \(\mu_h=\lim_{L\to \infty} \mu_{L;h}\) (with \(h>0\)) is well defined and exponentially mixing; in particular, it is ergodic and, therefore, \eqref{eq:ch} follows. When \(N>3\), Lee-Yang theorem is not available. However, the day can still be saved. In fact, for the purpose of proving \eqref{eq:ch}, ergodicity is not required: as shown in Appendix \ref{prova:IB}, 
 the property that $L^{-d}\sum_{n\in \mathbb Z^d: |n_i|\le L/2}S_{x+n}$ converges to $\lrangle{S_x}_h$ in $L^p(\mu_h)$ for $p\ge 1$ suffices. 
 
Let \(\psi:\R^N\to \R\) be the pressure of the \(O(N)\) model, defined by \(\psi(v)\equiv \psi_{\beta}(v) = \lim_{L\to\infty} \frac{1}{|\Lambda_L|} \log Z_{L;\beta}(e^{\sum_{x\in \Lambda_L} v\cdot S_x})\). Let $$D^*:=\{h>0:\  \psi \textnormal{ is differentiable at } hs\ \forall s\in\mathbb S^{N-1}\}$$ 
(note that the definition is well-posed, thanks to $O(N)$ invariance: if $\psi$ is differentaible at $hs_0$ for some $s_0\in\mathbb S^{N-1}$, then it is automatically differentiable at $hs$ for all $s\in\mathbb S^{N-1}$). $D^*$ has full Lebesgue measure on $(0,+\infty)$, thanks to the convexity of $\psi$. Moreover, let $\mu$ be a (arbitrarily chosen) limit point of $\mu_h$, as $h\in D^*$ tends to $0^+$, and let $\lrangle{\cdot}$ be the corresponding expectation. Then the following generalization of \cite[Corollary 2.5 and Theorem 3.1]{Frohlich+Simon+Spencer-1976} holds (for the proof, see Appendix \ref{prova:IB}).

\begin{proposition}
	\label{thm:Infrared_infinite_volume_magn_bound} 	
If $h\in D^*$, for any \(f\) with finite support,
	\begin{equation}
	\label{eq:infrared_infin_vol_centered_spins}
	\lrangle{e^{\sum_{x} (S_x-\lrangle{S_x}_h)\cdot f(x) }}_h\leq e^{\frac{1}{2\beta}(f,G f) }.
	\end{equation}
In particular,
\begin{equation}
	\label{eq:infrared_magnet}
	\lrangle{S_0}_h\cdot\lrangle{S_0}_h \geq 1-{N T}G(0,0),
	\end{equation}where \(G(0,0) = (2\pi)^{-d}\int_{p\in[-\pi,\pi]^d} dp \big(2\sum_{k=1}^d (1-\cos(p_k))\big)^{-1}\).
	The bounds \eqref{eq:infrared_infin_vol_centered_spins} and \eqref{eq:infrared_magnet} also hold with $\lrangle{\cdot}$ replacing $\lrangle{\cdot}_h$.
\end{proposition}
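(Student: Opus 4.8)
The plan is to establish \eqref{eq:infrared_infin_vol_centered_spins} first, by transferring the finite-volume bound \eqref{eq:infrared_fin_vol.0} to the infinite-volume state $\lrangle{\cdot}_h$, and then to deduce the magnetization bound \eqref{eq:infrared_magnet} from \eqref{eq:infrared_infin_vol_centered_spins} by a second-order expansion combined with the unit-length constraint $\sum_\alpha(S_0^\alpha)^2=1$. The statements for $\lrangle{\cdot}$ will follow by passing to the weak limit $h\to 0^+$ along $D^*$.

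For \eqref{eq:infrared_infin_vol_centered_spins}, I fix $f$ with finite support and set $m=\sum_y f(y)\in\R^N$. On each $\Lambda_L$, with $L$ large enough that $\supp f\subset\Lambda_L$, I define the mean-zero function $g_L(x)=f(x)-|\Lambda_L|^{-1}m$, so that $\sum_x g_L(x)=0$ and \eqref{eq:infrared_fin_vol.0} applies to $g_L$. Since $\sum_x S_x\cdot g_L(x)=\sum_x S_x\cdot f(x)-m\cdot\big(|\Lambda_L|^{-1}\sum_x S_x\big)$, this yields
\[
\lrangle{\exp\Big(\sum_x S_x\cdot f(x)-m\cdot\big(|\Lambda_L|^{-1}\textstyle\sum_x S_x\big)\Big)}_{L;h}\le e^{\frac1{2\beta}(g_L,G_L g_L)}.
\]
I then let $L\to\infty$ along the subsequence defining $\mu_h$. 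The exponent is uniformly bounded because $|S_x|=1$ and $f$ has finite support, and $|\Lambda_L|^{-1}\sum_x S_x\to\lrangle{S_0}_h$ in $L^p(\mu_h)$ by the convergence property recalled before the statement (proved in Appendix \ref{prova:IB}); hence dominated convergence together with weak convergence identifies the limit of the left-hand side. On the right, $(g_L,G_L g_L)\to(f,Gf)$ by Riemann-sum convergence of the torus Green's function — the limit being finite for $d\ge 3$ despite $\hat f(0)=m\ne 0$, the zero mode contributing the convergent integral near $p=0$. Using translation invariance, $m\cdot\lrangle{S_0}_h=\sum_x\lrangle{S_x}_h\cdot f(x)$, which recenters the exponent and produces exactly \eqref{eq:infrared_infin_vol_centered_spins}.

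To obtain \eqref{eq:infrared_magnet}, I replace $f$ by $tf$ in \eqref{eq:infrared_infin_vol_centered_spins}, take logarithms, and compare the two sides to second order in $t$: the linear term vanishes since the centered field has zero mean, so
\[
\Blrangle{\Big(\sum_x (S_x-\lrangle{S_x}_h)\cdot f(x)\Big)^2}_h\le \tfrac1\beta (f,Gf).
\]
Taking $f$ with only its $\alpha$-th $\R^N$-component nonzero and equal to a real finite-support function $g$, and passing to Fourier variables, this reads $\int |\hat g(p)|^2 \hat C_\alpha(p)\,\tfrac{dp}{(2\pi)^d}\le \tfrac1\beta\int|\hat g(p)|^2\,\big(2\sum_k(1-\cos p_k)\big)^{-1}\tfrac{dp}{(2\pi)^d}$, where $\hat C_\alpha$ is the Fourier transform of the truncated correlation $\lrangle{(S_0^\alpha-\lrangle{S_0^\alpha}_h)(S_x^\alpha-\lrangle{S_x^\alpha}_h)}_h$. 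Letting $|\hat g|^2$ concentrate at a point gives the pointwise bound $\hat C_\alpha(p)\le \tfrac1\beta\big(2\sum_k(1-\cos p_k)\big)^{-1}$ for a.e.\ $p$. The sum rule $\lrangle{(S_0^\alpha-\lrangle{S_0^\alpha}_h)^2}_h=(2\pi)^{-d}\int \hat C_\alpha(p)\,dp$ then yields $\lrangle{(S_0^\alpha-\lrangle{S_0^\alpha}_h)^2}_h\le T\,G(0,0)$, with $T=1/\beta$; summing over $\alpha=1,\dots,N$ and using $\sum_\alpha\lrangle{(S_0^\alpha)^2}_h=1$ gives $|\lrangle{S_0}_h|^2=1-\sum_\alpha\lrangle{(S_0^\alpha-\lrangle{S_0^\alpha}_h)^2}_h\ge 1-NT\,G(0,0)$.

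Finally, the statements for $\lrangle{\cdot}$ follow by taking $h\to 0^+$ along the (sub)sequence in $D^*$ defining $\mu$: the centering constants $\lrangle{S_x}_h\to\lrangle{S_x}$ and the left-hand sides converge by weak convergence of bounded continuous observables, while the right-hand sides are $h$-independent. I expect the only genuine difficulty to lie in the infinite-volume limit of the first step — namely the passage from the hard constraint $\sum_x g_L(x)=0$ to centering by the spontaneous magnetization — which is precisely where the $L^p$-convergence of the empirical magnetization (the non-ergodic refinement of \cite[Corollary 2.5]{Frohlich+Simon+Spencer-1976}) is needed; the remaining steps are a routine cumulant expansion and a standard sum-rule computation.
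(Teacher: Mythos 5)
Your overall skeleton (finite-volume infrared bound, infinite-volume limit with centering, second-order expansion in a small parameter, then $h\searrow 0$) is the same as the paper's, but the execution of the key step has a genuine gap. You center the test function at finite volume, $g_L(x)=f(x)-|\Lambda_L|^{-1}m$, so that the correction term involves the \emph{full-torus} empirical magnetization $M_L=|\Lambda_L|^{-1}\sum_{x\in\Lambda_L}S_x$, and you then pass to the limit $L\to\infty$ by invoking the $L^p(\mu_h)$ convergence of block averages. That property, however, is a statement about the \emph{infinite-volume} measure $\mu_h$, while your expectation is taken with respect to $\mu_{L;h}$, and the observable $e^{\sum_x S_x\cdot f(x)-m\cdot M_L}$ is non-local and changes with $L$. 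Weak convergence of $\mu_{L;h}$ to $\mu_h$ (along the chosen subsequence) only identifies limits of expectations of \emph{fixed, local} bounded observables; it says nothing about $M_L$ under $\mu_{L;h}$. To repair your route you would need concentration of the torus magnetization under $\mu_{L;h}$ around $\lrangle{S_0}_h$ (e.g.\ via a large-deviation argument exploiting differentiability of $\psi$ at $h\hat\rme_N$), which you neither state nor prove, and which does not follow from the cited property. The paper avoids exactly this trap by a two-parameter scheme: the counter-term is spread over a \emph{fixed} box, $f_n(x)=f(x)-(2n+1)^{-d}C_f\mathds 1_{x\in\{-n,\dots,n\}^d}$, so that $e^{\sum_x S_x\cdot f_n(x)}$ is a fixed local observable; one takes $L\to\infty$ first (legitimate by weak convergence), obtaining $\lrangle{e^{\sum_x S_x\cdot f_n(x)}}_h\le e^{\frac1{2\beta}(f_n,Gf_n)}$, and only then $n\to\infty$, where $\sum_x S_x\cdot f_n(x)=\sum_x(S_x-M_n)\cdot f(x)$ and the convergence $\lrangle{|M_n-\lrangle{S_0}_h|}_h\to 0$ is now a statement entirely inside $\mu_h$; that statement is in turn proved by ergodic decomposition together with the fact that, for $h\in D^*$, differentiability of $\psi$ forces every ergodic (indeed every translation-invariant) Gibbs state to have the same magnetization. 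This ordering of limits is the actual content of the appendix, and it is what your proposal is missing.

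A secondary, fixable issue concerns your derivation of \eqref{eq:infrared_magnet}. The Fourier transform of the truncated two-point function is in general a positive \emph{measure}, possibly with a singular part (this is precisely what can happen when $\mu_h$ fails to be ergodic); a pointwise bound ``for a.e.\ $p$'' on its density does not by itself justify the sum rule $\lrangle{(S_0^\alpha-\lrangle{S_0^\alpha}_h)^2}_h=(2\pi)^{-d}\int\hat C_\alpha(p)\,dp$. The detour through momentum space is also unnecessary: once you have the variance inequality $\blrangle{\big(\sum_x(S_x-\lrangle{S_x}_h)\cdot f(x)\big)^2}_h\le \frac1\beta(f,Gf)$, simply take $f(x)=\delta_{x,0}\hat\rme_\alpha$, which gives $\mathrm{Var}(S_0^\alpha)\le T\,G(0,0)$ directly; summing over $\alpha$ and using $\sum_\alpha (S_0^\alpha)^2=1$ yields \eqref{eq:infrared_magnet}. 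This is exactly the paper's argument. Your final step, passing from $\lrangle{\cdot}_h$ to $\lrangle{\cdot}$ along $D^*\ni h\to 0^+$, is correct as written.
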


\subsection{Main results}

We are now ready to state the main result of this work.  

\begin{theorem}
	\label{thm:main}
	For any \(n,K\in \Z_+:=\{0,1,2,\cdots\}\), and \(\epsilon>0\) there exists \(C<\infty\) such that for any \(\alpha^1,\cdots,\alpha^N:\Zd\to \Z_+\) with \(\sum_{k=1}^N\sum_{x\in \Zd}\alpha^k_x \leq K\), there are coefficients \(a_0,\cdots , a_n\in \R\) such that
	\begin{equation*}
		\Big|\lrangle{\prod_{x}\prod_k (S_x^k)^{\alpha_{x}^k}} -\sum_{i=0}^{n} a_i T^i\Big| \leq CT^{n+1-\epsilon}
	\end{equation*}where the \(a_i\)s are given by an explicit inductive algorithm.
\end{theorem}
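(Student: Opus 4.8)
The plan is to establish the bound first in finite volume $\Lambda_L$ at fixed external field $h>0$, uniformly in $L$ and $h$, and then pass to the successive limits $L\to\infty$, $h\searrow 0$ of Section~\ref{sec:1.1bis}; since the coefficients $a_i$ and the constant $C$ are built to be $L$- and $h$-uniform, the limits are harmless. Writing $T=1/\beta$, the first step is to choose coordinates adapted to the symmetry breaking: since Proposition~\ref{thm:Infrared_infinite_volume_magn_bound} and its finite-volume analogue force the magnetization to point (say) along $\vec e_N$, with $\lrangle{S_0}\cdot\lrangle{S_0}\ge 1-NTG(0,0)$, I would split each spin as $S_x=(\sigma_x,\sqrt{1-|\sigma_x|^2})$ with transverse part $\sigma_x\in\R^{N-1}$, and use the exponential moment bound \eqref{eq:infrared_infin_vol_centered_spins} to show that the region where some $|\sigma_x|$ fails to be small has probability $e^{-c/T}$, hence is negligible to all orders in $T$.

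Next I would rescale the transverse fluctuations to their natural size by setting $\sigma_x=\sqrt T\,\phi_x$. In these variables the single-site measure $d\nu_N=(1-T|\phi_x|^2)^{-1/2}T^{(N-1)/2}d\phi_x$, the field term $h\sqrt{1-T|\phi_x|^2}$, and the interaction $\beta(S_x\cdot S_{x+e}-1)=-\tfrac12|\phi_x-\phi_{x+e}|^2+O(T)$ all expand in \emph{integer} powers of $T$, the leading weight being a Gaussian measure on $\phi$ with $T$-independent covariance proportional to $G_L$ (the spin-wave propagator underlying the infrared bound). The only half-integer powers would come from the transverse factors $(S_x^k)^{\alpha_x^k}=T^{\alpha_x^k/2}(\phi_x^k)^{\alpha_x^k}$ of the observable, but the residual reflection symmetries $\sigma_x^k\mapsto-\sigma_x^k$, $k<N$, force the correlation to vanish unless the total number $q=\sum_{k<N}\sum_x\alpha_x^k$ of transverse factors is even, in which case the entire series runs over integer powers of $T$ starting at $T^{q/2}$; this is precisely why only integer powers survive. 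Collecting equal powers of $T$ in the ratio numerator over denominator then \emph{defines} the $a_i$, computed inductively by Wick's theorem at the Gaussian leading order together with the perturbative corrections from the non-Gaussian part of the measure and from the spherical constraint, each of which reduces to finite sums of products of $G_L$.

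The engine that turns this formal expansion into a controlled one is systematic integration by parts in the transverse variables, generalizing the BFLLS mechanism, and two features make it substantially harder than in the rotator case. First, the symmetry is non-abelian: the fluctuation $\phi_x$ is genuinely $(N-1)$-dimensional and lives on a curved sphere, so the integration by parts must carry the Jacobian $(1-|\sigma_x|^2)^{-1/2}$ and the quadratic constraint rather than act on a single angular variable. Second, the observables are non-gradient — they involve the spins $S_x$ themselves rather than the differences $S_x-S_{x+e}$ — so one cannot simply integrate against $\nabla G_L$; the longitudinal and zero-mode parts must instead be controlled by exploiting the symmetry breaking induced by $h>0$ (which also lifts the torus zero mode by a mass $\sim h$) together with the magnetization bound, which pins the direction $\vec e_N$ and supplies the needed a priori control.

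Finally I would bound the truncation error. After expanding the observable, the measure and the interaction to order $T^n$, the remainder is a finite sum of terms, each of the form $T^{n+1}$ times an explicit polynomial in the $\sigma_x$ (of degree controlled by $n$ and $K$) multiplied by a bounded remainder factor from Taylor's theorem. Taking the expectation in the \emph{true} measure and invoking the a priori moment bounds extracted from \eqref{eq:infrared_infin_vol_centered_spins} — which give $\lrangle{\prod_{i=1}^m|\sigma_{x_i}\cdot v_i|}\le C_m\,T^{m/2}$ with sub-Gaussian constants, uniformly in $L,h$ — yields an $O(T^{n+1})$ estimate. I expect the main obstacle to be precisely this last step, carried out \emph{uniformly over all multi-indices with $\sum_{k,x}\alpha_x^k\le K$}: since the observable may involve $K$ spins at arbitrary and arbitrarily distant sites, the integration by parts and the remainder generate sums over intermediate lattice points of products of the massless Green's function $G$, which decays only slowly in $d=3$. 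Controlling these sums uniformly in the positions, and converting the exponential-moment input into polynomial-moment bounds with the sharp power, is what forces the small loss $T^{-\epsilon}$ in the final bound $CT^{n+1-\epsilon}$.
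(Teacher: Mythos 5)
Your plan reproduces the broad skeleton of the paper (rescaled fluctuation fields, infrared-bound moment estimates, systematic integration by parts, an $\epsilon$-loss from summing propagators), but the coordinate choice on which everything rests creates a gap that the rest of the argument cannot repair. You parametrize $S_x=(\sigma_x,\sqrt{1-|\sigma_x|^2})$ with $\sigma_x=\sqrt{T}\,\phi_x\in\R^{N-1}$ and treat the measure as a gradient Gaussian plus $O(T)$ corrections. In these coordinates, however, the perturbation is \emph{not} a function of gradients of $\phi$: the single-site Jacobian of $d\nu_N$ is $(1-T|\phi_x|^2)^{-1/2}$, contributing a non-gradient term $\tfrac{T}{2}\sum_x|\phi_x|^2+O(T^2)$ to the exponent, and the expansion of $\sqrt{(1-T|\phi_x|^2)(1-T|\phi_{x+e}|^2)}$ produces coefficients depending on $|\phi_x|^2,|\phi_{x+e}|^2$ themselves, not only on differences. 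This is exactly the non-abelian obstruction the paper emphasizes: for $N>2$ the model cannot be written as a gradient perturbation of a GFF. The consequence is fatal at the integration-by-parts step. Differentiating a non-gradient term such as $\tfrac{T}{2}\sum_x|\phi_x|^2$ produces remainder sums of the form $T\sum_z G^m(x_0,z)\lrangle{F\,\phi_z^k\cdots}$, weighted by the Green function itself rather than by its gradient. Since $\sum_z G^m(x_0,z)=m^{-2}$, and since the best a priori decay obtainable from reflection positivity (Lemma \ref{lem:setof_pt_to_setof_pt_decay}) is only $\beta^{\epsilon}\log(1+|z|)/(1+|z|)$, the estimate \eqref{eq:Gaussian_est:sum_of_G_gradG_with_power} gives a cost of order $T\,m^{-1-\epsilon}$ for such a term, which cannot be made $O(T^{2-\epsilon})$ for \emph{any} choice of the regularizing mass $m=m(T)$; the induction therefore stalls already at first order. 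By contrast, when every $\phi$-dependence of the weight is through $\nabla\phi$, all remainder sums carry $\nabla G^m$ or $\nabla\nabla G^m$ and cost only $m^{-\epsilon}$ or $|\log m|$, which is removable.

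The ideas missing from your proposal are precisely the ones that resolve this. The paper does not treat the $N-1$ transverse components symmetrically: it uses $N-2$ coordinates $u$ plus one angle $\theta$ in the $(\hat\rme_{N-1},\hat\rme_N)$-plane, a parametrization that is Jacobian-free (eq.~\eqref{eq:3bis}) and in which the Gibbs weight depends on $\phi=\sqrt{\beta}\,\theta$ \emph{only} through $\cos(\sqrt{T}\nabla^e_x\phi)$ (plus an $h$-term that vanishes in the limit). Integration by parts is then performed only in $\phi$; the $u$-variables are never integrated by parts at all. Instead, monomials in $\tilde u$ are converted into monomials in $\phi$ via the residual $O(N-1)$ symmetry \eqref{eq:remaining_symmetry_u_theta}, and the whole computation is organized as a double induction in the order $M$ and the degree $K$ (Theorem \ref{thm:main_induction}), with separate hypotheses for gradient and non-gradient observables. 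A secondary but related flaw: you propose to control zero modes and longitudinal directions using the mass $\sim h$ generated by the external field, but all estimates must be uniform as $h\searrow 0$; the paper instead adds and subtracts an artificial mass $m$ chosen as an explicit function of $T$ (e.g.\ $T^{\gamma}$ or $e^{-(\log T)^2}$), with quantitative mass-removal estimates (Theorem \ref{thm:Gaussian_estimates}, Lemma \ref{lem:decay_and_mass_removal_implies_massDecay}). Without the asymmetric coordinates and the symmetry trick replacing integration by parts in $u$, the error terms you describe in your last paragraph are not $O(T^{n+1-\epsilon})$ but diverge as the regularization is removed.
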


A Corollary of Theorem~\ref{thm:main} and of our inductive algorithm is the following second order approximation of the magnetization. Recall that 
\(\Delta\) denotes the Laplacian on \(\Zd\) and \(G=(-\Delta)^{-1}\) its Green function. Recall also that $\{\hat\rme_k\}_{k=1,\ldots,d}$ denotes the canonical basis of unit vectors of $\mathbb R^d$. The discrete derivative of $G$ acting (say) on its second coordinate, $\nabla^{\hat\rme_k}_xG(x_0,\cdot)$, was defined right before the statement of Proposition \ref{thm:Infrared_finite_volume.0}. Similarly, we let 
$\nabla^{\hat\rme_k,\hat\rme_{k'}}_{x,y} G = G(x+\hat\rme_k,y+\hat\rme_{k'})-G(x,y+\hat\rme_{k'})-G(x+\hat\rme_k,y)+G(x,y)$. 

\begin{corollary}
	\label{cor:magnet_second_order} For $d\ge 3$ and $N\ge 2$, the following second order expansion for the spontaneous magnetization of the $O(N)$ vector model holds, as $T\to 0^+$: 
	\begin{multline}\label{eq:1}
		\lrangle{S_0^N} = 1 -T \frac{(N-1)}{2}  G(0,0)
		+ T^2 \frac{(N-1)}{2} \Big[ \frac{(3N-5)}4G(0,0)^2\\
		- \sum_{x_1\in \mathbb Z^d}\sum_{e\in \{\hat\rme_i, i=1,\cdots d\}}\big(\nabla_{x_1}^{e} G(0,\cdot)\big)^2\Big(\frac{1}{2} \nabla_{x_1,x_1}^{e,e}G + (N-2)G(0,0)\Big)\Big] + o(T^2).
	\end{multline}
\end{corollary}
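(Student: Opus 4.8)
The plan is to read off the expansion from Theorem~\ref{thm:main}, whose inductive algorithm I would run for the transverse moments of the spin at the origin, retaining only the first three coefficients. Parametrising the sphere around the magnetisation axis by the transverse part $\sigma_x=(S_x^1,\dots,S_x^{N-1})$, one has $S_0^N=\sqrt{1-\babs{\sigma_0}^2}=1-\tfrac12\babs{\sigma_0}^2-\tfrac18\babs{\sigma_0}^4+O(\babs{\sigma_0}^6)$, whence
\begin{equation*}
\lrangle{S_0^N}=1-\tfrac12\lrangle{\babs{\sigma_0}^2}-\tfrac18\lrangle{\babs{\sigma_0}^4}+O\big(\lrangle{\babs{\sigma_0}^6}\big).
\end{equation*}
Each power $\babs{\sigma_0}^{2m}=\big(\sum_{k=1}^{N-1}(S_0^k)^2\big)^m$ is a finite linear combination of observables of the form treated in Theorem~\ref{thm:main} (with $K=2m$), and specialising the exponential moment bound \eqref{eq:infrared_infin_vol_centered_spins} to a field supported at the origin gives the a priori estimate $\lrangle{\babs{\sigma_0}^{2m}}=O(T^m)$. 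In particular $\lrangle{\babs{\sigma_0}^6}=O(T^3)$ is negligible at the order considered, so it remains to expand $\lrangle{\babs{\sigma_0}^2}$ up to order $T^2$ and $\lrangle{\babs{\sigma_0}^4}$ to its leading order $T^2$.

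I would then feed these two quantities into the inductive algorithm, which amounts to a low-temperature (spin-wave) perturbation expansion. Rescaling $\sigma_x=\sqrt{T}\,\phi_x$, the logarithm of the Gibbs weight (including the Jacobian $(1-\babs{\sigma_x}^2)^{-1/2}$ of the single-site measure) splits into a $T$-independent Gaussian part, under which the $N-1$ components $\phi^k$ are independent massless fields with covariance $\lrangle{\phi^k_x\phi^l_y}_0=\delta_{kl}G(x,y)$, plus $T\,V_1+O(T^2)$, where
\begin{equation*}
V_1=\tfrac12\sum_{z\in\Zd}\babs{\phi_z}^2-\tfrac18\sum_{x\in\Zd}\sum_{e}\big(\babs{\phi_x}^2-\babs{\phi_{x+e}}^2\big)^2,
\end{equation*}
the first term coming from the measure and the second (a discrete gradient square) from the quartic part of $S_x\cdot S_{x+e}$, with $e$ running over $\{\hat\rme_1,\dots,\hat\rme_d\}$. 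To leading order $\lrangle{\babs{\phi_0}^2}_0=(N-1)G(0,0)$, which already gives $a_0=1$ and $a_1=-\tfrac12(N-1)G(0,0)$. The coefficient $a_2$ combines two pieces: the first-order correction $\lrangle{\babs{\phi_0}^2;V_1}_0:=\lrangle{\babs{\phi_0}^2V_1}_0-\lrangle{\babs{\phi_0}^2}_0\lrangle{V_1}_0$ coming from $\lrangle{\babs{\sigma_0}^2}$, and the leading Gaussian value $\lrangle{\babs{\phi_0}^4}_0=(N^2-1)G(0,0)^2$ coming from $-\tfrac18\lrangle{\babs{\sigma_0}^4}$. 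Both reduce to finite sums of products of $G$, to be evaluated by Wick's theorem while tracking the $O(N)$ component indices.

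The main difficulty is the evaluation of $\lrangle{\babs{\phi_0}^2;V_1}_0$. Taken separately, the Wick contractions generated by the mass term and by the gradient term of $V_1$ produce lattice sums such as $\sum_{z}G(0,z)^2$ and $\sum_{x}\sum_{e}G(0,x)G(0,x+e)$, which converge only for $d\ge5$; the hard part is to show that these infrared divergences cancel for every $d\ge3$. I expect the cancellation to be governed entirely by the identity $(-\Delta)G=\delta_0$, used in the two equivalent forms $\sum_{x}\sum_{e}\big(\nabla^{e}_xG(0,\cdot)\big)^2=G(0,0)$ and $2d\,\big(G(0,0)-G(0,\hat\rme_1)\big)=1$: after collecting terms, the net coefficient multiplying $\sum_z G(0,z)^2$ is precisely $1-2d\,(G(0,0)-G(0,\hat\rme_1))=0$. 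What survives is the absolutely convergent expression
\begin{equation*}
\lrangle{\babs{\phi_0}^2;V_1}_0=-(N-1)\sum_{x\in\Zd}\sum_{e}\big(\nabla^{e}_xG(0,\cdot)\big)^2\,G(x,x+e),
\end{equation*}
with the overall factor $(N-1)$ and the residual $(N-2)G(0,0)$ appearing in \eqref{eq:1} originating from the $O(N)$ component combinatorics of the contractions. Substituting $G(x,x+e)=G(0,0)-\tfrac12\nabla^{e,e}_{x,x}G$, combining (with factor $-\tfrac12$) with $-\tfrac18(N^2-1)G(0,0)^2$, and regrouping the resulting $G(0,0)^2$ terms via $\sum_{x,e}(\nabla^e_xG(0,\cdot))^2=G(0,0)$ reproduces the bracket in \eqref{eq:1}. (Since $\tfrac12\nabla^{e,e}_{x,x}G+(N-2)G(0,0)$ is independent of $x_1$, the displayed sum could in fact be collapsed using the same identity, but it is left unsummed in the form produced by the algorithm.) The error bound $o(T^2)$ is then furnished directly by Theorem~\ref{thm:main}.
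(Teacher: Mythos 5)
Your final expression does agree with \eqref{eq:1}, but the derivation has a genuine gap at its central step: the infrared cancellation you claim for $\lrangle{\babs{\phi_0}^2;V_1}_0$ is false. Carrying out the Wick contractions with $N-1$ independent components of covariance $G$, the mass term gives $\tfrac12\sum_z\lrangle{\babs{\phi_0}^2;\babs{\phi_z}^2}_0=(N-1)\sum_z G(0,z)^2$, while for each $(x,e)$ one finds
\begin{equation*}
\lrangle{\babs{\phi_0}^2;\big(\babs{\phi_x}^2-\babs{\phi_{x+e}}^2\big)^2}_0
=8(N-1)\Big[G(0,0)\big(G(0,x)^2+G(0,x+e)^2\big)-G(0,\hat\rme_1)\,G(0,x)\,G(0,x+e)\Big],
\end{equation*}
using $G(x,x+e)=G(0,\hat\rme_1)$. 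Now $\sum_{x}\sum_{e}\big(G(0,x)^2+G(0,x+e)^2\big)=2d\sum_z G(0,z)^2$, but the cross term carries only \emph{one} copy of the divergent sum per direction: exactly, $\sum_x\sum_e G(0,x)G(0,x+e)=d\sum_z G(0,z)^2-\tfrac12 G(0,0)$. Hence the coefficient of $\sum_z G(0,z)^2$ in $\lrangle{\babs{\phi_0}^2;V_1}_0$ is
\begin{equation*}
(N-1)\big[1-2dG(0,0)+dG(0,\hat\rme_1)\big]=-(N-1)\,d\,G(0,\hat\rme_1)\neq 0,
\end{equation*}
not $(N-1)\big[1-2d\big(G(0,0)-G(0,\hat\rme_1)\big)\big]=0$ as you assert: your bookkeeping assigns the cross term weight $2d$ instead of $d$. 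So for $d=3,4$ your first-order coefficient is genuinely divergent; worse, for $d\geq5$, where every sum converges, the Gaussian value $(N-1)\big[(\tfrac12-dG(0,0))\sum_zG(0,z)^2-\tfrac12G(0,0)G(0,\hat\rme_1)\big]$ still differs from the correct coefficient $-(N-1)G(0,0)G(0,\hat\rme_1)$, which is the number your final formula implicitly uses and which \eqref{def2} of the paper yields.

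This is not a repairable algebra slip but a structural failure, and it is precisely the obstruction the paper is organized around: in your coordinates the model is \emph{not} a gradient perturbation of a Gaussian free field, because the hemisphere Jacobian produces the non-gradient, wrong-sign mass term $\tfrac12\sum_z\babs{\phi_z}^2$ in $V_1$. At $h=0$ in finite volume nothing pins the zero mode (there is no symmetry breaking, so the expansion point is meaningless), while for $h>0$ the reference Gaussian acquires a mass $m^2\sim hT$ depending on both parameters; the massless infinite-volume perturbation theory you write down is the limit of none of these, which is why its coefficients come out divergent or wrong. The paper's $(u,\theta)$ parametrization is engineered exactly so that the Gaussian variable $\phi$ enters the perturbation $W_{L;h}$ \emph{only through} $\nabla\phi$ (see \eqref{eq:7}), which makes all lattice sums in the expansion absolutely convergent (Lemma \ref{app:lem:sum_of_connected_corr}); the non-Gaussian $\tilde u$ variables are never expanded around a Gaussian, but are converted into $\phi$'s via the residual symmetry \eqref{eq:remaining_symmetry_u_theta} and controlled by the a priori bounds of Theorem \ref{thm:moments_u_theta}, with integration by parts performed at carefully chosen regularization masses (Lemmas \ref{lem:int_by_part_no_grad} and \ref{lem:int_by_part_grad}). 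Finally, you cannot invoke Theorem \ref{thm:main} for the $o(T^2)$ remainder while computing the coefficients by a different scheme: that theorem controls the coefficients produced by the paper's inductive algorithm, so you would first have to prove that your scheme computes the same numbers, which, by the above, it does not.
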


Of course, similar formulas can be derived for third or higher order truncations, but we decided to spell out for illustrative purposes only the simplest non-trivial one. 

A second immediate consequence of the proof of our main result is the following. 

\begin{corollary}\label{cor:2}
Let \(S=(S_x)_{x\in\mathbb Z^d}\) be distributed with respect to the measure $\mu$. Then the first $N-1$ components of the rescaled field $\sqrt{\beta}S$, that is, \(\sqrt{\beta}(S^1,S^2,\dots,S^{N-1})\), converge in law, as \(\beta\to \infty\), to \(N-1\) independent Gaussian Free Fields on $\mathbb Z^d$.
\end{corollary}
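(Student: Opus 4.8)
The plan is to prove convergence in law through convergence of Laplace transforms of the finite-dimensional marginals. Fix finitely supported test functions $g^1,\dots,g^{N-1}:\Zd\to\R$ and set $X_\beta := \sum_{k=1}^{N-1}\sqrt\beta\sum_x S_x^k\, g^k(x)$. Convergence of $\sqrt\beta(S^1,\dots,S^{N-1})$ to $N-1$ independent Gaussian free fields with covariance $G$ is equivalent to the convergence, for every such choice of the $g^k$'s, of $\lrangle{e^{X_\beta}}$ to $\exp\big(\tfrac12\sum_{k=1}^{N-1}(g^k,Gg^k)\big)$, the joint Laplace transform of the target field. First I would record the relevant symmetry: since each $\mu_{L;h}$ is invariant under the reflection $S^k\mapsto -S^k$ of any single transverse component $k<N$ (the single-site field $e^{h\sum_x S_x^N}$, the $O(N)$-invariant coupling, and $d\nu_N$ are all preserved), so is $\mu$. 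This gives $\lrangle{S_x^k}=0$ for $k<N$, and more generally the vanishing of any moment containing an odd power of some transverse component; in particular the cross-covariances $\lrangle{S_x^k S_y^{k'}}$ vanish for $k\neq k'$, which is the source of the asymptotic independence of the components.

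Next I would extract a uniform sub-Gaussian bound from the infrared bound. Applying \eqref{eq:infrared_infin_vol_centered_spins} to $\mu$ with $f=t\sqrt\beta\,(g^1,\dots,g^{N-1},0)$ and using $\lrangle{S_x^k}=0$ yields $\lrangle{e^{tX_\beta}}\le \exp\big(\tfrac{t^2}2\sum_{k<N}(g^k,Gg^k)\big)$ for all $t\in\R$ and all $\beta$, since $(f,Gf)=t^2\beta\sum_{k<N}(g^k,Gg^k)$. Thus $X_\beta$ is sub-Gaussian uniformly in $\beta$, which already gives the correct upper bound on the Laplace transform at $t=1$ and, more importantly, furnishes bounds $\lrangle{|X_\beta|^m}\le C_m$ uniform in $\beta$ for every $m$.

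Then I would invoke Theorem \ref{thm:main} together with the explicit leading coefficients produced by the inductive algorithm. A monomial $\prod_{k<N}\prod_x (S_x^k)^{\alpha_x^k}$ of total transverse degree $m$ has expectation of order $T^{m/2}$, with leading coefficient vanishing unless $m$ is even, in which case it is given by the sum over pairings of the relevant entries of $G$, i.e.\ by Wick's rule with covariance $G$, reflecting that to leading (spin-wave) order the transverse fluctuations are governed by the Gaussian action whose covariance is fixed by the infrared bound. Rescaling by $\beta^{m/2}=T^{-m/2}$ then yields convergence of each joint moment of $X_\beta$ to the corresponding Gaussian moment; in particular the variance converges to $\sum_{k<N}(g^k,Gg^k)$, consistently with $\beta\lrangle{(S_0^k)^2}\to G(0,0)$ as already visible from Corollary \ref{cor:magnet_second_order}. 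Finally, expanding $\lrangle{e^{X_\beta}}=\sum_m \tfrac{1}{m!}\lrangle{X_\beta^m}$, the uniform moment bounds of the previous step dominate the series uniformly in $\beta$, so the term-by-term moment convergence passes to the sum and $\lrangle{e^{X_\beta}}\to\exp\big(\tfrac12\sum_{k<N}(g^k,Gg^k)\big)$, completing the argument.

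The hard part will be the third step: verifying that the leading-order $T^{m/2}$ coefficients of the transverse moments obey Wick's theorem with covariance $G$, that is, that the limit is genuinely Gaussian rather than merely sub-Gaussian with the correct variance. The second moment is easy, being governed by the first-order term and matched to $G$ through the infrared bound; but identifying the higher moments requires reading off from the inductive algorithm that the dominant contribution is the fully contracted (Gaussian) one, with all non-Gaussian contractions suppressed by additional powers of $T$.
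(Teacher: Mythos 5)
Your overall frame (method of moments/Laplace transforms, with the infrared bound supplying uniform sub-Gaussian control) is sound, and your use of \eqref{eq:infrared_infin_vol_centered_spins} with $f=t\sqrt{\beta}(g^1,\dots,g^{N-1},0)$ to get $\lrangle{e^{tX_\beta}}\le e^{\frac{t^2}{2}\sum_{k<N}(g^k,Gg^k)}$ uniformly in $\beta$ is correct; it even handles the passage from moment convergence to convergence in law more explicitly than the paper, which simply reduces to monomials by density of local polynomials. But there is a genuine gap, and you have named it yourself: your step three --- the identification of the leading coefficients of the transverse moments with the Wick values --- is the entire content of the corollary, and nothing you cite supplies it. Theorem \ref{thm:main} as stated only asserts the \emph{existence} of coefficients $a_i$ produced by an algorithm, together with an $O(T^{n+1-\epsilon})$ error; it does not tell you that $a_i=0$ for $i<m/2$ (needed so that multiplying by $\beta^{m/2}$ does not blow up), nor that $a_{m/2}$ is the fully-paired Gaussian value with covariance $G\,\delta_{kk'}$ (needed so that the limit is Gaussian and the components independent). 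Note also that vanishing of cross-covariances by symmetry does not by itself yield asymptotic independence of the components; that too must be read off from the structure of $a_{m/2}$. So the ``hard part'' you defer is not a technical verification but the proof itself.

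The paper closes precisely this gap, and the way it does so shows what your argument is missing. In the $(\tilde u,\phi)$ coordinates the rescaled field is tautological: $\sqrt{\beta}S^k_x=\tilde u^k_x$ for $k\le N-2$, while $\sqrt{\beta}S^{N-1}_x=\sqrt{\beta}\sqrt{1-T\|\tilde u_x\|^2}\sin(\sqrt{T}\phi_x)$, which by Lemma \ref{lem:tay} is $\phi_x$ up to $O(T)$ corrections inside expectations. Hence no bookkeeping of powers of $T$ in the spin expansion is required: the rescaled joint moments are, up to $O(T)$, directly joint moments of $\phi$ and $\tilde u$. Then Theorem \ref{thm:main_induction} at order $n=0$ gives $\lrangle{\phi^{\alpha^{N-1}}\tilde u^{\tilde p}}=\lrangle{\phi^{\alpha^{N-1}}\tilde u^{\tilde p}}^{(0)}+O(T^{1-\epsilon})$, the extraction formula \eqref{eq:phi_contrib_extraction} factors out $\Phi(\phi^{\alpha^{N-1}})$ --- this is exactly where Wick's rule with covariance $G$ enters --- and the residual $O(N-1)$ symmetry \eqref{eq:remaining_symmetry_u_theta} converts the remaining $\tilde u^k$'s into $\phi$'s one component at a time, yielding after $N-2$ iterations the product $\prod_{k=1}^{N-1}\Phi(\phi^{\alpha^k})$, i.e.\ both Gaussianity and independence in the limit. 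To complete your proposal you would have to run exactly this machinery (or re-derive the corresponding structure of the algorithm's coefficients); as written, the argument assumes its conclusion at the key step.
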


\subsection{Organization of the paper}

We will prove Theorem~\ref{thm:main} by introducing an inductive procedure to compute the coefficients in the Taylor series. This expansion follows the same scheme as in~\cite{Bricmont+Fontaine+Lebowitz+Lieb+Spencer-1980}: a regularized version of the Gaussian integration by part and a priori bounds on the moments of suitable observables, obtained by using the infrared bound. The main difference lies in the fact that for \(N>2\), the \(O(N)\) model can not be written as a gradient perturbation of a Gaussian free field. This substantially complicates the picture and, as a result, we need to introduce a more involved procedure than the one of~\cite{Bricmont+Fontaine+Lebowitz+Lieb+Spencer-1980}, 
particularly in regards to the procedure required for summing over the coordinates the remainder terms. The 
paper is organized as follows: after having summarized the notations used in the paper in Section \ref{subsec:notation}, in Section \ref{sec:2} we introduce the coordinates we use to parametrize the spin space; then, using these coordinates, we express the Gibbs measure 
as a perturbation of a Gaussian one; finally, we use this re-writing of the measure to derive the formal low-temperature expansion, whose structure will be used below, in the 
description of the inductive procedure. In Section \ref{sec:infraref_bnd}, we explain how to use the infrared bound to derive a priori bound on the spin moments (this is 
an adaptation of the method in \cite{Bricmont+Fontaine+Lebowitz+Lieb+Spencer-1980}). 
In Section \ref{sec:int_by_part} we describe the version of the Gaussian integration by parts to be used in the following. Finally, in Section \ref{sec:Inductive_exp}, 
we describe the inductive procedure: we first illustrate the procedure for the computation of the spontaneous magnetization 
at order $T^2$, thus proving Corollary \ref{cor:magnet_second_order}, and then discuss the general scheme at all orders, thus proving Theorem \ref{thm:main} and Corollary \ref{cor:2}. 
Appendix \ref{prova:IB} contains the proof of Proposition \ref{thm:Infrared_infinite_volume_magn_bound}. 
The proofs of the priori bounds on the decay of correlations (which use reflection positivity in a 
relatively standard manner) are collected in Appendix \ref{app:UB_corr_funct}, while Appendix \ref{app:B} 
discusses some standard Gaussian estimates, and their implications for the 
summation over the coordinates of certain connected Gaussian expectations, which systematically appear in the evaluation of the remainder terms. 

\section{Notations}\label{subsec:notation}

Before starting the proof of our main results, let us summarize here the notations and conventions used in this paper. For ease of reference, we include here also those already introduced in the previous section.

We write \(\Z_+=\{0,1,2,\cdots\}\) and \([n] = \{1,2,\cdots, n\}\).

We will work in dimensions \(d\geq 3\), on either \(\Zd\) or \(\Lambda_L=\{0,\cdots,L-1\}^d\) the \(d\)-dimensional discrete torus of side \(L\). We denote \(E_L,E\) the set of nearest neighbour pairs in \(\Lambda_L,\Zd\). We write \(i\sim j\) when \(i\) and \(j\) are nearest neighbours. When writing \(\sum_{x}\), we mean summing over $x\in\mathbb Z^d$. 

The  symbols \(\hat\rme_i\), with \(i\in\{1,\ldots,d\}\), will denote the canonical basis unit vectors of \(\Rd\) and \(\rmB = \{-\hat\rme_1,\hat\rme_1,\ldots,-\hat\rme_d,\hat\rme_d\}\). Sometimes we will also use the notation \(\hat\rme_i\), with $i\in\{1,\ldots,N\}$, for the base vectors of \(\bbR^N\) and hope that it will not confuse the reader. We attribute a canonical orientation to each edge: any edge can be written \(\{x,x+e\}\) with \(e \in \rmB_+=\{\hat\rme_i, i=1,\ldots, d\}\), and we orient it from \(x\) to \(x+e\). When writing \(\sum_{e}\), we mean summing over \(e\in \rmB_+\). Sums over \(e\in \rmB\) will be explicitly mentioned. Let \(\bbE_d=\big\{(x,e):\ x\in\Zd,e\in \rmB_+\big\}\). There is a canonical bijection between \(E\) and \(\bbE_d\).

For \(p:\Zd\to\Z_+\), \(q:\bbE_d\to \Z_+\), and \(\tilde{p}:\Zd\times[N-2]\to \Z_+\) we write
\begin{equation}\label{gat}\begin{split}
&	\phi^p = \prod_{x\in\Zd}\phi_x^{p_x},\quad (\nabla\phi)^{q} = \prod_{(x,e)\in\bbE_d} (\nabla_x^{e}\phi)^{q_{x,e}},\quad u^{\tilde{p}} = \prod_{x}\prod_{k=1}^{N-2} (u_x^k)^{\tilde{p}_x^k},\\
&	\normI{p}= \sum_{x\in \Zd} |p_x|,\quad \normI{q} = \sum_{(x,e)\in \bbE_d} |q_{x,e}|,\quad \normI{\tilde{p}} = \sum_{x}\sum_{k} |\tilde{p}_x^k|,
\end{split}\end{equation}
where: \(\tilde{p}^k:\Zd\to\Z_+\) is given by \(\tilde{p}^k_x= \tilde{p}(x,k)\), and, for \(f:\Zd\to\R^{N}\),
\begin{equation*}
	\nabla^{e}_x f=  f(x+e)-f(x) = (\nabla^{e}f)(x) ,\ e\in \rmB.
\end{equation*}
We say that  \(p\) is odd if \(\normI{p}\) is, same for \(q\). We say that \(\tilde{p}\) is even if each \(\tilde{p}^k\) is, we say that it is odd if there is \(k\in\{1,\cdots,N-2\}\) such that \(\tilde{p}^k\) is. We define the \emph{support} of \(p,q,\tilde{p}\) as above:
\begin{equation}\label{supp}\begin{split}
&	\supp_p = \{x\in\Zd:\ p_x\neq 0 \},\quad \supp_{\tilde{p}} = \bigcup_{k=1}^{N-2} \supp_{\tilde{p}^k},\\
&	\supp_q = \bigcup_{(x,e)\in\bbE_d: q_{x,e}\neq 0 } \{x,x+e\}.
\end{split}\end{equation}
We also let $\mathfrak P,\mathfrak Q,\tilde{\mathfrak P}$ be the sets of tuples \(p,q,\tilde{p}\) of finite support of the form described above, respectively. 
Analogous definitions hold for tuples $p:\Lambda_L\to\Z_+$, \(q:\big\{(x,e):\ x\in\Lambda_L, e\in \rmB_+\big\}\to \Z_+\), and \(\tilde{p}:\Lambda_L\times[N-2]\to \Z_+\), 
whose sets will be denoted $\mathfrak P_L,\mathfrak Q_L,\tilde{\mathfrak P}_L$, respectively.

\medskip

For \(f:\Zd\to\R^{N}\) and \(g:\Z^d\times\Zd\to \R\), let
\begin{gather*}
(-\Delta f)(x) = 2d f(x) - \sum_{e\in\rmB} f(x+e) =  -\sum_{e\in\rmB}\nabla^{e}_x f,\\
\nabla^{e,e'}_{x,y} g = g(x+e,y+e')-g(x,y+e')-g(x+e,y)+g(x,y), \quad e,e'\in\rmB.
\end{gather*}

\(\Delta\) is the Laplacian on \(\Zd\), we let \(G=(-\Delta)^{-1}\) be its Green function. \(G^{m}\) will denote the Green function of the massive Laplacian: \(G^{m} = (-\Delta + m^2\Id)^{-1}\). We write \(\Phi_m\) for the law (and expectation) of the (massive) Gaussian Free Field on \(\Zd\) (the centred Gaussian field with covariance \(G^m\)). \(\Delta_{L}\) will denote the Laplacian in the torus \(\Lambda_L\) and \(G^{m}_L = (-\Delta_L + m^2\Id)^{-1}\). \(\Phi_{m;L}\) will denote the law (and expectation) of the centred Gaussian vector with covariance matrix \(G^{m}_L\). When \(m\) is omitted, it is set to \(0\). We will write \(G^m(\cdot, y)\) the function \(x\mapsto G^{m}(x,y)\), and define similarly \(G^{m}(y,\cdot)\). We will sometimes use the notation \(G_{xy}^m \equiv G^m(x,y)\).

Scalar product will be denoted \(\cdot\) or \((\ ,\ )\). By convention, if \(f,g\) have value in \(\R^N\), \((f,g)=\sum_{x\in\Zd} f(x)\cdot g(x)\).

We will  denote \(\bbS^{k}\) the unit sphere in \(\bbR^{k+1}\) and \(\nu_{k+1}\) the Lebesgue (un-normalized uniform) measure on \(\bbS^{k}\). We also denote \(\bbB^k=\{x\in\R^k:\ \norm{x}\leq 1 \}\). Moreover, $\Omega_L=(\bbS^{N-1})^{\Lambda_L}$.

Finally, we recall that $\mu_{L;\beta,h}$, or simply $\mu_{L;h}$ (we often omit \(\beta\) from the notation), denotes the probability measure associated with the 
average $\lrangle{\cdot}_{L;\beta,h}$, or simply $\lrangle{\cdot}_{L;\beta}$, in \eqref{eq:averages}. $Z_{L;\beta,h}\equiv Z_{L;\beta,h}(1)$, or simply $Z_{L;h}$ denotes the partition function (see \eqref{Zetas}), and \(\psi:\R^N\to \R\) the pressure in the thermodynamic limit, defined as 
$\psi(v)=\lim_{L\to\infty}|\Lambda_L|^{-1}\log Z_{L;\beta}(e^{\sum_{x\in \Lambda_L} v\cdot S_x})$.
$\mu_h$ denotes an arbitrarily chosen limit point of $\mu_{L;h}$ as $L\to\infty$, and 
$\lrangle{\cdot}_h$ denotes the corresponding average. $\mu$ denotes an arbitrarily chosen limit point of $\mu_{h}$ as $h\searrow 0$ taken along a sequence of $h$s in $D^*:=\{h>0:\   \psi \textnormal{ is differentiable at } hs\ \forall s\in \bbS^{N-1},\}$, and $\lrangle{\cdot}$ the corresponding average.

In the following sections, $C,C',\ldots, c,c',\ldots$ denote constants independent of $T,h,m$, whose specific values may change from line to line; dependence upon additionally auxiliary parameters will usually be specified explicitly, with the exception of $N\ge 2$, which we shall not track the dependence of the various constants on. We will explicitly focus on the case $N>2$, the case $N=2$ being significantly simpler, in that the $u$ variables introduced at the beginning of the next section are not present; the interpretation and adaptation of the proof   
to the case $N=2$ is immediate and left to the reader (moreover, the case $N=2$ is already covered by \cite{Bricmont+Fontaine+Lebowitz+Lieb+Spencer-1980}). 

\section{Gaussian coordinates and the formal low-temperature expansion}\label{sec:2}

\subsection{The Gibbs measure as a perturbed Gaussian}
\label{sec:perturbed_Gaussian_pic}

We will use a particular parametrization of \(\bbS^{N-1}\): for \(u\in\R^{N-2} \) with \(\norm{u}\leq 1\), and \(-\pi< \theta\leq \pi\), we let 
\begin{equation}\label{transf}
	\Pi(u,\theta) = \big( u_1,\cdots, u_{N-2}, \rho\sin(\theta), \rho\cos(\theta) \big)
\end{equation}
with \(\rho = \rho(u) = \sqrt{1-\norm{u}^2}\), which provides a parametrization of \(\bbS^{N-1}\). Note that if $N=2$ there is no variable $u$ and $\rho$ should be replaced by $1$ (in this case, many of the formulas and equations below simplify significantly). One has the change of coordinate formula: for any integrable 
\(f:\bbS^{N-1}\to\R\),
\begin{equation}\label{eq:3bis}
	\int_{\bbS^{N-1}} d\nu_N(S) f(S) = \int_{\bbB^{N-2}} du\, \int_{-\pi}^{\pi} d\theta f(u_1,\cdots,u_{N-2},\rho \sin\theta,\rho\cos\theta).
\end{equation}
In light of \eqref{eq:3bis}, given an observable $a:\Omega_L\to \R$ of the spin configuration in $\Lambda_L$, 
letting $A$ be the composition of $a$ with the transformation \eqref{transf}, to be applied to each spin $S_x\in \bbS^{N-1}$, $x\in\Lambda_L$, we find that the un-normalized 
expectation of $a$ can be rewritten as follows: 
\begin{equation}\label{eq:3}
\begin{split}
 Z_{L;\beta,h}(a) & = \int_{\Omega_L} \prod_{x\in \Lambda_L} d\nu_N(S_x)\, a(S) e^{\sum_{x\in\Lambda_L}(\beta\sum_{\rme} S_{x+\rme}\cdot S_x+h S^N_x)}\\
 &= \int\limits_{(\bbB^{N-2})^{\Lambda_L}}\! d{u}\! \int\limits_{[-\pi,\pi]^{\Lambda_L}}\!\!\!d\theta \, A(u,\theta) e^{H_{L;h}(u,\theta)} ,
\end{split}
\end{equation}
where, in the first line, $S=(S_x)_{x\in\Lambda_L}$, and 
\begin{equation} 
\begin{split} H_{L;h}(u,\theta) &= \sum_{x\in\Lambda_L}\Big[
\beta \sum_{e}\Big(u_x\cdot u_{x+\rme}+\sqrt{(1-\|u_x\|^2)(1-\|u_{x+e}\|^2)}\cos(\nabla^e_x\theta)\Big)\\
& + h \sqrt{(1-\|u_x\|^2)}\cos(\theta_x)\Big].\end{split}
\end{equation}
If we now rescale the variables as follows: 
\begin{equation*}
	\phi_x=\sqrt{\beta}\theta_x,\quad \tilde{u}_x = \sqrt{\beta}u_x, \label{eq:rescale}
\end{equation*}
the expectation in \eqref{eq:3}  becomes, letting $T=\beta^{-1}$ and $\tilde A(\tilde u,\phi)= A(\sqrt{T}\tilde u,\sqrt{T}\phi)$:
\begin{equation}\label{eq:5}
\begin{split}
 Z_{L;\beta,h}(a) & = T^{L^d(N-1)/2}\!\!\!\!\int\limits_{(\sqrt{\beta}\bbB^{N-2})^{\Lambda_L}}\!\!\! \!d{\tilde u}\, e^{\mathcal H_{L;h}(\tilde u)}\!\!\!\!
  \int\limits_{[-\sqrt{\beta}\pi,\sqrt{\beta}\pi]^{\Lambda_L}}\!\!\!\!d\phi \, \tilde A(\tilde u,\phi) e^{-\frac12\|\nabla\phi\|^2}
  e^{W_{L;h}(\tilde u,\phi)}
\end{split}
\end{equation}
where $\|\nabla \phi\|^2=\sum_{x\in\Lambda_L}\sum_{e}(\nabla^e_x\phi)^2$, $\|\phi\|^2=\sum_{x\in\Lambda_L}(\phi_x)^2$,  
\begin{equation}
\label{eq:def:W_T}
\begin{split}  \mathcal H_{L;h}(\tilde u)&= \sum_{x\in\Lambda_L} \Big[\sum_{e}\Big(\tilde u_x\cdot \tilde u_{x+e}+\beta \sqrt{(1-T\|\tilde u_x\|^2)(1-T\|\tilde u_{x+e}\|^2)}\Big) \\
& +h \sqrt{1-T\|\tilde u_x\|^2}\Big]
 \end{split}\end{equation}
 and
 \begin{equation}\label{eq:7}\begin{split}
W_{L;h}(\tilde u,\phi)&= \sum_{x\in\Lambda_L}\Big\{\sum_{e}\Big[\beta \sqrt{(1-T\|\tilde u_x\|^2)(1-T\|\tilde u_{x+e}\|^2)}\Big(\cos(\sqrt{T}\nabla^e_x\phi)-1\Big)\\ &+\frac12(\nabla^e_x\phi)^2\Big] + h\sqrt{1-T\|\tilde u_x\|^2}\big(\cos(\sqrt{T}\phi_x)-1\big)
\Big\}.\end{split}
\end{equation}
Note that $W_{L;h}$ vanishes linearly in $T$ as $T\to 0$. More precisely, by expanding in Taylor series in $T$ the right side of \eqref{eq:7}, we find that
$W_{L;h}$ admits the following low-temperature expansion: 
\begin{equation}\label{eq:8}W_{L;h}(\tilde u, \phi) = \sum_{s\ge 1}T^s \sum_{x\in\Lambda_L}\sum_{\substack{r,p\ge 0: \\ r+p=s}}\Big[ c_r \sum_{e}
(\nabla^e_x\phi)^{2r+2} \mathcal G^p_{x,e}(\tilde u)+c_{r-1}\mathds 1_{r\ge 1}h (\phi_x)^{2r} \tilde c_p\|\tilde u_x\|^{2p}\Big],\end{equation}
where  $c_r=\tfrac{(-1)^{r+1}}{(2r+2)!}$, $\tilde c_p=(-1)^p{1/2 \choose p}$ (recall that \(\binom{1/2}{0} = 1\) and \(\binom{1/2}{p} = \frac{1}{p!} \frac{1}{2}(\frac{1}{2}-1)\cdots (\frac{1}{2}-p+1)\) for integer \(p>0\)) and $\mathcal G^p_{x,e}(\tilde u)=\sum_{l+m=p}\tilde c_l\tilde c_m \|\tilde u_x\|^{2l}\|\tilde u_{x+e}\|^{2m}$.

Therefore, \eqref{eq:5} expresses the \(O(N)\) model as a perturbation of a Gaussian measure, as far as the integration over the $\phi$ variables is concerned. Moreover, 
as a consequence of the \(O(N-1)\) symmetry of the model in the hyper-plane orthogonal to the magnetic field, for any \(k\in\{1,\cdots,N-2\}\) one has the equality in law
(with respect to the probability measure with density $\propto  e^{\mathcal H_{L;h}(\tilde u)-\frac12\|\nabla\phi\|^2
+W_{L;h}(\tilde u,\phi)}$):
	\begin{equation}
		\label{eq:remaining_symmetry_u_theta}
		\big(\tilde{u}^1,\cdots, \tilde{u}^{N-2}\big) \stackrel{\textnormal{law}}{=} \Big(\tilde{u}^1, \cdots, \tilde{u}^{k-1}, \mathfrak F(\tilde u,\phi), \tilde{u}^{k+1}, \cdots,\tilde{u}^{N-2}\Big).
	\end{equation}
where $\mathfrak F(\tilde u,\phi)=\big(\sqrt{\beta(1-T\|\tilde u_x\|^2)}\sin \big(\sqrt{T}\phi_x\big)\big)_{x\in\Lambda_L}$,

\subsection{The formal low temperature expansion} 
\label{sec:formal_low_temp_exp}

Eqs.\eqref{eq:5},\eqref{eq:8},\eqref{eq:remaining_symmetry_u_theta} allow us to recursively define the coefficients of the (a priori formal) low-temperature expansion for the average 
of any observable $a:\Omega_L\to\mathbb R$. Let $\tilde A(\tilde u,\phi)$ be, as before, the form that the observable takes, once it is re-expressed in terms of the variables $\tilde u,\phi$. 
We denote its formal Taylor expansion in $T$ by 
\begin{equation} \label{exp2} \tilde A(\tilde u,\phi)\stackrel{(\infty)}{=}\sum_{s\ge 0}T^s\sum_{p,\tilde p}a_{s}^{p,\tilde p}\phi^p\tilde u^{\tilde p}, \end{equation}
where $\stackrel{(\infty)}{=}$ denotes identity between formal power series, or between a function and its formal Taylor series,
the sums over $p,\tilde p$ run over $\mathfrak P_L, \tilde{\mathfrak P}_L$, respectively (see definition after \eqref{supp}), and $a_s^{p,\tilde p}$ are suitable real coefficients, independent of $T$. 
The formal low-$T$ expansion for 
\begin{equation}\label{eq:12}\langle a\rangle_{L;h}=\frac{\int\limits_{(\sqrt{\beta}\bbB^{N-2})^{\Lambda_L}}\!\!\! \!d{\tilde u}\, e^{\mathcal H_{L;h}(\tilde u)}\!\!\!\!
  \int\limits_{[-\sqrt{\beta}\pi,\sqrt{\beta}\pi]^{\Lambda_L}}\!\!\!\!d\phi \, \tilde A(\tilde u,\phi) e^{-\frac12\|\nabla\phi\|^2
  }e^{W_{L;h}(\tilde u,\phi)}}{\int\limits_{(\sqrt{\beta}\bbB^{N-2})^{\Lambda_L}}\!\!\! \!d{\tilde u}\, e^{\mathcal H_{L;h}(\tilde u)}\!\!\!\!
  \int\limits_{[-\sqrt{\beta}\pi,\sqrt{\beta}\pi]^{\Lambda_L}}\!\!\!\!d\phi \, e^{-\frac12\|\nabla\phi\|^2
  }e^{W_{L;h}(\tilde u,\phi)}}\end{equation}
is obtained by neglecting the constraints $|\tilde u_x|\le \sqrt{\beta}$, $|\phi_x|\le \pi\sqrt{\beta}$ in  the integration  domain (as we will show in the next section, this approximation 
produces an exponentially small error in  $\beta$, as $\beta\to\infty$), and by replacing \eqref{eq:8} and \eqref{exp2} in \eqref{eq:12}. In view of this, we write
\begin{equation}\label{eq:13}\langle a\rangle_{L;h}\stackrel{(\infty)}{=} \sum_{s\ge 0}T^s\sum_{p,\tilde p}a_{s}^{p,\tilde p}\frac{\int\!d{\tilde u}\, e^{\mathcal H_{L;h}(\tilde u)+
\mathcal V_{L;h}(\tilde u)}\, \tilde u^{\tilde p} \langle \phi^p\rangle_{L;h,\tilde u}}
{\int\!d{\tilde u}\, e^{\mathcal H_{L;h}(\tilde u)+
\mathcal V_{L;h}(\tilde u)}}
\end{equation}
where the integral over $\tilde u$ is performed over $(\mathbb R^{N-2})^{\Lambda_L}$, we defined
$\mathcal V_{L;h}(\tilde u)=\log$  $\int d\phi\,  e^{-\frac12\|\nabla\phi\|^2
+W_{L;h}(\tilde u,\phi)}$ (where the integral over $\phi$ is performed over $(\mathbb R^{\Lambda_L})^{\Lambda_L}$), and 
\begin{equation}
 \langle \phi^p\rangle_{L;h,\tilde u}=\int d\phi\, e^{-\frac12\|\nabla\phi\|^2
 +W_{L;h}(\tilde u,\phi)-\mathcal V_{L;h}(\tilde u)}\phi^p.
\end{equation}
By formally expanding in $W_{L;h}$, this expectation can be rewritten as
\begin{equation}  \langle \phi^p\rangle_{L;h,\tilde u}\stackrel{(\infty)}{=} \sum_{k\ge 0}\tfrac1{k!} \Phi_{L}(\phi^p;\underbrace{W_{L;h};\cdots;W_{L;h}}_{k\ \text{times}}), \end{equation}
where $\Phi_{L}(A_1;\cdots;A_k)$ denotes \textit{truncated} expectation with respect to the Gaussian law $\Phi_L$ with covariance $G_L$, 
that is, 
\begin{equation}\label{Urs1} \Phi_{L}(A_1;\cdots;A_k):=\partial_{\lambda_1}\cdots\partial_{\lambda_k}\log \Big(\Phi(
e^{\sum_{i=1}^k\lambda_i A_i})\Big)\Big|_{\lambda_0=\cdots=\lambda_k=0}\end{equation}
or, equivalently, 
\begin{equation}\label{Urs2}
\Phi_{L}(A_1;\cdots;A_k) = \sum_{\pi\in \calP_k} (|\pi|-1)! (-1)^{|\pi|-1} \prod_{Y\in \pi} \Phi_{L}(A_Y),
\end{equation} 
where \(\calP_k\equiv \calP([k])\) is the set of partitions of \([k]=\{1,\cdots, k\}\), and \(A_Y=\prod_{i\in Y} A_i\).
Inserting the formal low-$T$ expansion of $W_{L;h}$, \eqref{eq:8}, and taking $L\to\infty$, $h\searrow 0$, we find: 
\begin{equation}\label{eq:17}\begin{split} 
 \langle a\rangle \stackrel{(\infty)}{=} &\sum_{s_0\ge 0}T^{s_0}\sum_{p,\tilde p}a_{s_0}^{p,\tilde p}\,\Big[
   \Phi(\phi^p)\langle \tilde u^{\tilde p}\rangle+\sum_{s\ge 1}T^s \sum_{k\ge 1}\tfrac1{k!}\sum_{\substack{p_1,r_1,\ldots,p_k,r_k\ge 0:\\ p_l+r_l=s_l\ge 1\\
 s_1+\cdots+s_k=s}}\ \sum_{x_1,e_1,\ldots,x_k,e_k}\\
 &\hskip-.7truecm\cdot c_{r_1}\cdots c_{r_k} \Phi\big(\phi^p;(\nabla^{e_1}_{\!x_1}\phi)^{2r_1+2};\cdots;(\nabla^{e_k}_{\!x_k}\phi)^{2r_k+2}\big)
 \langle \tilde u^{\tilde p}\prod_{l=1}^k\mathcal G^{p_l}_{x_l,e_l}\rangle\Big],
 \end{split}
\end{equation}
where $\Phi$ is the law of the massless Gaussian Free Field on $\mathbb Z^d$,  the sum over $p,\tilde p$ can be freely restricted to even tuples (terms with $p$ or $\tilde p$ odd are zero by parity), and the sums over $x_1,\ldots,x_k$ are now over $\mathbb Z^d$.

\subsection{The finite order truncations of the low temperature expansion}

If we denote by $ \langle a\rangle^{(n)}$ the truncation at order $n$ of the formal Taylor series in $T$, \eqref{eq:17} implies that 
\begin{equation}\label{eq:18}\begin{split} 
& \langle a\rangle^{(n)}=\sum_{s_0=0}^n T^{s_0}\sum_{p,\tilde p}a_{s_0}^{p,\tilde p}\,\Big[
   \Phi(\phi^p)\langle \tilde u^{\tilde p}\rangle^{(n-s_0)}+\mathds 1_{n>s_0}\mathds 1_{\|p\|_1>0}\sum_{s=1}^{n-s_0}T^s \sum_{k\ge 1}\tfrac1{k!}\sum_{\substack{p_1,r_1,\ldots,p_k,r_k\ge 0:\\ p_l+r_l=s_l\ge 1\\
 s_1+\cdots+s_k=s}}\cdot\\ 
 &\qquad \cdot c_{r_1}\cdots c_{r_k}  \sum_{x_1,e_1,\ldots,x_k,e_k}\Phi\big(\phi^p;(\nabla^{e_1}_{\!x_1}\phi)^{2r_1+2};\cdots;(\nabla^{e_k}_{\!x_k}\phi)^{2r_k+2}\big)
 \langle \tilde u^{\tilde p}\prod_{l=1}^k\mathcal G^{p_l}_{x_l,e_l}\rangle^{(n-s-s_0)}\Big],
 \end{split}
\end{equation}
where, again, the sum over $p,\tilde p$ can be restricted to even  tuples. 
Now, the terms $\langle \tilde u^{\tilde p}\rangle^{(n-s_0)}$  in the right side
(and similarly $\langle \tilde u^{\tilde p}\prod_{l=1}^k\mathcal G^{p_l}_{x_l,e_l}\rangle^{(n-s-s_0)}$, which are linear combinations of terms of the same form, i.e., of $\langle \tilde u^{\tilde p'}\rangle^{(n-s-s_0)}$ for suitable $\tilde p'$), can be 
computed recursively, by using \eqref{eq:remaining_symmetry_u_theta} and \eqref{eq:18} itself for a finite number of times. In fact, let $m+\|\tilde p\|_1/2\ge 0$ be the \textit{order}
of $\langle \tilde u^{\tilde p}\rangle^{(m)}$, where $\tilde p$ is an even tuple with non-negative elements. 
If the order is zero, then $\langle 1\rangle^{(0)}=1$. More generally, if $\tilde p=0$, $\langle 1\rangle^{(m)}=1$, for all $m\ge 0$. Let us then consider a term 
$\langle \tilde u^{\tilde p}\rangle^{(m)}$ of positive order $k=m+\|\tilde p\|_1/2$, with $\|\tilde p\|_1$ positive. 
Let us assume without loss of generality that $\|\tilde p^1\|$ is even and positive, so that $\tilde p^{\ge 2}=(0,\tilde p^2,\ldots,\tilde p^{N-2})$ has norm $\|\tilde p^{\ge 2}\|_1\le \|\tilde p\|_1-2$. 
Using \eqref{eq:remaining_symmetry_u_theta}, and expanding $\prod_x\Big(\sqrt{\beta(1-T\|\tilde u_x\|^2)}\sin\big(\sqrt{T}\phi_x\big)\Big)^{\tilde p^1_x}$ in Taylor series in $T$ in the form 
\begin{equation}\label{expT}\begin{split} &\phi^{\tilde p^1}\sum_{s\ge 0}(-1)^sT^s\sum_{\substack{(s_x)_{x\in\mathbb Z^d}:\\ s_x\ge 0,\ \sum_xs_x=s}}\prod_{x\in\mathbb Z^d}\sum_{\substack{i_x,j_x\ge 0: \\ i_x+j_x=s_x}}\phi_x^{2i_x}\|\tilde u_x\|^{2j_x}\frac1{(2i_x+1)!}{1/2\choose j_x}\\
 \equiv\ & \phi^{\tilde p^1}\sum_{s\ge 0}T^s \sum_{n\in\mathfrak P}\sum_{\tilde n\in\tilde{\mathfrak P}} c^{n,\tilde n}_s \phi^n \tilde u^{\tilde n}\end{split}\end{equation}
 (note that $c^{n,\tilde n}_s$ is non zero only if $n,\tilde n$ are both even and $\|n\|_1+\|\tilde n\|_1=2s$), we find
\begin{equation} \langle \tilde u^{\tilde p}\rangle^{(m)}= \sum_{s=0}^m T^s \sum_{\substack{n\in\mathfrak P\\ \tilde n\in\tilde{\mathfrak P}}}c_s^{n,\tilde n}\langle \phi^{\tilde p^1+n}\tilde u^{\tilde p^{\ge 2}+\tilde n}\rangle^{(m-s)}.\end{equation}
By applying formula \eqref{eq:18} to the terms $\langle \phi^{\tilde p^1+n}\tilde u^{\tilde p^{\ge 2}+\tilde n}\rangle^{(m-s)}$ in the right side, we can re-express each of them as linear combinations of terms $\langle \tilde u^{\tilde p'}\rangle^{(m')}$, whose order, $k'=m'+\|\tilde p'\|_1/2$, is strictly smaller than the original order $k$. Therefore, by proceeding recursively, 
we can compute $\langle \tilde u^{\tilde p}\rangle^{(m)}$ explicitly. By plugging the result back into \eqref{eq:18} we obtain the (truncation at order $n$ of the) low temperature expansion 
of $\langle a\rangle$, for any observable $a$. 

\medskip

For later reference, we formalize the definition of $\lrangle{\ }^{(n)}$ as follows. Let $\mathcal A$ be the space of formal power series in $T$ of the form 
$\sum_{s\ge 0}T^s\sum_{p\in\mathfrak P}\sum_{\tilde p\in\tilde{\mathfrak P}}a^{p,\tilde p}_s\phi^p\tilde u^{\tilde p}$ with real coefficients $a^{p,\tilde p}_s$ independent of $T$. Let also $\mathcal A_0$ be the subspace of
$\mathcal A$ consisting of formal power series independent of $\tilde u,\phi$, of the form $\sum_{s}T^s a_s$, with $a_s\in\mathbb R$.

\begin{definition}\label{def:3.1} For any $n\ge 0$, \(\lrangle{ \ }^{(n)}\) is the operator acting on $\mathcal A$ that satisfies the following properties. 
	\begin{itemize}
	\item  \(\lrangle{1}^{(n)} = 1\).
	\item \(\lrangle{\ }^{(n)}\) is translation invariant, that is, for any $p\in\mathfrak P$, $\tilde p\in\tilde{\mathfrak P}$, $x_0\in\mathbb Z^d$, 
	$\lrangle{\phi^p\tilde u^{\tilde p}}^{(n)}=\lrangle{\phi^{\tau_{x_0}p}\tilde u^{\tau_{x_0}\tilde p}}^{(n)}$, where $(\tau_{x_0}p)_x=p_{x-x_0}$, and similarly for $\tau_{x_0}\tilde p$. 
	\item For any $Q,Q'\in\mathcal A$, \(\lrangle{ Q+Q' }^{(n)} = \lrangle{ Q }^{(n)} + \lrangle{ Q' }^{(n)}\).
	\item For any $\lambda\in\mathcal A_0$ of the form $\lambda=\sum_{s\ge 0}T^s \lambda_s$, and any $Q\in \mathcal A$, 	
	\(\lrangle{ \lambda Q}^{(n)} = \sum_{s=0}^nT^s \lambda_s \lrangle{  Q}^{(n-s)} \). 
	\item For any $p\in\mathfrak P$ and any $\tilde p\in\tilde{\mathfrak P}$, letting $\mathcal F=\mathcal F(\tilde u^1,\ldots,\tilde u^{N-2})=\tilde u^{\tilde p}$, 
	\begin{equation}\label{eq:symmdef}\lrangle{ \phi^p \calF(\tilde{u}^1,\cdots,\tilde{u}^{N-2}) }^{(n)} = \lrangle{ \phi^p \calF(\tilde{u}^{\pi(1)},\cdots,\tilde{u}^{\pi(N-2)}) }^{(n)}
	\end{equation}
	for any permutation \(\pi\) of \(\{1,\cdots, N-2\}\).
	\item For any $\tilde p\in\tilde{\mathfrak P}$, if $\tilde p^1=(\tilde p^1_x)_{x\in\mathbb Z^d}\in \mathfrak P$ is not zero, letting $\tilde p^{\ge 2}=(0,\tilde p^2,\ldots,\tilde p^{N-2})$, 
	\begin{equation}\label{eq:symm2def}
		\lrangle{ \tilde u^{\tilde p}}^{(n)} = \lrangle{\phi^{\tilde p^1} \tilde{F}  \tilde u^{\tilde p^{\ge 2}} }^{(n)}
	\end{equation}where \(\tilde{F}=\sum_{s\ge 0}T^s\sum_{n,\tilde n}c^{n,\tilde n}_s\phi^n\tilde u^{\tilde n}\in\mathcal A$ (the coefficients $c^{n,\tilde n}_s$ were defined 
	in \eqref{expT}).
	\item For any even $p\in\mathfrak P$, $\tilde p\in\tilde{\mathfrak P}$, the following ``extraction'' formula holds:
	\begin{equation} \label{eq:phi_contrib_extraction}\begin{split}
		\lrangle{\phi^p\tilde u^{\tilde p}}^{(n)} &= \Phi(\phi^p)\lrangle{\tilde u^{\tilde p}}^{(n)}  + \mathds{1}_{n\geq 1} \sum_{k=1}^{n}\frac{1}{k!} \sum_{s=k}^n T^{s}  \sum_{\substack{s_1,\cdots,s_k \geq 1\\ \sum s_l = s}}\sum_{\substack{x_1,\cdots, x_k\\ e_1,\cdots,e_k}}\sum_{\substack{r_1,p'_1,\cdots, r_k,p'_k\ge 0\\ p'_l+r_l = s_l} }\times \\ & \times  \big(\prod_{l=1}^k c_{r_l}\big) \Phi\big( \phi^p ; (\nabla_{x_1}^{e_1}\phi)^{2r_1+2};\cdots; (\nabla_{x_k}^{e_k}\phi)^{2r_k+2}\big)\lrangle{\tilde u^{\tilde p} \prod_{l=1}^{k}\calG_{x_l,e_l}^{p_l'}}^{(n-s)},
	\end{split}\end{equation}
where $c_r$ and $\mathcal G^{p}_{x,e}$ were defined right after \eqref{eq:8}. 
\end{itemize}
\end{definition}

Note that the presence of the {\it connected} correlation $\Phi\big( \phi^p ; (\nabla_{x_1}^{e_1}\phi)^{2r_1+2};\cdots; (\nabla_{x_k}^{e_k}\phi)^{2r_k+2}\big)$ 
ensures absolute summability of the second line of 
\eqref{eq:phi_contrib_extraction}, see Lemma \ref{app:lem:sum_of_connected_corr}, thus making \(\lrangle{\ }^{(n)}\) a well defined object. More precisely, Lemma \ref{app:lem:sum_of_connected_corr} implies the following quantitative bounds, which will be used in the inductive proof discussed in Section \ref{sec:Inductive_exp}. 

\begin{lemma}
	\label{lem:Taylor_n_a_priori_decay_no_grad}
	Let \(K>0, n\geq 0\) be integers. Let \(\epsilon>0\). Let \(p,p':\Zd\to \Z_+\) be odd, \(\tilde{p}: \Zd \times[N-2]\to \Z_+\) be even, and such that \(\normI{p}+\normI{p'}+\normI{\tilde{p}} \leq 2K\), then,
	\begin{equation*}
		|\lrangle{\phi^p \phi^{p'} \tilde{u}^{\tilde{p}}}^{(n)}| \leq C \sum_{x\in \supp_p}\sum_{y\in\supp_{p'}} \frac{1}{(1+|x-y|)^{d-2-\epsilon}},
	\end{equation*}where \(C =C(K,n,\epsilon)\).
\end{lemma}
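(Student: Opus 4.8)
The plan is to apply the recursive extraction formula \eqref{eq:phi_contrib_extraction} once, to the even tuple $P:=p+p'$ (since $p,p'$ are odd, $P$ is even and $\phi^p\phi^{p'}=\phi^P$), and to argue that the whole $\supp_p$--$\supp_{p'}$ decay is generated by the Gaussian $\phi$-factors, the $\tilde u$-correlations entering merely as bounded prefactors. Formula \eqref{eq:phi_contrib_extraction} expresses $\lrangle{\phi^P\tilde u^{\tilde p}}^{(n)}$ as a disconnected term $\Phi(\phi^P)\lrangle{\tilde u^{\tilde p}}^{(n)}$ plus a sum over $k\ge 1$ of connected terms $\sum_{x_1,e_1,\ldots,x_k,e_k}\big(\prod_l c_{r_l}\big)\Phi\big(\phi^P;(\nabla^{e_1}_{x_1}\phi)^{2r_1+2};\cdots;(\nabla^{e_k}_{x_k}\phi)^{2r_k+2}\big)\lrangle{\tilde u^{\tilde p}\prod_l\mathcal{G}^{p_l'}_{x_l,e_l}}^{(n-s)}$, the number of distinct ``shapes'' being bounded by $C(K,n)$. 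I would first record the companion a priori bound $|\lrangle{\tilde u^{\tilde p'}}^{(j)}|\le C(K,j)$, uniform in all spatial arguments (in particular in the $x_l$ inside $\mathcal{G}^{p_l'}_{x_l,e_l}$), obtained by a separate and simpler induction on the order $j$ via \eqref{eq:symm2def}, \eqref{eq:phi_contrib_extraction} and the summability of Lemma \ref{app:lem:sum_of_connected_corr}; granting it, each $\tilde u$-factor above is replaced by a constant.

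For the disconnected term $\Phi(\phi^P)$ I would use Wick's theorem, writing it as a finite sum over pairings of the $\normI p+\normI{p'}$ fields, each contributing a product of factors $G(z,z')\le G(0,0)<\infty$. The key is a parity (handshake) observation: since $\normI p$ and $\normI{p'}$ are odd, every pairing contains an odd --- hence positive --- number of ``crossing'' pairs joining a $\supp_p$-field to a $\supp_{p'}$-field. Bounding one such crossing factor by $C(1+|x-y|)^{-(d-2)}\le C(1+|x-y|)^{-(d-2-\epsilon)}$ with $x\in\supp_p$, $y\in\supp_{p'}$, and all remaining factors by $G(0,0)$, then summing over the finitely many pairings, gives the claimed bound for this term (in fact with the stronger exponent $d-2$).

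The connected terms are the heart of the proof. The same parity mechanism, now at the level of the connected contraction graph, is what I would establish first. Collapse the $p$-fields to a node $\bullet_p$, the $p'$-fields to a node $\bullet_{p'}$, and each gradient block $(\nabla^{e_l}_{x_l}\phi)^{2r_l+2}$ to a node; draw an edge for each Wick pair joining two distinct nodes. Then $\bullet_p$ and $\bullet_{p'}$ have odd degree (because $\normI p,\normI{p'}$ are odd), whereas every gradient block has $2r_l+2$ legs and hence even degree. Applying the handshake lemma to the connected component of $\bullet_p$, an odd-degree vertex cannot sit alone, so $\bullet_{p'}$ lies in the same component: there is a path $\bullet_p\to\bullet_{p'}$ through the gradient blocks. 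Along it, each external--internal contraction yields a factor $\nabla G\sim|x-y|^{-(d-1)}$ and each internal--internal contraction a factor $\nabla\nabla G\sim|x-y|^{-d}$; summing the path's $m$ internal coordinates is a chain of lattice Green's-function convolutions whose exponents telescope, $2(d-1)+(m-1)d-md=d-2$, to decay $|x-y|^{-(d-2)}$ between the endpoints, independently of $m$ (and consistently with the direct $p$-$p'$ pairing). This is precisely the content of Lemma \ref{app:lem:sum_of_connected_corr}, which in addition absorbs the free summation over the off-path vertices and the within-block contractions; the replacement of $d-2$ by $d-2-\epsilon$ accounts for the borderline logarithmic corrections in these lattice convolutions. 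Summing over the $C(K,n)$ shapes and over $x\in\supp_p$, $y\in\supp_{p'}$ gives the asserted bound.

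I expect the main obstacle to be exactly this connected-term analysis: to simultaneously (i) exhibit inside each connected contraction a decaying chain running from $\supp_p$ to $\supp_{p'}$ --- which is where the handshake/parity argument on the reduced graph is indispensable and constitutes the genuinely new conceptual point relative to the disconnected case --- and (ii) perform the summation over all internal coordinates losing only the arbitrarily small exponent $\epsilon$. Both are designed to be reduced to Lemma \ref{app:lem:sum_of_connected_corr}; once that lemma is in hand, the delicate part is purely the combinatorial verification that the oddness of $p$ and $p'$ survives the collapse to the reduced contraction graph, the convolution estimates themselves being standard.
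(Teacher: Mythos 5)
Your proposal takes essentially the same route as the paper: the paper also first establishes the uniform bound $|\lrangle{\tilde u^{\tilde p}}^{(n)}|\le C$ (its eq.~\eqref{boundtildeu}) by an induction on the order using \eqref{eq:symmdef}, \eqref{eq:symm2def}, \eqref{eq:phi_contrib_extraction} and eq.~\eqref{eq:71}, and then deduces the lemma by applying the extraction formula \eqref{eq:phi_contrib_extraction} and bounding the Gaussian connected correlations, summed over the internal coordinates, via eq.~\eqref{eq:71} of Lemma~\ref{app:lem:sum_of_connected_corr}. Your handshake/path analysis of the connected contraction graphs is not an extra step you need to supply here: it is precisely the argument by which the paper proves that appendix lemma, which both you and the paper simply invoke.
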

\begin{lemma}
	\label{lem:Taylor_n_a_priori_decay_grad}
	Let \(K>0, n\geq 0\) be integers. Let \(\epsilon>0\). Let \(p:\Zd\to \Z_+\) be odd, \(q:\bbE_d\to \Z_+\) be odd, \(\tilde{p}: \Zd \times[N-2]\to \Z_+\) be even, and such that \(\normI{p}+ \normI{q}+\normI{\tilde{p}} \leq 2K\), then,
	\begin{equation*}
		|\lrangle{\phi^p (\nabla\phi)^{q} \tilde{u}^{\tilde{p}}}^{(n)}| \leq C \sum_{x\in \supp_p}\sum_{y\in\supp_{q}} \frac{1}{(1+|x-y|)^{d-1-\epsilon}},
	\end{equation*}where \(C =C(K,n,\epsilon)\).
\end{lemma}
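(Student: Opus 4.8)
The plan is to prove the estimate by a single induction on the \emph{order} $k=n+\tfrac12(\normI{p}+\normI{q}+\normI{\tilde p})$, extending the notion introduced after \eqref{eq:18}, run jointly with the companion bound of Lemma~\ref{lem:Taylor_n_a_priori_decay_no_grad} and with the boundedness statement $|\lrangle{\phi^P\tilde u^{\tilde p}}^{(n)}|\le C(K,n)$ for even $P$ (including $P=0$) with $\normI{P}+\normI{\tilde p}\le 2K$. These three families of moments are coupled by the recursion of Definition~\ref{def:3.1}, so I would treat them together. The engine is the extraction formula \eqref{eq:phi_contrib_extraction}, which I first note applies verbatim to the observable $\phi^p(\nabla\phi)^q$: since $\normI{p}$ and $\normI{q}$ are both odd, $\phi^p(\nabla\phi)^q$ expands into a finite $\pm1$–combination of \emph{even} monomials $\phi^{p+P_\alpha}$ with each $P_\alpha$ supported in $\supp_q$, and by multilinearity of the truncated Gaussian correlations one reassembles $\sum_\alpha(\pm)\,\Phi(\phi^{p+P_\alpha};\cdots)=\Phi(\phi^p(\nabla\phi)^q;\cdots)$; thus \eqref{eq:phi_contrib_extraction} holds with $\phi^p(\nabla\phi)^q$ in place of $\phi^p$ throughout.

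The right–hand side then splits into a leading term $\Phi(\phi^p(\nabla\phi)^q)\,\lrangle{\tilde u^{\tilde p}}^{(n)}$ and a remainder indexed by $k\ge1$. For the leading term I would bound the factor $\lrangle{\tilde u^{\tilde p}}^{(n)}$ by the boundedness statement, which applies since this object has strictly smaller order; and the Gaussian factor by Wick's theorem: because $\normI{p}$ is odd the plain–$\phi$ insertions on $\supp_p$ cannot pair entirely among themselves, so every contraction carries at least one factor $\nabla^{e}_{y}G(x,\cdot)=O\big((1+|x-y|)^{-(d-1)}\big)$ with $x\in\supp_p$, $y\in\supp_q$, the remaining propagators being uniformly bounded; summing over the finitely many contractions gives the claimed bound with exponent $d-1$ (hence a fortiori $d-1-\epsilon$). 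For the remainder, each term carries a factor $\lrangle{\tilde u^{\tilde p}\prod_l\calG^{p'_l}_{x_l,e_l}}^{(n-s)}$, an even $\tilde u$–moment of order $<k$, hence $\le C(K,n)$ uniformly in the internal coordinates $x_1,\dots,x_k$, together with a connected correlation $\Phi\big(\phi^p(\nabla\phi)^q;(\nabla^{e_1}_{x_1}\phi)^{2r_1+2};\cdots;(\nabla^{e_k}_{x_k}\phi)^{2r_k+2}\big)$; summing the product over $x_1,e_1,\dots,x_k,e_k$ is exactly the situation controlled by Lemma~\ref{app:lem:sum_of_connected_corr}, which returns $C\sum_{x\in\supp_p,\,y\in\supp_q}(1+|x-y|)^{-(d-1-\epsilon)}$.

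The boundedness family is itself the base type of this recursion: using \eqref{eq:symm2def} one rewrites $\lrangle{\tilde u^{\tilde p}}^{(n)}$ by trading a nonzero $\tilde u^1$–block for $\phi$–powers, and then applies the extraction formula, every resulting term having strictly smaller order, so the process terminates as recorded after \eqref{eq:18}; the base case $k=0$ is the pure Gaussian estimate above, with no internal sums. I expect the genuine difficulty to be concentrated in the summation over the internal vertices $x_1,\dots,x_k$, that is, in Lemma~\ref{app:lem:sum_of_connected_corr}: in $d=3$ the individual propagators decay like $(1+|z|)^{-(d-2)}=(1+|z|)^{-1}$, only marginally non–summable, so one must exploit the tree structure of the connected correlation to see that precisely one slow factor survives linking $\supp_p$ to $\supp_q$ while the rest is summed, at the cost of the harmless loss $\epsilon$. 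The role of the gradient on the external observable is exactly to upgrade this surviving factor from the exponent $d-2$ of Lemma~\ref{lem:Taylor_n_a_priori_decay_no_grad} to $d-1$, because any contraction path reaching the gradient endpoints anchored on $\supp_q$ necessarily carries a $\nabla G$; the delicate bookkeeping is to verify that this one extra power persists through the whole recursion rather than being dissipated into the internal sums.
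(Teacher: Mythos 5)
Your proposal is correct and takes essentially the same route as the paper: the paper also first establishes the uniform bound $|\lrangle{\tilde u^{\tilde p}}^{(n)}|\le C$ by induction on the order $n+\normI{\tilde p}/2$ (using \eqref{eq:symmdef}, \eqref{eq:symm2def} and the extraction formula \eqref{eq:phi_contrib_extraction}), and then obtains the stated decay by applying \eqref{eq:phi_contrib_extraction} to $\phi^p(\nabla\phi)^q\tilde u^{\tilde p}$ and controlling the Gaussian factors and the internal sums with \eqref{eq:72} of Lemma~\ref{app:lem:sum_of_connected_corr}. Your joint-induction packaging, the explicit multilinearity argument extending the extraction formula to gradient observables, and the Wick-theorem treatment of the leading term are precisely the ``details left to the reader'' in the paper's proof.
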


These two lemmas follows straightforwardly from Lemma~\ref{app:lem:sum_of_connected_corr} and the definition of \(\lrangle{\ }^{(n)}\). 
In fact, notice first of all that eq.\eqref{eq:71} of Lemma~\ref{app:lem:sum_of_connected_corr} implies that, for any even $\tilde p\in\tilde {\mathfrak P}$, 
\begin{equation}\label{boundtildeu}|\lrangle{\tilde u^{\tilde p}}^{(n)}|\le C,\end{equation} with $C=C(\|\tilde p\|_1,n)$: this can be easily proved by induction in $n+\|\tilde p\|_1/2$, proceeding as described after 
\eqref{eq:18}. More in detail: consider a non zero even $\tilde p$ and $n\ge 0$; assume inductively that $|\lrangle{\tilde u^{\tilde p'}}^{(n')}|\le C$ is known for all
$\tilde p',n'$ such that $n'+\|\tilde p'\|_1/2<n+\|\tilde p\|_1/2$; use \eqref{eq:symmdef}  to reduce to the case of $\tilde p^1$ non zero; apply \eqref{eq:symm2def}, then use \eqref{eq:phi_contrib_extraction}; write $\tilde p^1=p+p'$ with $p,p'$ odd, then 
use \eqref{eq:71} and the inductive hypothesis to conclude the proof of the induction step. At this point, the two lemmas follow immediately from 
 \eqref{eq:phi_contrib_extraction}, the use of \eqref{eq:71} and \eqref{eq:72}, respectively, and of \eqref{boundtildeu} (details left to the reader).
 
\section{Applications of the infrared bound}
\label{sec:infraref_bnd}

In this section we describe an important application of the infrared bound summarized in Proposition \ref{thm:Infrared_infinite_volume_magn_bound} above. In particular, 
generalizing a method in \cite{Bricmont+Fontaine+Lebowitz+Lieb+Spencer-1980}, we derive a priori bounds on the moments of the spin correlation functions, in the $(u,\theta)$, or equivalently $(\tilde u,\phi)$, coordinates. These will be used in the following sections in order to control the remainder of the low temperature expansion, as produced by systematic
integration by parts.

\subsection{Large deviation estimates on the spin correlations.}

\begin{lemma}
	\label{lem:moment_bnd_spins}
For any $\beta\ge 0$, $h\in D^*$ and $a>0$, if \(k\in\{1,\cdots,N-1\}\),
	\begin{equation}
	\label{eq:Moment_bound_spin_version_transverse}
	\lrangle{e^{a\sqrt{\beta}|S_0^k|}}_h\leq 2e^{a^2G_{00}/2}.
	\end{equation}
	Moreover,
	\begin{equation}
	\label{eq:Moment_bound_spin_version_longi}
	\lrangle{e^{a\sqrt{\beta}(1-S_0^N)}}_h \leq e^{(a\sqrt{T}N+a^2/2)G_{00}}.
	\end{equation}
\end{lemma}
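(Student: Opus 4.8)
The plan is to derive both bounds as consequences of the infrared bound \eqref{eq:infrared_infin_vol_centered_spins}, using it with a carefully chosen test function $f$ supported at a single site. The key observation is that the infrared bound controls exponential moments of linear combinations $\sum_x (S_x-\lrangle{S_x}_h)\cdot f(x)$, and by concentrating $f$ at $x=0$ along a single coordinate direction $\hat\rme_k\in\mathbb R^N$ we obtain exponential moment bounds on the individual centered components $S_0^k-\lrangle{S_0^k}_h$. Since the scalar Green's function evaluated against such a one-point $f$ produces $G(0,0)=G_{00}$, the right-hand side will naturally take the form $e^{(\text{const})\,a^2 G_{00}}$ once we match $f$ to a prefactor of order $\sqrt\beta$.

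For the transverse bound \eqref{eq:Moment_bound_spin_version_transverse}, take $k\in\{1,\dots,N-1\}$ and set $f(x)=a\sqrt\beta\,\delta_{x,0}\,\hat\rme_k$ (with $\hat\rme_k$ the $k$-th basis vector of $\mathbb R^N$). Then \eqref{eq:infrared_infin_vol_centered_spins} gives $\lrangle{e^{a\sqrt\beta (S_0^k-\lrangle{S_0^k}_h)}}_h\le e^{\frac{1}{2\beta}a^2\beta G_{00}}=e^{a^2G_{00}/2}$, and the same bound holds with $f\to -f$. By $O(N-1)$ symmetry of the measure in the hyperplane orthogonal to the field (the transverse directions $k\le N-1$), we have $\lrangle{S_0^k}_h=0$ for $k\le N-1$; this removes the centering. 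Finally I would pass from the two one-sided exponential bounds to a bound on $\lrangle{e^{a\sqrt\beta|S_0^k|}}_h$ via the elementary inequality $e^{a\sqrt\beta|S_0^k|}\le e^{a\sqrt\beta S_0^k}+e^{-a\sqrt\beta S_0^k}$, which yields the factor $2$.

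For the longitudinal bound \eqref{eq:Moment_bound_spin_version_longi}, I would again take $f(x)=-a\sqrt\beta\,\delta_{x,0}\,\hat\rme_N$ pointing along the field direction, so that $\sum_x(S_x-\lrangle{S_x}_h)\cdot f(x)=-a\sqrt\beta(S_0^N-\lrangle{S_0^N}_h)$; this is the natural sign since $1-S_0^N\ge 0$. The infrared bound then controls $\lrangle{e^{-a\sqrt\beta(S_0^N-\lrangle{S_0^N}_h)}}_h$ by $e^{a^2G_{00}/2}$. To reach the stated form I must convert $\lrangle{S_0^N}_h$ into the constant $1$ appearing in $1-S_0^N$. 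Writing $-a\sqrt\beta(S_0^N-\lrangle{S_0^N}_h)=a\sqrt\beta(1-S_0^N)-a\sqrt\beta(1-\lrangle{S_0^N}_h)$ and using $1-\lrangle{S_0^N}_h\le N T G_{00}/\cdots$—more precisely, the magnetization bound \eqref{eq:infrared_magnet}, $\lrangle{S_0}_h\cdot\lrangle{S_0}_h\ge 1-NTG_{00}$, which combined with $|\lrangle{S_0^N}_h|\le 1$ gives $1-\lrangle{S_0^N}_h\le NTG_{00}$ after a short estimate—produces the extra factor $e^{a\sqrt\beta\cdot NTG_{00}}=e^{a\sqrt T N G_{00}}$, matching the exponent $(a\sqrt T N+a^2/2)G_{00}$.

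The main obstacle, and the only place requiring genuine care, is the last step of the longitudinal estimate: controlling the centering correction $1-\lrangle{S_0^N}_h$ and showing it contributes exactly the linear-in-$a$ term $a\sqrt T N G_{00}$ in the exponent. One must verify that $1-\lrangle{S_0^N}_h\le NTG_{00}$ follows cleanly from \eqref{eq:infrared_magnet} together with $\lrangle{S_0^N}_h\in[-1,1]$ and the transverse vanishing $\lrangle{S_0^k}_h=0$ (so that $\lrangle{S_0}_h\cdot\lrangle{S_0}_h=(\lrangle{S_0^N}_h)^2$, whence $(\lrangle{S_0^N}_h)^2\ge 1-NTG_{00}$ and $1-\lrangle{S_0^N}_h\le 1-(\lrangle{S_0^N}_h)^2\le NTG_{00}$). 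The transverse bound, by contrast, is essentially immediate once the symmetry-based vanishing of $\lrangle{S_0^k}_h$ is invoked.
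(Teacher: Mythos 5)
Your overall strategy coincides with the paper's: apply the infrared bound \eqref{eq:infrared_infin_vol_centered_spins} with $f$ equal to $\pm a\sqrt{\beta}$ times a delta function at the origin along a coordinate direction of $\R^N$, use the residual $O(N-1)$ symmetry to kill the centering in the transverse case, and use the magnetization lower bound \eqref{eq:infrared_magnet} to trade $\lrangle{S_0^N}_h$ for $1$ in the longitudinal case. The transverse part of your argument is complete and correct (your pointwise inequality $e^{a\sqrt{\beta}|S_0^k|}\le e^{a\sqrt{\beta}S_0^k}+e^{-a\sqrt{\beta}S_0^k}$ is a harmless variant of the paper's indicator splitting and yields the same factor $2$).

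There is, however, a genuine gap exactly at the step you flagged as delicate. You claim that $1-\lrangle{S_0^N}_h\le NTG_{00}$ follows from $(\lrangle{S_0^N}_h)^2\ge 1-NTG_{00}$ together with $\lrangle{S_0^N}_h\in[-1,1]$, via the chain $1-\lrangle{S_0^N}_h\le 1-(\lrangle{S_0^N}_h)^2\le NTG_{00}$. The first inequality in this chain is equivalent to $(\lrangle{S_0^N}_h)^2\le \lrangle{S_0^N}_h$, which holds if and only if $\lrangle{S_0^N}_h\in[0,1]$; it fails for negative magnetization (e.g.\ $\lrangle{S_0^N}_h=-1$ satisfies both of your stated hypotheses, yet $1-\lrangle{S_0^N}_h=2$). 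The point is that \eqref{eq:infrared_magnet} only controls $|\lrangle{S_0^N}_h|$, so without knowing the sign you cannot conclude; this is precisely why the paper's proof invokes, as an additional ingredient, the fact that $\lrangle{S_0^N}_h\ge 0$. That fact is true but requires justification: for instance, in finite volume $\lrangle{S_0^N}_{L;h}=|\Lambda_L|^{-1}\partial_h\log Z_{L;\beta,h}$ is the derivative at $h>0$ of a function of $h$ which is convex and even under $h\mapsto-h$ (by the symmetry $S^N\to -S^N$, which preserves $S_x\cdot S_y$), hence non-negative, and this non-negativity passes to the cluster point $\mu_h$; equivalently, $\lrangle{S_0^N}_h=\rmJ_{h\hat\rme_N}(\hat\rme_N)$ is the directional derivative of the even convex pressure $\psi$, as in Appendix \ref{prova:IB}. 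Once you add $\lrangle{S_0^N}_h\ge 0$ to your list of ingredients, your chain of inequalities becomes valid and your proof agrees with the paper's.
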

\begin{proof}
	By the infrared bound~\eqref{eq:infrared_infin_vol_centered_spins} applied to the function \(f(x) = -a\sqrt{\beta}\delta_{0,x} \rme_k\), one obtains
	\begin{equation}\label{eq:senza}
	\lrangle{e^{-a\sqrt{\beta} (S_0^k-\lrangle{S_0^k}_h)}}_h \leq e^{a^2G_{00}/2}.
	\end{equation}Now, for \(k=1,\cdots,N-1\), by symmetry \(S_0^k\) has the same law as \(-S_0^{k}\). In particular, \(\lrangle{S_0^k}_h =0\) and
	\begin{equation*}
	\lrangle{e^{a\sqrt{\beta}|S_0^k|}}_h = \lrangle{e^{a\sqrt{\beta}S_0^k} \mathds{1}_{S_0^k\geq 0} }_h+ \lrangle{e^{-a\sqrt{\beta}S_0^k} \mathds{1}_{S_0^k< 0} }_h
	\leq 2\lrangle{e^{-a\sqrt{\beta}S_0^k} }_h.
	\end{equation*}Thus~\eqref{eq:Moment_bound_spin_version_transverse} holds for \(k=1,\cdots,N-1\). Notice that this bound holds for any value of \(T\).
	For \(k=N\), the lower bound~\eqref{eq:infrared_magnet} on \(\lrangle{S_0^N}_h\), and \(\lrangle{S_0^N}_h\geq 0\) implies that
	\begin{equation*}
		\lrangle{S_0^N}_h \geq  \sqrt{\max(1 - TN G_{00}, 0)} \geq \max(1 - TN G_{00}, 0).
	\end{equation*}
	Plugging this bound in \eqref{eq:senza} with $k=N$, we find
	\begin{equation*}
		e^{a^2G_{00}/2}\geq e^{a\sqrt{\beta}\lrangle{S_0^N}_h}\lrangle{e^{-a\sqrt{\beta}S_0^N}}_h \geq e^{-a\sqrt{T} N G_{00}}\lrangle{e^{a\sqrt{\beta}(1-S_0^N)}}_h
	\end{equation*}so that
	\begin{equation*}
	\lrangle{e^{a\sqrt{\beta}(1-S_0^N)}}_h \leq e^{a^2G_{00}/2} e^{a\sqrt{T} N G_{00} } \leq e^{(a\sqrt{T}N+a^2/2)G_{00}},
	\end{equation*} proving~\eqref{eq:Moment_bound_spin_version_longi}.
\end{proof}

A direct consequence is
\begin{corollary}
Let $\beta\ge 0$, $h\in D^*$ and $b>0$. For any $1\le k_0\le N-1$, 
	\begin{equation}
	\label{eq:sign_Sk_non_neg}
		\mu_h\big(\sum_{k=1}^{k_0}|S_0^k|^2 \geq b^2\big) \leq 2 e^{- b^2\beta/(2k_0^2G_{00})}.
	\end{equation}
If, additionally, $T\le b/(2G_{00}N)$, then 
	\begin{equation}
	\label{eq:sign_SN_non_neg}
		\mu_h(S_0^N \leq 1-b) \leq e^{- b^2\beta/(8 G_{00})}.
	\end{equation}
\end{corollary}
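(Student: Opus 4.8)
The plan is to obtain both estimates by a standard exponential Chebyshev (Chernoff) argument, feeding in the two exponential moment bounds of Lemma~\ref{lem:moment_bnd_spins} and optimizing over the free parameter $a>0$.

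For \eqref{eq:sign_Sk_non_neg} I would first replace the $\ell^2$-type event by an $\ell^1$-type one, using the elementary norm inequality $\sqrt{\sum_{k=1}^{k_0}|S_0^k|^2}\le \sum_{k=1}^{k_0}|S_0^k|$, so that $\{\sum_{k=1}^{k_0}|S_0^k|^2\ge b^2\}\subseteq\{\sum_{k=1}^{k_0}|S_0^k|\ge b\}$. Markov's inequality then gives, for any $a>0$,
\[
\mu_h\Big(\sum_{k=1}^{k_0}|S_0^k|^2\ge b^2\Big)\le e^{-a\sqrt{\beta}\,b}\,\lrangle{e^{a\sqrt{\beta}\sum_{k=1}^{k_0}|S_0^k|}}_h .
\]
Since the transverse components $S_0^1,\dots,S_0^{k_0}$ are \emph{not} independent, I would decouple the joint exponential moment by H\"older's inequality with $k_0$ equal exponents, $\lrangle{e^{a\sqrt{\beta}\sum_{k}|S_0^k|}}_h\le\prod_{k=1}^{k_0}\lrangle{e^{ak_0\sqrt{\beta}|S_0^k|}}_h^{1/k_0}$, and apply \eqref{eq:Moment_bound_spin_version_transverse} to each factor with $a$ replaced by $ak_0$ (legitimate since $k_0\le N-1$). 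This produces $\lrangle{e^{a\sqrt{\beta}\sum_{k}|S_0^k|}}_h\le 2\,e^{a^2k_0^2 G_{00}/2}$, and minimizing the exponent $a^2k_0^2 G_{00}/2-a\sqrt{\beta}\,b$ at $a=\sqrt{\beta}\,b/(k_0^2 G_{00})$ yields exactly $2\,e^{-b^2\beta/(2k_0^2 G_{00})}$.

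For \eqref{eq:sign_SN_non_neg}, the event $\{S_0^N\le 1-b\}=\{1-S_0^N\ge b\}$ calls directly for the longitudinal estimate. By Markov and \eqref{eq:Moment_bound_spin_version_longi},
\[
\mu_h(S_0^N\le 1-b)\le e^{-a\sqrt{\beta}\,b}\,\lrangle{e^{a\sqrt{\beta}(1-S_0^N)}}_h\le \exp\!\Big(\tfrac{a^2}{2}G_{00}-a\sqrt{\beta}\,b+a\sqrt{T}\,N G_{00}\Big).
\]
The hypothesis $T\le b/(2G_{00}N)$ enters here: writing $a\sqrt{T}\,N G_{00}=a\sqrt{\beta}\,(T N G_{00})$ and using $T N G_{00}\le b/2$, the part linear in $a$ is bounded by $-a\sqrt{\beta}\,b/2$. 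Minimizing $a^2 G_{00}/2-a\sqrt{\beta}\,b/2$ at $a=\sqrt{\beta}\,b/(2G_{00})$ then gives $e^{-b^2\beta/(8 G_{00})}$.

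Both arguments are routine Chernoff bounds; the only two points requiring a little care are the H\"older decoupling in the first bound — needed precisely because the transverse spin components are correlated, and responsible for the factor $k_0^2$ (rather than $k_0$) in the exponent — and the use of the smallness condition on $T$ in the second bound to absorb the cross term $a\sqrt{T}\,N G_{00}$ coming from the non-centered nature of the longitudinal moment estimate.
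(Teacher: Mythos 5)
Your proposal is correct and follows essentially the same route as the paper: reduce the $\ell^2$ event to the $\ell^1$ event, apply exponential Chebyshev, decouple the correlated transverse components via H\"older with $k_0$ equal exponents, invoke the two moment bounds \eqref{eq:Moment_bound_spin_version_transverse}--\eqref{eq:Moment_bound_spin_version_longi}, and optimize in $a$. The only cosmetic difference is in the longitudinal bound, where you absorb the cross term $a\sqrt{T}NG_{00}$ using $T\le b/(2G_{00}N)$ before optimizing (taking $a=\sqrt{\beta}b/(2G_{00})$), whereas the paper picks $a=b\sqrt{\beta}/G_{00}-\sqrt{T}N$ and uses the smallness of $T$ afterwards; the two computations are equivalent.
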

\begin{proof}
	Using the exponential version of Chebyshev's inequality, H\"older's inequality, and 
	\eqref{eq:Moment_bound_spin_version_transverse}, we find that, for any $a>0$, 
\begin{equation*}\begin{split}
		\mu_h(\sum_{k=1}^{k_0}|S_0^k|^2 \geq b^2) & \leq \mu_h(\sum_{k=1}^{k_0}|S_0^k| \geq b) \\
		&\leq e^{-ab\sqrt{\beta}} \lrangle{e^{a\sqrt{\beta}\sum_{k=1}^{k_0}|S_0^k|}}_h\\
		&\le e^{-ab\sqrt{\beta}}\prod_{k=1}^{k_0} \lrangle{e^{ak_0\sqrt{\beta}|S_0^k|}}^{1/k_0}_h\\
                 &\leq 2 e^{-ab\sqrt{\beta}}e^{a^2k_0^2G_{00}/2}.\end{split}
	\end{equation*}
Choosing $a=b\sqrt{\beta}/(k_0^2G_{00})$ implies \eqref{eq:sign_Sk_non_neg}. Next, using 
\eqref{eq:Moment_bound_spin_version_longi} and, again, the exponential version of Chebyshev's inequality, for any $a>0$,
	\begin{equation*}
		\mu_h (1-S_0^N \geq b) \leq e^{-ab\sqrt{\beta}} \lrangle{e^{a\sqrt{\beta}(1-S_0^N)}}_h\leq e^{-ab\sqrt{\beta}}e^{(a\sqrt{T}N+a^2/2) G_{00}}.
	\end{equation*}
Choosing $a=b\sqrt{\beta}/G_{00}-\sqrt{T} N$ and $T\le b/(2G_{00}N)$ implies 	\eqref{eq:sign_SN_non_neg}. 
\end{proof}

\subsection{Bounds on the moments of the \((u,\theta)\) coordinates}

The next Theorem is the main objective of this section. Its proof is a generalization of the one of the corresponding result in~\cite{Bricmont+Fontaine+Lebowitz+Lieb+Spencer-1980}, 
namely \cite[Eq.(10) and Lemma 2]{Bricmont+Fontaine+Lebowitz+Lieb+Spencer-1980}. Here and below, with some abuse of notation, we denote by $\lrangle{\ }_h$ the average 
with respect to $\mu_h$, even when re-expressed in terms of the $(u,\theta)$ variables, rather than of $S$ (a similar convention will be used for $\lrangle{\ }_{L;h}$ and $\lrangle{\ }$). 

\begin{theorem}
	\label{thm:moments_u_theta}
	There exist constants \(C,C_0,c>0\) and \(T_0>0\) such that, if \(T\le T_0\), the following holds.
	\begin{itemize}
		\item For any \(k\in\{1,\cdots, N-2\}\) and $a>0$, \(\lrangle{e^{a|\tilde{u}_0^k|}}_h\leq 2e^{a^2G_{00}/2}\). Moreover, for any \(n\geq 0\), \begin{equation*}
		\lrangle{|\tilde{u}^k_0|^n}_h\leq C^n\sqrt{n!}.
		\end{equation*}
		\item \(\lrangle{e^{ |\phi_0|}}_h\leq C\). In particular, for any \(n\geq 0\), \begin{equation*}
		|\lrangle{|\phi_0|^n}_h|\leq C n!.
		\end{equation*}
		\item For any \(F\) function of \(\phi,\tilde{u}\), \begin{equation*}
		|\lrangle{F\delta(\theta_0\pm \pi)}_h| \leq \normsup{F}C_0e^{h-c\beta},
		\end{equation*}where the sup-norm is computed over \(\phi_x\in[-\sqrt{\beta}\pi,\sqrt{\beta}\pi ]\) and \(\norm{\tilde{u}_x}\leq \sqrt{\beta}\) for all \(x\in\Zd\).
	\end{itemize}
\end{theorem}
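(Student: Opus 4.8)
The plan is to prove the three bounds in Theorem~\ref{thm:moments_u_theta} by transferring the large-deviation estimates on the spin variables, established in Lemma~\ref{lem:moment_bnd_spins} and its Corollary, to the rescaled coordinates $(\tilde u,\phi)$ through the parametrization~\eqref{transf}. The key observation is that the coordinates are related to the spin components by $\tilde u_0^k=\sqrt\beta\,S_0^k$ for $k\in\{1,\dots,N-2\}$ (up to the trivial identification in \eqref{transf}), while $\phi_0=\sqrt\beta\,\theta_0$ measures the angular deviation of the last two spin components in the plane spanned by the magnetic-field direction. Once this dictionary is in place, the exponential moment bounds on $(\tilde u,\phi)$ should follow almost directly from~\eqref{eq:Moment_bound_spin_version_transverse} and~\eqref{eq:Moment_bound_spin_version_longi}, and the polynomial moment bounds $\lrangle{|\tilde u^k_0|^n}_h\le C^n\sqrt{n!}$ and $|\lrangle{|\phi_0|^n}_h|\le Cn!$ will come from the elementary inequality $t^n\le n!\,e^{t}$ (or its Gaussian analogue $t^n\le (n/e)^n e^t$, giving the $\sqrt{n!}$ growth after optimizing $a$).

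First I would handle the transverse coordinates. Since $\tilde u_0^k=\sqrt\beta\,S_0^k$ exactly, the bound $\lrangle{e^{a|\tilde u_0^k|}}_h=\lrangle{e^{a\sqrt\beta|S_0^k|}}_h\le 2e^{a^2G_{00}/2}$ is literally~\eqref{eq:Moment_bound_spin_version_transverse}. For the polynomial moment, I would write $\lrangle{|\tilde u_0^k|^n}_h\le n!\,a^{-n}\lrangle{e^{a|\tilde u_0^k|}}_h$ and optimize over $a$; choosing $a\sim\sqrt{n}$ converts the Gaussian tail $e^{a^2G_{00}/2}$ into the claimed $C^n\sqrt{n!}$ behavior. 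Second, for $\phi_0$, the relation is more delicate because $\theta_0$ is only defined up to the compact range $(-\pi,\pi]$ and its link to the spins is through $S_0^{N-1}=\rho\sin\theta_0$, $S_0^N=\rho\cos\theta_0$ with $\rho=\sqrt{1-\|u_0\|^2}$. I would split the expectation according to whether the spin is close to the north pole (so $|\theta_0|$ is small and controlled by $1-S_0^N$ together with the transverse components) or not. On the good event $\{S_0^N\ge 1-b\}\cap\{\sum_{k\le N-2}|S_0^k|^2\le \text{small}\}$, one has $|\phi_0|=\sqrt\beta|\theta_0|$ bounded in terms of $\sqrt\beta\sqrt{1-S_0^N}$ plus transverse contributions, both of which have Gaussian/exponential moments by~\eqref{eq:Moment_bound_spin_version_longi} and~\eqref{eq:Moment_bound_spin_version_transverse}; on the complementary bad event, whose probability is exponentially small in $\beta$ by~\eqref{eq:sign_Sk_non_neg}--\eqref{eq:sign_SN_non_neg}, the crude deterministic bound $|\phi_0|\le\pi\sqrt\beta$ is more than compensated. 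The polynomial moment $|\lrangle{|\phi_0|^n}_h|\le Cn!$ then follows from $\lrangle{e^{|\phi_0|}}_h\le C$ via $t^n\le n!\,e^t$.

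The third statement bounds the measure of the singular set $\{\theta_0=\pm\pi\}$, i.e.\ the antipodal configurations where the parametrization degenerates. I would estimate $|\lrangle{F\delta(\theta_0\pm\pi)}_h|$ by $\|F\|_\infty$ times the probability density of $\theta_0$ at the south pole, which corresponds to $S_0^N$ near $-1$. Since $\theta_0=\pm\pi$ forces $S_0^N=-\rho\le 0$, this event is contained in $\{S_0^N\le 1-b\}$ for $b$ of order one, whose measure is bounded by $e^{-c\beta}$ thanks to~\eqref{eq:sign_SN_non_neg}; the factor $e^{h}$ simply accounts for the magnetic-field weight $e^{hS_0^N}$ evaluated near the south pole, and $C_0$ absorbs the Jacobian of the delta-function change of variables. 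The main obstacle I anticipate is the analysis of $\phi_0$ near the north pole: one must carefully bound $\sqrt\beta|\theta_0|$ in terms of the spin observables that have good exponential moments, being mindful that $\rho$ depends on the transverse coordinates and that the inversion $\theta_0=\arctan(S_0^{N-1}/S_0^N)$ is only well-behaved away from the poles. Making this estimate uniform in $h\in D^*$ and valid for all $T\le T_0$, while correctly tracking the interplay between the $1-S_0^N$ tail and the transverse fluctuations, will require the bulk of the genuine work.
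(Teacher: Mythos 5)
Your treatment of the transverse coordinates $\tilde u_0^k$ coincides with the paper's (identification $\tilde u_0^k=\sqrt\beta S_0^k$, then optimize the Chernoff parameter), but your plan for $\phi_0$ has a genuine gap. You propose to bound $|\phi_0|=\sqrt\beta|\theta_0|$ by a constant times $\sqrt\beta\sqrt{1-S_0^N}=\sqrt{\beta(1-S_0^N)}$ and invoke \eqref{eq:Moment_bound_spin_version_longi}. That inequality controls exponential moments of $\sqrt\beta(1-S_0^N)$, i.e.\ it gives Gaussian control of $1-S_0^N$ at scale $T^{1/2}$: $\mu_h\big(1-S_0^N>s\sqrt T\big)\le Ce^{-cs^2}$. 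To bound $\lrangle{e^{\sqrt{\beta(1-S_0^N)}}}_h$ uniformly in $T$ you would need control at scale $T$ (exponential moments of $\beta(1-S_0^N)$), which is strictly stronger and is not provided by Lemma~\ref{lem:moment_bnd_spins}; indeed, from \eqref{eq:Moment_bound_spin_version_longi} one only gets $\mu_h\big(\sqrt{\beta(1-S_0^N)}>t\big)\le Ce^{-ct^4/\beta}$, a tail that does not even decay in $t$ uniformly in $\beta$, and optimizing yields only a bound of order $e^{C\beta^{1/3}}$, not a constant. The paper's proof never uses the longitudinal bound here: on the good event $\{|\theta_0|\le\pi/2\}\cap\{\norm{u_0}\le1/2\}$ it writes $|\theta_0|\le\tfrac\pi2|\sin\theta_0|$ and $\sqrt{1-\norm{u_0}^2}\,|\sin\theta_0|=|S_0^{N-1}|$, so that $\sqrt\beta|\theta_0|\le C\sqrt\beta|S_0^{N-1}|$, and the \emph{transverse} bound \eqref{eq:Moment_bound_spin_version_transverse} (valid for every $a>0$, with genuinely Gaussian constants) finishes the job; the complementary events are handled by the deterministic bound $|\phi_0|\le\pi\sqrt\beta$ multiplied by the exponentially small probabilities \eqref{eq:sign_Sk_non_neg}--\eqref{eq:sign_SN_non_neg}. (A cousin of your idea can be salvaged: on $\{S_0^N\ge0\}$ one has $1-S_0^N\le\sum_{k<N}(S_0^k)^2$, hence $\phi_0^2\le C\sum_{k<N}(\sqrt\beta S_0^k)^2$, and one may use $e^{|\phi_0|}\le e^{1/(4\epsilon)}e^{\epsilon\phi_0^2}$ with $\epsilon$ small — but this again routes through the transverse components, not through \eqref{eq:Moment_bound_spin_version_longi}.)

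In the third item you conflate a density with a probability. The object $\delta(\theta_0\pm\pi)$ is not an indicator: $|\lrangle{F\delta(\theta_0\pm\pi)}_h|$ is $\normsup{F}$ times the marginal \emph{density} of $(u_0,\theta_0)$ on $\{\theta_0=\pm\pi\}$, and the estimate $\mu_h(S_0^N\le 1-b)\le e^{-c\beta}$ by itself gives no information about that density (a set of exponentially small measure can carry arbitrarily large density). The missing step — which is where the real work of the paper lies — is a one-site conditional estimate: by the DLR equation the density is an average over the neighbours' configuration of a ratio of single-site integrals, and the paper bounds this ratio by $Ce^{-c\beta}$ when all neighbours satisfy $S_i^N\ge 7/8$ (an energy-gap computation comparing the south-pole configuration with the portion of the sphere near the north pole), and by the crude $C'e^h\beta^{N-1}$ otherwise, the latter case being weighted by $\mu_h(\exists\, i\sim 0:\ S_i^N<7/8)\le 2d\,e^{-c\beta}$ — this is the only place where \eqref{eq:sign_SN_non_neg} enters. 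Some such conditional-density argument (alternatively, a Harnack-type bound exploiting that the log of the one-site conditional density of $\theta_0$ has derivative at most $2d\beta+h$, which converts the probability bound $\mu_h(S_0^N\le 0)\le e^{-c\beta}$ into a density bound at the cost of a factor $C\beta$) must be supplied; as written, your sketch of this item is incomplete.
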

\begin{proof}
	The first inequality in the first item, \(\lrangle{e^{a|\tilde{u}_0^k|}}_h\leq 2e^{a^2G_{00}/2}\), is a direct consequence of~\eqref{eq:Moment_bound_spin_version_transverse} and the definition of \(\tilde{u}\). From such inequality, it follows that $\lrangle{|\tilde{u}_0^k|^n}_h\leq 2e^{a^2G_{00}/2}a^{-n}n!$. The choice $a=\sqrt{n/G_{00}}$ implies the stated bound on 
$\lrangle{|\tilde{u}_0^k|^n}_h$. 
	
	To get the second item, write
	\begin{align*}
		\lrangle{e^{\sqrt{\beta}|\theta_0|}}_h &= \lrangle{e^{\sqrt{\beta}|\theta_0|} \mathds{1}_{|\theta_0|\leq \pi/2}\mathds{1}_{\|u_0\|\leq 1/2}
		}_h + \lrangle{e^{\sqrt{\beta}|\theta_0|} \mathds{1}_{|\theta_0|\leq \pi/2}\mathds{1}_{\|u_0\|> 1/2}
		}_h +		
		\lrangle{e^{\sqrt{\beta}|\theta_0|}\mathds{1}_{|\theta_0|> \pi/2}}_h\\
		&\leq \lrangle{e^{\pi^{-1}\sqrt{3\beta}\sqrt{1-\|u_0\|^2}|\sin(\theta_0)|}}_h + e^{ \pi\sqrt{\beta}/2} \mu_h(\|u_0\|>1/2)+ e^{ \pi\sqrt{\beta}}\mu_h(|\theta_0|> \pi/2)\\
		&= \lrangle{e^{\pi^{-1}\sqrt{3\beta}|S_0^{N-1}|}}_h + e^{\pi\sqrt{\beta}/2}\mu_h\big({\sum_{k=1}^{N-2}|S^k_0|^2}>1/4\big)+e^{\pi\sqrt{\beta}} \mu_h(S_0^N \leq 0)\\ 
		&\leq 2e^{3G_{00}/(2\pi^2)} + 2 e^{\pi\sqrt{\beta}/2}e^{-\beta/(8G_{00}(N-2)^2)}+e^{\pi\sqrt{\beta}}e^{-\beta/(8G_{00})},
	\end{align*}
where in the last inequality we used \eqref{eq:Moment_bound_spin_version_transverse}, \eqref{eq:sign_Sk_non_neg} and~\eqref{eq:sign_SN_non_neg}. If we now choose $\beta$ large enough and recall the definition of $\phi$, we obtain, as desired, that 
$\lrangle{e^{|\phi_0|}}_h\le C$ for a suitable $C>0$. 

 Finally, we consider the term $\lrangle{F\delta(\theta_0\pm \pi)}_h$ involving the delta function. Using the DLR equation and letting \(S_0(u_0)=\Pi(u_0,\pi)=(u_0^1,\cdots, u_0^{N-2}, 0, -\sqrt{1-\norm{u_0}^2} )\),
	\begin{equation*}
		\lrangle{F\delta(\theta_0\pm \pi)}_h = \int d\mu_h(S_i,i\neq 0) \frac{\int_{\norm{u_0}\leq 1} du_0\, F(S,S_0(u_0))e^{\beta \sum_{i\sim 0} S_0(u_0)\cdot S_i -\sqrt{1-\norm{u_0}^2} h} }{\int d\nu_N(\tilde{S}_0) e^{\beta \sum_{i\sim 0} \tilde{S}_0\cdot S_i + h\tilde{S}_0^N}}.
	\end{equation*}Define
	\begin{equation*}
		J( S_i, i\sim 0) =\sup_{\norm{u_0}\leq 1}\Big(\int d\nu_N(\tilde{S}_0) e^{\beta \sum_{i\sim 0} (\tilde{S}_0-S_0(u_0))\cdot S_i + h\tilde{S}_0^N}\Big)^{-1}\equiv
		 \sup_{\norm{u_0}\leq 1} J_{u_0}( S_i, i\sim 0).
	\end{equation*}One has
	\begin{equation*}
		|\lrangle{F\delta(\theta_0\pm \pi)}_h|\leq \normsup{F}\frac{\pi^{N/2-1}}{\Gamma(N/2)} \int d\mu_h(S_i,i\neq 0) J( S_i, i\sim 0).
	\end{equation*} Now, for any event \(A\) measurable with respect to \(\{S_i: i\sim 0\}\), one has
	\begin{equation}
	\label{eq:moments_u_theta_eq1}
		\int d\mu_h(S_i,i\neq 0) J( S_i, i\sim 0) \leq \sup_{(S_i)_{i\sim 0}\in A} J( S_i, i\sim 0) + \mu_h(A^c)\sup_{(S_i)_{i\sim 0}\in A^c} J( S_i, i\sim 0).
	\end{equation}We choose \(A = \{S_i^N\geq 7/8, i\sim 0\}\). Study the first term. For any \((S_i)_{i\sim 0}\in A\), any $u_0\in\mathbb B^{N-2}$ and $h\ge 0$, one has the bound
	\begin{equation*}\begin{split}
		J_{u_0}( S_i, i\sim 0) &\leq  \Big(\int d\nu_N(\tilde{S}_0) \mathds{1}_{\tilde{S}_0^N\geq 7/8}e^{\beta \sum_{i\sim 0} (\tilde{S}_0-S_0(u_0))\cdot S_i}\Big)^{-1} \\ 
		&\leq \Big(\int d\nu_N(\tilde{S}_0) \mathds{1}_{\tilde{S}_0^N\geq 7/8}e^{\beta 2d/32}\Big)^{-1} = Ce^{-\beta 2d/32}.\end{split}
	\end{equation*} for some constant $C>0$. Indeed, on the one hand, for $S_i^N\ge 7/8$ and $\|u_0\|\le 1$, by Cauchy-Schwartz,
	\begin{equation*}\begin{split}
		S_0(u_0)\cdot S_i &= \sum_{k=1}^{N-2} u_0^kS_i^k - \sqrt{1-\norm{u_0}^2} S_i^N \leq \norm{u_0}\Big(\sum_{k=1}^{N-1} (S_i^k)^2\Big)^{1/2} 
		\\
		& \leq \norm{u_0}\Big(1-\frac{49}{64}\Big)^{1/2} \leq \norm{u_0}\cdot\frac{1}{2}\leq \frac{1}{2}.\end{split}
	\end{equation*}
	On the other hand, if \(\tilde{S}^N_0\geq 7/8\) and $S_i^N\ge 7/8$,
	\begin{equation}
		\tilde{S}_0\cdot S_i \geq \cos(2\arccos(7/8)) = \frac{17}{32}.
	\end{equation}
	We now turn to the second term in~\eqref{eq:moments_u_theta_eq1}. First, by a union bound, 
	\begin{equation*}
		\mu_h(A^c)\leq 2d \mu_h(1-S_0^N> 1/8) \leq 2d e^{- \beta/(8^3 G_{00})},
	\end{equation*}where we assumed $T$ sufficiently small and used~\eqref{eq:sign_SN_non_neg}. Moreover, for any $u_0\in\mathbb B^{N-2}$, 
	\begin{equation*}\begin{split}
		J_{u_0}( S_i, i\sim 0) & \leq e^h\Big(\int d\nu_N(\tilde{S}_0) \mathds{1}_{\norm{\tilde{S}_0-S_0(u_0)}\leq T}e^{\beta \sum_{i\sim 0} (\tilde{S}_0-S_0(u_0))\cdot S_i}\Big)^{-1}\\
		& \leq e^{h+2d}\Big(\int d\nu_N(\tilde{S}_0) \mathds{1}_{\norm{\tilde{S}_0-S_0(u_0)}\leq T}\Big)^{-1}\leq C'e^h\beta^{N-1},\end{split}
	\end{equation*}for some \(C'>0\) independent of \(T\). Combining the two previous equations, we obtain the desired estimate on the second term in~\eqref{eq:moments_u_theta_eq1}, which concludes the proof.
\end{proof}

\subsection{Taylor expansion of the spin correlations} Given \(\alpha:\Zd\times [N]\to \Z_+\) with \(\|\alpha\|_1=\sum_{k=1}^N\sum_{x\in \Zd}\alpha^k_x\) finite, 
we let  $a_\alpha(S)$ be the spin observable 
$$a_\alpha(S)=\prod_{x}\prod_k (S_x^k)^{\alpha_{x}^k}.$$
In order to prove our main result, Theorem \ref{thm:main}, we are interested in computing $\lrangle{a_\alpha(S)}$ for $\alpha$s such that $\|\alpha^k\|_1=\sum_{x}\alpha^k_x$ are even, for all $k\in\{1,\ldots,N-1\}$ 
(if any of these $\|\alpha^k\|_1$ is odd, $\lrangle{a_\alpha(S)}=0$, due to the residual $O(N-1)$ symmetry in the directions orthogonal to the magnetic field).  
The main result of this subsection is the following:
\begin{lemma}\label{lem:tay}
Let $\alpha$ and $a_\alpha(S)$ be defined as above. Then there exist coefficients $a_{\alpha,s}^{p,\tilde p}$, with $s\ge 0$ and $p\in \mathfrak P$, $\tilde p\in\tilde{\mathfrak P}$ both even, such that, for any $n\ge 0$, 
\begin{equation}\label{eq6.2.1}\lrangle{a_\alpha(S)}=\sum_{s=0}^nT^s \sum_{p,\tilde p} a_{\alpha,s}^{p,\tilde p}\lrangle{\phi^p\tilde u^{\tilde p}}+R_{n+1}(\alpha),\end{equation}
where the sum over $p,\tilde p$ in the right hand side runs over even tuples in $\mathfrak P,\tilde{\mathfrak P}$, $a_{\alpha,s}^{p,\tilde p} =0$ if $\|p\|_1+\|\tilde p\|_1\neq 2s$, and the remainder satisfies
\begin{equation}\label{eq:remai}|R_{n+1}(\alpha)|\le C(n,\alpha)T^{n+1}\end{equation}
for some $C(n,\alpha)>0$. 
\end{lemma}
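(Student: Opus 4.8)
The plan is to prove \eqref{eq6.2.1} by Taylor expanding in $T$ the observable $a_\alpha$, re-expressed in the rescaled coordinates $(\tilde u,\phi)$ of \eqref{transf}, while keeping the measure exact, and then bounding the truncation error by the a priori moment bounds of Theorem~\ref{thm:moments_u_theta}. Recall that in these coordinates $S_x^k=\sqrt T\,\tilde u_x^k$ for $k\le N-2$, while $S_x^{N-1}=\sqrt{1-T\|\tilde u_x\|^2}\,\sin(\sqrt T\phi_x)$ and $S_x^N=\sqrt{1-T\|\tilde u_x\|^2}\,\cos(\sqrt T\phi_x)$. I would first factor out of $\tilde A(\tilde u,\phi)=a_\alpha(S)$ the monomial prefactor carrying the $\sqrt T$'s, writing
\[
\tilde A(\tilde u,\phi)=T^{d_0}\,\phi^{\bar p}\,\tilde u^{\bar{\tilde p}}\,\Psi(\tilde u,\phi;T),\qquad d_0:=\tfrac12\sum_{k=1}^{N-1}\|\alpha^k\|_1\in\Z_+,
\]
where $\bar p_x=\alpha_x^{N-1}$, $\bar{\tilde p}{}^k_x=\alpha_x^k$ for $k\le N-2$, and, using $\sin(\sqrt T\phi_x)=\sqrt T\,\phi_x\,s(T\phi_x^2)$ and $\cos(\sqrt T\phi_x)=c(T\phi_x^2)$ with $s(y)=\sin(\sqrt y)/\sqrt y$ and $c(y)=\cos\sqrt y$ entire,
\[
\Psi(\tilde u,\phi;T)=\prod_x s(T\phi_x^2)^{\alpha_x^{N-1}}\,c(T\phi_x^2)^{\alpha_x^N}\,(1-T\|\tilde u_x\|^2)^{(\alpha_x^{N-1}+\alpha_x^N)/2}.
\]
The point of this rewriting is that, even though an individual factor $\sin^{\alpha_x^{N-1}}(\sqrt T\phi_x)$ with $\alpha_x^{N-1}$ odd is analytic only in $\sqrt T$, the evenness of $\|\alpha^{N-1}\|_1$ forces the collected power of $\sqrt T$ to be the integer $2d_0$, so that $\Psi$ is analytic in $T$ on the open integration domain $\{T\|\tilde u_x\|^2<1\ \forall x\}$ (of full measure), with $\Psi(\,\cdot\,;0)\equiv1$ and $|\Psi|\le1$ throughout.

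Expanding $\Psi=\sum_{j\ge0}T^j\Psi_j$, each $\Psi_j$ is a polynomial in $(\tilde u,\phi)$, homogeneous of degree $2j$ and even in $\phi$ and in each $\tilde u^k$; collecting powers of $T$ yields $\tilde A\stackrel{(\infty)}{=}\sum_{s\ge0}T^s\sum_{p,\tilde p}a^{p,\tilde p}_{\alpha,s}\phi^p\tilde u^{\tilde p}$ with coefficients $a^{p,\tilde p}_{\alpha,s}$ supported on even $p,\tilde p$ satisfying $\|p\|_1+\|\tilde p\|_1=2(d_0+j)=2s$. Indeed the degree count gives $\|p\|_1+\|\tilde p\|_1=2s$, while the parity of $p$ and of each $\tilde p^k$ follows from $\|\alpha^k\|_1$ even ($k\le N-1$) combined with the evenness of $\Psi_j$. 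Taking the expectation term by term of the part of order $\le n$ produces exactly the finite sum on the right-hand side of \eqref{eq6.2.1}, so everything reduces to bounding $R_{n+1}(\alpha)=\lrangle{\tilde A-\sum_{s=0}^n T^s\sum_{p,\tilde p}a^{p,\tilde p}_{\alpha,s}\phi^p\tilde u^{\tilde p}}$. When $d_0\ge n+1$ the finite sum is empty and, since $|\Psi|\le1$, the bound $|R_{n+1}(\alpha)|=|\lrangle{\tilde A}|\le T^{d_0}\lrangle{|\phi^{\bar p}\tilde u^{\bar{\tilde p}}|}\le C(\alpha)T^{n+1}$ is immediate from the moment bounds; so I focus on $n\ge d_0$.

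For $n\ge d_0$ I would represent the remainder through the integral form of Taylor's theorem applied to $\Psi$ in the variable $T$:
\[
R_{n+1}(\alpha)=\frac{T^{n+1}}{(n-d_0)!}\int_0^1(1-t)^{\,n-d_0}\,\Big\langle\phi^{\bar p}\tilde u^{\bar{\tilde p}}\,\big(\partial_\tau^{\,n-d_0+1}\Psi\big)\big|_{\tau=tT}\Big\rangle\,dt .
\]
By the Leibniz rule, $\partial_\tau^{\,n-d_0+1}\Psi$ is a sum of at most $C(n,\alpha)$ products over the finite support of $\alpha$ of: derivatives of $s(\tau\phi_x^2)^{\bullet}c(\tau\phi_x^2)^{\bullet}$, which are bounded functions times powers of $\phi_x$ and carry no $\tau$-singularity; and derivatives $\partial_\tau^{\,j}(1-\tau\|\tilde u_x\|^2)^{e_x}$, with $e_x=(\alpha_x^{N-1}+\alpha_x^N)/2$, equal to a constant times $\|\tilde u_x\|^{2j}(1-\tau\|\tilde u_x\|^2)^{e_x-j}$. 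On the integration domain $T\|\tilde u_x\|^2\le1$ one has $1-\tau\|\tilde u_x\|^2=1-tT\|\tilde u_x\|^2\ge1-t$, so $|(1-\tau\|\tilde u_x\|^2)^{e_x-j}|\le(1-t)^{\min(0,e_x-j)}$; summing the negative exponents over the factors, the total power of $(1-t)^{-1}$ never exceeds $(n-d_0+1)-\tfrac12$, the worst case being all derivatives falling on a single factor of half-integer exponent $e_x=\tfrac12$. Hence $\big|(\partial_\tau^{\,n-d_0+1}\Psi)|_{\tau=tT}\big|\le C(n,\alpha)(1-t)^{-(n-d_0+1/2)}P(\tilde u,\phi)$ for some polynomial $P$; bounding $\lrangle{|\phi^{\bar p}\tilde u^{\bar{\tilde p}}|\,P}$ by Hölder's inequality over $\supp_\alpha$ together with the single-site bounds $\lrangle{|\tilde u^k_0|^m}_h\le C^m\sqrt{m!}$ and $\lrangle{|\phi_0|^m}_h\le C\,m!$ of Theorem~\ref{thm:moments_u_theta}, the inner expectation is $\le C(n,\alpha)(1-t)^{-(n-d_0+1/2)}$. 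Multiplying by $(1-t)^{n-d_0}$ gives the integrable $(1-t)^{-1/2}$, the $t$-integral converges, and \eqref{eq:remai} follows.

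The crux — and the step I expect to be the main obstacle — is precisely the half-integer powers $(1-T\|\tilde u_x\|^2)^{(\alpha_x^{N-1}+\alpha_x^N)/2}$ with odd exponent: their $T$-derivatives generate negative powers of $\rho_x=\sqrt{1-T\|\tilde u_x\|^2}$ that, near the equator $\rho_x\to0$, are not integrable against the flat measure $du$ (a naive Lagrange remainder evaluated at $\tau=T$ would produce $\rho_x^{-(2(n-d_0)+1)}$). What rescues the estimate is the combination of the integral remainder and the sphere constraint: evaluating at the interior point $\tau=tT$ and using $1-tT\|\tilde u_x\|^2\ge1-t$ converts the singularity into $(1-t)^{-(n-d_0+1/2)}$, exactly tamed by the Taylor weight $(1-t)^{n-d_0}$. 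I would carry out all estimates at fixed $h\in D^*$, where Theorem~\ref{thm:moments_u_theta} holds with constants uniform in $h$ and in $T\le T_0$; since $a_\alpha$ is bounded and continuous in $S$ one has $\lrangle{a_\alpha}_h\to\lrangle{a_\alpha}$, and each finite-order moment $\lrangle{\phi^p\tilde u^{\tilde p}}_h$ converges to $\lrangle{\phi^p\tilde u^{\tilde p}}$ (uniform integrability from the same bounds, the contribution of the branch cut $\{\theta_x=\pm\pi\}$ being exponentially small in $\beta$ by \eqref{eq:sign_SN_non_neg} and the last item of Theorem~\ref{thm:moments_u_theta}), so that \eqref{eq6.2.1} passes to the limit with the remainder bound intact.
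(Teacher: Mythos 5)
Your proof is correct, and its overall skeleton (rewriting $a_\alpha$ in the $(\tilde u,\phi)$ coordinates, expanding in integer powers of $T$ with even-tuple coefficients satisfying $\|p\|_1+\|\tilde p\|_1=2s$, and closing the estimate with the moment bounds of Theorem~\ref{thm:moments_u_theta}) coincides with the paper's. Where you genuinely diverge is at the crux you yourself identified: the half-integer powers $(1-T\|\tilde u_x\|^2)^{(\alpha_x^{N-1}+\alpha_x^N)/2}$, whose $T$-derivatives blow up as $\|\tilde u_x\|\to\sqrt\beta$. The paper handles this probabilistically: it introduces the event $E_\alpha=\{\|\tilde u_x\|\le\sqrt{\beta}/2\ \ \forall x\in\supp\,\alpha^{N-1}\cup\supp\,\alpha^N\}$, whose complement has probability $O(e^{-c\beta})$ by the concentration bound \eqref{eq:sign_Sk_non_neg}; on $E_\alpha$ the arguments of all square roots stay in $[3/4,1]$, so the crude Lagrange form of the remainder applies, with bound $C_{n,\alpha}T^{n+1}\sum_{x\in\supp\alpha}(\|\tilde u_x\|^{2(n+1)}+\phi_x^{2(n+1)})$, and the indicator is then removed at the price of another application of \eqref{eq:sign_Sk_non_neg} together with the moment bounds. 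You instead work on the whole (full-measure) domain and use the integral form of the Taylor remainder, evaluating derivatives at the interior points $\tau=tT$ and exploiting the hard constraint $T\|\tilde u_x\|^2\le 1$ through $1-tT\|\tilde u_x\|^2\ge 1-t$, so that the worst singularity $(1-t)^{-(n-d_0+1/2)}$ (all derivatives landing on a single factor of exponent $1/2$; your exponent count is correct) is exactly absorbed by the Taylor weight $(1-t)^{n-d_0}$, leaving the integrable $(1-t)^{-1/2}$. Both routes are valid: the paper's is calculationally lighter and recycles an estimate used throughout the paper, at the cost of an additional (harmless) $O(e^{-c\beta})$ in the remainder; yours is purely deterministic at this step, needs no probabilistic input beyond the final moment bounds, and yields a clean $C(n,\alpha)T^{n+1}$, in exchange for the more delicate Leibniz bookkeeping of singular exponents and the justification of the integral remainder near the boundary (correctly disposed of by your full-measure remark). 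Your explicit treatment of the $L\to\infty$, $h\searrow 0$ limits is also sound, and makes precise a point the paper leaves implicit.
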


\begin{remark}\label{rem:taylorr}The coefficients $a_{\alpha,s}^{p,\tilde p}$ are explicitly computable in terms of the Taylor expansions of $\sqrt{1-x}$, $\sin x$ and $\cos x$, see \eqref{exptutti}-\eqref{expAalpha} below.
\end{remark}

\begin{proof}
We denote by $A_\alpha(\tilde u,\phi)$ the rewriting of $a_\alpha(S)$ in the $(\tilde u,\phi)$ variables:
\begin{equation}\label{eq:reprAalpha}A_\alpha(\tilde u, \phi)=\prod_{x}\prod_k \Big(\Pi_k(\sqrt{T}\tilde u_x,\sqrt{T}\phi_x)\Big)^{\alpha_{x}^k},\end{equation}
with $\Pi_k$ the components of the function $\Pi$ in \eqref{transf}. Letting $E_\alpha$ be the event 
$$E_\alpha=\{\tilde u\in (\sqrt{\beta}\mathbb B^{N-2})^{\mathbb Z^d} : \|\tilde u_x\|\le\sqrt{\beta}/2, \forall x\in\supp\,\alpha^{N-1}\cup\supp\,\alpha^N\},$$
and using \eqref{eq:sign_Sk_non_neg}, which implies that $\mu(E_\alpha^c)=O(e^{-c\beta})$ for some $c>0$, we find
\begin{equation} \lrangle{a_\alpha(S)}=\lrangle{\mathds 1_{E_\alpha}\, A_\alpha(\tilde u,\phi)}+O(e^{-c\beta}).\end{equation}
By expanding each factor $\sqrt{1-T\|\tilde u_x\|^2}$, $\sin(\sqrt{T}\phi_x)$, $\cos(\sqrt{T}\phi_x)$ appearing in the right hand side 
of \eqref{eq:reprAalpha} in power series in $T$, as 
\begin{equation}\label{exptutti}\begin{split}
& \sqrt{1-T\|\tilde u_x\|^2}\stackrel{(\infty)}{=}\sum_{n\ge 0}T^n(-1)^n{1/2 \choose n}\|\tilde u_x\|^{2n},\\
& \sin(\sqrt{T}\phi_x)=\sqrt{T}\phi_x\sum_{n\ge 0}
T^n\frac{(-1)^n}{(2n+1)!}\phi_x^{2n},\\
& \cos(\sqrt{T}\phi_x)=\sum_{n\ge 0}
T^n\frac{(-1)^n}{(2n)!}\phi_x^{2n}, \end{split}\end{equation}
we obtain the formal power series expansion for $A_\alpha(\tilde u,\phi)$, in the form
\begin{equation}\label{expAalpha}A_\alpha(\tilde u, \phi)\stackrel{(\infty)}{=}\sum_{s\ge 0}T^s \sum_{p,\tilde p} a_{\alpha,s}^{p,\tilde p}\phi^p\tilde u^{\tilde p},\end{equation}
for suitable real coefficients $a^{p,\tilde p}_{\alpha,s}$, which are non zero only if $\|p\|_1+\|\tilde p\|_1=2s$ (the sum over $p,\tilde p$ runs over even tuples in $\mathfrak P,\tilde{\mathfrak P}$). By the Taylor remainder's theorem, for any $n\ge 0$, there exists $C_{n,\alpha}>0$ such that, for any $\tilde u$ in the support of $E_\alpha$, 
$$\Big|A_\alpha(\tilde u, \phi)-\sum_{s=0}^nT^s \sum_{p,\tilde p} a_{\alpha,s}^{p,\tilde p}\phi^p\tilde u^{\tilde p}\Big|\le C_{n,\alpha} T^{n+1}\sum_{x\in\supp\alpha}(\|\tilde u_x\|^{2(n+1)}+\phi_x^{2(n+1)}).$$
Therefore, using again \eqref{eq:sign_Sk_non_neg} and the moments bounds in the first two items of Theorem \ref{thm:moments_u_theta}, we obtain \eqref{eq6.2.1}-\eqref{eq:remai}, as desired. \end{proof}

\section{Integration by part}
\label{sec:int_by_part}

From now on, we shall assume \(T\le T_0\), with $T_0$ the same as in the statement of Theorem \ref{thm:moments_u_theta}, without recalling it each time. 
The main results of this section are the following two integration-by-part lemmas, which are the key ingredients of the inductive computation of the coefficients of the low temperature expansion for the correlation functions, described in Section \ref{sec:Inductive_exp}. 

\begin{lemma}[Integration by part: non-gradient]
	\label{lem:int_by_part_no_grad}
	Let \(p,p'\geq 0\) be integers. Let \(F\) be a monomial in \(\phi\) of degree \(2p+1\), and \(\calF\) be a monomial in \((\tilde{u}_x^k)_{x,k}\) with even degree at most \(2p'\), both with coefficient \(1\). Then, for any \(x_0\in\Zd\), integers \(\gamma>n\geq 0\), and \(\epsilon>0\),
	\begin{align*}
		\lrangle{\phi_{x_0}F\calF} &= \sum_{y\in\mathrm{supp}_F}  G(x_0,y)\lrangle{\partial_{y}F\calF} +\\
		&\ + \mathds 1_{n\ge 1}\sum_{k=1}^{n} T^{k} \sum_{x_1,e}\nabla_{x_1}^{\rme} G^{m_{\gamma}}(x_0,\cdot) \sum_{r+p' = k} c'_r \lrangle{F\calF (\nabla_{x_1}^{e}\phi)^{2r+1}\mathcal G^{p'}_{x_1,e} } \\
		&\ + R_{n+1}(F\calF,x_0)
	\end{align*}where \(m_{\gamma}=T^{\gamma}\), \(c'_r=\frac{(-1)^{r+1}}{(2r+1)!}\), $\mathcal G^{p'}_{x,e}$ was defined after \eqref{eq:8},
	and \(|R_{n+1}(F\calF,x_0)|\leq CT^{n+1-\epsilon}\) with \(C\) depending only on \(\epsilon, \gamma, p+p'\).
\end{lemma}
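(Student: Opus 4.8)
The plan is to integrate by parts directly in the $\phi$ variables, at fixed $\tilde u$, against the perturbed Gaussian weight $e^{-\frac12\|\nabla\phi\|^2+W_{L;h}}$ appearing in \eqref{eq:12}, regularizing the inversion of the massless Laplacian by means of the massive Green function $G^{m_\gamma}$ with $m_\gamma=T^\gamma$. Concretely, I would first work in finite volume $L$ and at fixed $h>0$, where the $\phi_{x_0}$ integration runs over the bounded interval $[-\sqrt\beta\pi,\sqrt\beta\pi]$, and write $\phi_{x_0}=\sum_y G^{m_\gamma}(x_0,y)\big[(-\Delta_L\phi)_y+m_\gamma^2\phi_y\big]$. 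Integrating by parts the factor $(-\Delta_L\phi)_y$ against the Gaussian weight yields, for $X=F\mathcal F$, the identity
\[\lrangle{\phi_{x_0}X}_{L;h}=\sum_y G^{m_\gamma}(x_0,y)\Big[\lrangle{\partial_{\phi_y}X}_{L;h}+\lrangle{X\,\partial_{\phi_y}W_{L;h}}_{L;h}+m_\gamma^2\lrangle{\phi_y X}_{L;h}-B_y\Big],\]
where $B_y$ collects the boundary contributions at $\phi_y=\pm\sqrt\beta\pi$, that is, at $\theta_y=\pm\pi$. I would then pass to the limits $L\to\infty$, $h\searrow 0$, using the a priori moment bounds of Theorem \ref{thm:moments_u_theta} to justify convergence of every term (the $O(h)$ longitudinal pieces of $W$ dropping out in the process).

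Next I would identify the three structural contributions. Since $\mathcal F$ does not depend on $\phi$, one has $\partial_{\phi_y}X=(\partial_y F)\mathcal F$, supported on the finite set $\supp_F$; there I replace $G^{m_\gamma}$ by the massless $G$, at the cost of an error $O(m_\gamma^2)$ per point, controlled by $|G^{m_\gamma}(x_0,y)-G(x_0,y)|$ for fixed $x_0,y$, producing the first term. For the second contribution I insert the low-temperature expansion \eqref{eq:8} of $W$: for fixed $y$ only the edges incident to $y$ contribute to $\partial_{\phi_y}W$, and carrying out $\sum_y G^{m_\gamma}(x_0,y)(\delta_{y,x_1+e}-\delta_{y,x_1})=\nabla_{x_1}^e G^{m_\gamma}(x_0,\cdot)$ together with the identity $(2r+2)c_r=c'_r$ reproduces exactly the displayed second term, once the sum over $s$ is truncated at order $n$. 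The boundary terms $B_y$ are exponentially small in $\beta$ by the delta-function estimate in the third item of Theorem \ref{thm:moments_u_theta}, the $m_\gamma^2$ term is small because $m_\gamma^2=T^{2\gamma}$ with $\gamma>n$, and the error from neglecting the domain constraints is quantified via \eqref{eq:sign_Sk_non_neg}; all three are absorbed into $R_{n+1}$.

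The main obstacle is the quantitative control of the remainder, namely the tail
\[\sum_{s>n}T^s\sum_{x_1,e}\nabla_{x_1}^e G^{m_\gamma}(x_0,\cdot)\sum_{r+p'=s}c'_r\lrangle{F\mathcal F(\nabla_{x_1}^e\phi)^{2r+1}\mathcal G^{p'}_{x_1,e}}.\]
Two summations must be tamed simultaneously. The sum over $s$ is a genuine Taylor remainder for the smooth function $W$, controlled by Taylor's theorem together with the factorial decay of $c'_r$ and of the $\tilde c_p$ coefficients beating the factorial growth of the moments $\lrangle{|\phi_0|^n}$, $\lrangle{|\tilde u_0^k|^n}$ supplied by Theorem \ref{thm:moments_u_theta}. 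The sum over $x_1$ is the delicate point: I would bound it by combining the decay of $\nabla_{x_1}^e G^{m_\gamma}(x_0,\cdot)$ (polynomial up to scale $1/m_\gamma=T^{-\gamma}$, exponential beyond) with the a priori decay of the correlation $\lrangle{F\mathcal F(\nabla_{x_1}^e\phi)^{2r+1}\mathcal G^{p'}_{x_1,e}}$ in $\mathrm{dist}(x_1,\supp_F)$, either through Cauchy--Schwarz using $\|\nabla G^{m_\gamma}(x_0,\cdot)\|_2=O(1)$ uniformly in $m_\gamma$, or directly through the reflection-positivity bounds of Appendix \ref{app:UB_corr_funct}. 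The leading tail term $s=n+1$ carries the factor $T^{n+1}$ multiplying an $x_1$-sum that is only marginally summable, and in the critical dimension $d=3$ is merely logarithmically divergent; cutting it off at scale $1/m_\gamma$ and optimizing yields the harmless loss $T^{-\epsilon}$ in the final bound $CT^{n+1-\epsilon}$, with $C$ depending on $\epsilon,\gamma$ and $p+p'$ but not on $x_0$. Verifying that this $x_1$-summation is controlled uniformly, with constants independent of $T$, is the technical heart of the argument.
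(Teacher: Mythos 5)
Your proposal is correct and follows essentially the same route as the paper's proof: the identity $\phi_{x_0}=\sum_y G^{m_\gamma}(x_0,y)\bigl[(-\Delta_L\phi)_y+m_\gamma^2\phi_y\bigr]$ is exactly the paper's add-and-subtract mass regularization, the finite-volume integration by parts produces the same four contributions (standard term, $\partial_y W$ term, $m_\gamma^2$ term, boundary term killed by the delta-function estimate of Theorem~\ref{thm:moments_u_theta}), and the remainder is tamed the same way, by combining the reflection-positivity decay bound of Lemma~\ref{lem:setof_pt_to_setof_pt_decay} with the Gaussian summation estimates \eqref{eq:Gaussian_est:sum_of_gradG}--\eqref{eq:Gaussian_est:sum_of_G_gradG_with_power} and optimizing the exponent to lose only $T^{-\epsilon}$. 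Two harmless slips: the mass-removal error per point is $O(m_\gamma)$ by \eqref{eq:Gaussian_est:mass_removal_G}, not $O(m_\gamma^2)$ (still $\leq T^{n+1}$ since $\gamma>n$), and the marginal divergence of the $x_1$-sum, driven by the dimension-independent bound $\log(1+|x|)/(1+|x|)$, occurs in all $d\geq 3$, not only $d=3$ (also, of your two proposed routes for that sum, only the reflection-positivity one works, since the correlation bound is not square-summable).
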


\begin{lemma}[Integration by part: gradient]
	\label{lem:int_by_part_grad}
	Let \(p'\geq 0\) be an integer. Let \(\calF\) be a monomial in \((\tilde{u}_{x}^k)_{x,k}\) of even degree at most \(2p'\) with coefficient \(1\). Let \(p:\Zd\to\Z_+\) be odd, and let \(q:\bbE_d\to\Z_+\) be even, both of finite support. Then, for any \(x_0\in\Zd,e'\in\rmB_+\), \(\alpha>n\geq 0\) integers, and \(\epsilon>0\),
	\begin{align*}
		&\lrangle{(\nabla_{x_0}^{e'}\phi) \phi^p (\nabla\phi)^{q} \calF} = \sum_{y\in \mathrm{supp}_p\cup \mathrm{supp}_q }  \nabla_{x_0}^{e'}G(\cdot,y)\lrangle{\partial_{y} (\phi^p (\nabla\phi)^{q}) \calF} +\\
		&\qquad + \mathds 1_{n\ge 1} \sum_{k=1}^{n} T^{k} \sum_{x_1,e} \nabla_{x_0,x_1}^{e',e} G^m \sum_{r+p' = k} c'_{r} \lrangle{\phi^p (\nabla\phi)^{q} (\nabla_{x_1}^{e}\phi)^{2r+1} \mathcal G^{p'}_{x_1,e} \calF} \\
		&\qquad + R_{n+1}'(p,q,\calF,x_0,e')
	\end{align*}where \(m=e^{-(\log T)^2}\),  and \(R_{n+1}'\) satisfies 
	\begin{equation*}
		|R_{n+1}'(p,q,\calF,x_0,e')|\leq CT^{\alpha} + CT^{n+1-\epsilon}\sum_{x\in\mathrm{supp}_p}\sum_{y\in\mathrm{supp}_q\cup\{x_0\}} \frac{\log^2(1+|x-y|)}{1+|x-y|}
	\end{equation*}
	with \(C\) depending only on \(\epsilon,\alpha, \normI{p}+\normI{q}+p',n\).
\end{lemma}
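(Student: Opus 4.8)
The plan is to follow the proof of the non-gradient identity, Lemma~\ref{lem:int_by_part_no_grad}, carrying the discrete gradient \(\nabla_{x_0}^{e'}\) through every step so as to extract from it the spatial decay that sharpens the remainder beyond the flat \(O(T^{n+1-\epsilon})\) bound of the non-gradient case. Working in the perturbed-Gaussian representation \eqref{eq:5} at finite \(L\) and \(h>0\), and taking the limits \(L\to\infty\), \(h\searrow 0\) only at the end (all a~priori bounds below being uniform in \(L,h\)), I would use the \emph{massive} Gaussian Free Field \(\Phi_m\) with covariance \(G^m\), \(m=e^{-(\log T)^2}\), as reference measure, absorbing the extra term \(\tfrac{m^2}{2}\|\phi\|^2\) into a modified potential \(\widetilde W=W_{L;h}+\tfrac{m^2}{2}\|\phi\|^2\); the exponential decay of \(G^m\) at scale \(1/m\) makes all the coordinate sums below absolutely convergent. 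Setting \(X=\phi^p(\nabla\phi)^q\calF\) and integrating by parts the factor \(\nabla_{x_0}^{e'}\phi=\phi_{x_0+e'}-\phi_{x_0}\) against the massive Gaussian yields
\[
\lrangle{(\nabla_{x_0}^{e'}\phi)\,X}=\sum_{z\in\Zd}\nabla_{x_0}^{e'}G^m(\cdot,z)\,\Big(\lrangle{\partial_{\phi_z}X}+\lrangle{X\,\partial_{\phi_z}\widetilde W}\Big).
\]
In the first bracket \(\partial_{\phi_z}X\) is supported on the finite set \(z\in\supp_p\cup\supp_q\), giving the Wick term; replacing \(G^m\) by the massless \(G\) there costs \(\nabla_{x_0}^{e'}(G^m-G)\) tested against bounded correlations, which is smaller than any power of \(T\) since \(m\) is super-polynomially small, and is therefore charged to the \(CT^{\alpha}\) part of \(R'_{n+1}\). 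The contribution of \(\partial_{\phi_z}(\tfrac{m^2}{2}\|\phi\|^2)=m^2\phi_z\) carries a prefactor \(m^2\) and is likewise absorbed into \(CT^{\alpha}\).

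Next I would expand the genuine interaction. Only the gradient part of \eqref{eq:8} contributes in the limit \(h\searrow 0\) (the magnetic-field part is proportional to \(h\)), and \(\partial_{\phi_z}(\nabla_x^e\phi)^{2r+2}=(2r+2)(\nabla_x^e\phi)^{2r+1}(\delta_{z,x+e}-\delta_{z,x})\); resumming against \(\nabla_{x_0}^{e'}G^m(\cdot,z)\) collapses \(\sum_z\) into the double discrete gradient \(\nabla^{e',e}_{x_0,x_1}G^m\), while the elementary identity \((2r+2)c_r=c'_r\) produces precisely the coefficients displayed in the second line of the statement. Keeping the orders \(T^{1},\dots,T^{n}\) leaves exactly the stated correction term, and the tail \(\sum_{s>n}\) is moved to the remainder. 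The parity hypotheses (\(p\) odd, \(q\) even, \(\calF\) even in \(\tilde u\)) guarantee that the surviving Wick and correction correlations have even total degree, hence need not vanish by the \(\phi\mapsto-\phi\) and residual \(O(N-1)\) symmetries.

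The heart of the proof, and its main obstacle, is the remainder estimate. \(R'_{n+1}\) collects four types of terms: the \(CT^{\alpha}\) pieces above; the boundary terms generated by integrating by parts on the constrained domain \(\phi_x\in[-\sqrt{\beta}\pi,\sqrt{\beta}\pi]\), which are controlled — uniformly in \(F\) — by the delta-function bound \(|\lrangle{F\delta(\theta_0\pm\pi)}_h|\le\normsup{F}C_0e^{h-c\beta}\) of Theorem~\ref{thm:moments_u_theta} and are exponentially small in \(\beta\); the higher-order Taylor remainders of the expansions \eqref{exptutti}; and, most importantly, the tail \(\sum_{s>n}T^{s}\sum_{x_1,e}\nabla^{e',e}_{x_0,x_1}G^m\sum_{r+p'=s}c'_r\lrangle{\phi^p(\nabla\phi)^q(\nabla^e_{x_1}\phi)^{2r+1}\mathcal G^{p'}_{x_1,e}\calF}\). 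For this dominant term I would bound the connected-type correlation using the moment bounds of Theorem~\ref{thm:moments_u_theta} together with the reflection-positivity decay estimates of Appendix~\ref{app:UB_corr_funct}, which give decay of \(\lrangle{\cdots}\) in the distance from \(x_1\) to \(\supp_p\cup\supp_q\); one then sums over \(x_1\) against \(|\nabla^{e',e}_{x_0,x_1}G^m|\), which behaves like \(|x_0-x_1|^{-d}\) up to the cutoff \(1/m\). Performing this convolution with the Gaussian summation estimates of Appendix~\ref{app:B} produces the spatially resolved factor \(\sum_{x\in\supp_p}\sum_{y\in\supp_q\cup\{x_0\}}\frac{\log^2(1+|x-y|)}{1+|x-y|}\): the double gradient on \(G^m\) supplies the extra power of decay yielding \(1/(1+|x-y|)\), the logarithms are the signature of the borderline \(|x|^{-d}\) summability cut off at scale \(1/m=e^{(\log T)^2}\), and the induced factor \(\log^{O(1)}(1/m)=\log^{O(1)}(1/T)\) is absorbed into \(T^{-\epsilon}\), converting \(T^{n+1}\) into \(T^{n+1-\epsilon}\). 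The crucial point — and the reason one cannot simply apply Lemma~\ref{lem:int_by_part_no_grad} twice to \(\phi_{x_0+e'}\) and \(\phi_{x_0}\) and subtract, which would only reproduce a flat \(O(T^{n+1-\epsilon})\) remainder — is that the gradient must be kept on the Green functions throughout the summation over coordinates in order to gain the decay in \(|x-y|\); controlling this summation, uniformly in the a~priori unbounded distances within \(\supp_p\cup\supp_q\cup\{x_0\}\), is the technically delicate step.
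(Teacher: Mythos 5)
Your proposal follows essentially the same route as the paper's proof: the same massive regularization \(m=e^{-(\log T)^2}\) with the mass term absorbed into the potential, the same decomposition into Wick, interaction, mass and boundary terms (the latter killed by the delta-function bound and the \(m^2\) prefactor, hence charged to \(CT^{\alpha}\)), a finite-order Taylor expansion of the interaction, and the same two key inputs --- the reflection-positivity decay bounds of Lemma~\ref{lem:setof_pt_to_setof_pt_decay} convolved against \(\nabla^{e',e}_{x_0,x_1}G^m\) via the Gaussian estimates \eqref{eq:Gaussian_est:sum_of_gradgradG_general} and \eqref{eq:Gaussian_est:sum_of_gradgradG_with_power} --- to dispose of the orders \(k\ge n+1\) and produce the spatially resolved remainder. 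This matches the paper's argument, so there is nothing to correct.
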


In both proofs, we will use Gaussian integration by parts 
with respect to a regularized version of the Gaussian measure described in Section~\ref{sec:perturbed_Gaussian_pic}, the regularization consisting in a mass term, which we add and subtract to the Gaussian weight. The first term in both lemmas is the standard Gaussian integration by parts, the second comes form the perturbation \(W_{L;h}\) (see Section~\ref{sec:perturbed_Gaussian_pic}) and the remainder term contains the effect of the regularization (mass) and of the perturbation \(\mathds{1}_{|\phi|\leq \sqrt{\beta}\pi}\) (see again Section~\ref{sec:perturbed_Gaussian_pic}). The proofs of the two lemmas will use the following a priori bounds on the decay of correlations, which will be proven in Appendix \ref{app:UB_corr_funct}.

\begin{lemma}
	\label{lem:setof_pt_to_setof_pt_decay}
	Let \(\epsilon>0,\alpha>0\), \(p'\geq 0\), \(p,q:\Zd\to\Z_+\) odd of finite support, and \(q':\Ed \to \bbZ_+\) odd of finite support.
	There exists \(C=C(\epsilon,\alpha,p'+\normI{p}+\normI{q})\) and \(C'=C(\epsilon,\alpha,p'+\normI{p}+\normI{q'})\) such that,
	 for any \(T<T_0\), and any \(\calF\) even monomial in \(\tilde{u}\) of degree at most \(2p'\),
	\begin{eqnarray}
	&		|\lrangle{\calF \phi^p\phi^{q} }|\leq CT^{\alpha} + C\beta^{\epsilon} \sum_{\substack{x:\ p_x\textnormal{ odd},\\ y:\ q_y\textnormal{ odd}}}\frac{\log(1+|x-y|)}{1+|x-y|},
	\label{lemma:eq1}
	\\
	&	|\lrangle{\calF \phi^p (\nabla\phi)^{q'}}|\leq C'T^{\alpha} + C'\beta^{\epsilon} \sum_{\substack{x:\ p_x\textnormal{ odd},\\ y\in\supp_{q'}}}\frac{\log(1+|x-y|)}{1+|x-y|}.\label{lemma:eq2}
	\end{eqnarray}
\end{lemma}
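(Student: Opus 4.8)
The plan is to deduce both bounds from a single principle: the correlations of the $(\phi,\tilde u)$ variables are, up to controlled errors, correlations of the transverse spin components, which are dominated by a Gaussian field through the infrared bound. First I would pass back to spins. Recalling the parametrization \eqref{transf}, the components $k\le N-2$ are exactly linear, $\tilde u_x^k=\sqrt\beta\,S_x^k$, while $\phi_x=\sqrt\beta\,S_x^{N-1}+O\big(T(\|\tilde u_x\|^2+\phi_x^2)|\phi_x|\big)$. Restricting the expectation to the event that $\|\tilde u_x\|,|\phi_x|\le\sqrt\beta/2$ at every $x$ in the finite set $\supp_p\cup\supp_q\cup\supp_{\tilde p}$, whose complement has probability $O(e^{-c\beta})$ by \eqref{eq:sign_Sk_non_neg}, I would Taylor-expand the nonlinearities so that $\phi^p\phi^q\calF$ becomes $\beta^{(\normI p+\normI q+\normI{\tilde p})/2}$ times a monomial in the transverse spins, plus remainders that are $O(T)$ times higher monomials of the same parity. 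The single-site moment bounds of Theorem~\ref{thm:moments_u_theta} together with \eqref{eq:sign_Sk_non_neg} then dispose of the remainders and of the bad event, producing the terms $CT^\alpha$.

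The decay is where reflection positivity enters, through the infrared bound \eqref{eq:infrared_infin_vol_centered_spins}. Choosing there $f(x)=g(x)e_k$, with $e_k$ the $k$-th coordinate direction in $\R^N$ and $k\le N-1$, and using $\lrangle{S_x^k}=0$, the rescaled transverse field $\sqrt\beta\,S^k$ obeys the Gaussian-domination estimate $\lrangle{e^{\sum_x g(x)\sqrt\beta\,S_x^k}}\le e^{\frac12(g,Gg)}$ for all finite-support $g$; in particular $\sum_x g(x)\sqrt\beta\,S_x^k$ is sub-Gaussian with variance proxy $(g,Gg)$. Since the bound factorizes over the components $k=1,\dots,N-1$ with the same covariance $G$, the whole transverse field is dominated, in the sense of Laplace transforms, by $N-1$ independent Gaussian free fields of covariance $G$ (the object of Corollary~\ref{cor:2}). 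By polarization this bounds the absolute value of every even transverse monomial by a finite sum of Wick products of factors $G(z_i,z_j)$, in which distinct components never contract. The decisive structural input is parity: because $\normI p$ and $\normI q$ are odd, in every such Wick product the $(N-1)$-component factors attached to $\supp_p$ and those attached to $\supp_q$ cannot all be paired off internally, so at least one chain of $G$-factors must link a site carrying an odd power of $p$ to a site carrying an odd power of $q$. Bounding every other factor by $G(0,0)$ and using $G(z,z')\le C(1+|z-z'|)^{-(d-2)}\le C(1+|z-z'|)^{-1}$ together with the monotonicity and triangle inequality for $G$ (to collapse the intermediate, even-power sites), this chain contributes a single factor decaying in the distance between the odd sites, giving exactly the double sum in \eqref{lemma:eq1}. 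For \eqref{lemma:eq2} the argument is identical, the only change being that a factor from $(\nabla\phi)^{q'}$ contracts into a discrete gradient $\nabla^{e}G$ of the Green function, which decays at least as fast, and the relevant sites are those of $\supp_{q'}$.

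The hardest point, and the origin of the $\beta^\epsilon$ and $\log(1+|x-y|)$ factors, is to extract genuine power-law decay of the two-point function from the infrared bound: the latter controls only the Fourier transform $\widehat\Gamma(p)$ of the transverse two-point function pointwise, $0\le\widehat\Gamma(p)\le(2\beta\sum_k(1-\cos p_k))^{-1}$, with no regularity, so a direct inversion yields mere boundedness. I would recover the decay by a dyadic (Littlewood--Paley) decomposition of momentum space: the low modes $|p|\lesssim 1/|x-y|$ reproduce $G(x,y)$, while the oscillatory high modes are summed over the $\sim\log(1+|x-y|)$ dyadic scales, which is where the logarithm appears; the loss $\beta^\epsilon$ comes from handling the zero mode, the massive regularization, and the polarization constants uniformly in the degree. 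Once this Fourier estimate and the Gaussian domination are in hand, the remaining steps --- the parity and chain bookkeeping, the finiteness of the supports and of the number of Wick products, and the absorption of the nonlinear remainders --- are routine, and give the two stated inequalities.
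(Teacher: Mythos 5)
Your reduction to transverse spin monomials and the $CT^{\alpha}$ error terms are unproblematic, but both mechanisms you rely on for the decay fail, and they fail for the same reason: the infrared bound is one-sided. First, the polarization step. The bound $\lrangle{e^{(g,X)}}_h\le e^{\frac12(g,Gg)}$ for the rescaled transverse field $X=\sqrt\beta S^k$ is a Laplace-transform (equivalently, quadratic-form) domination, and it does \emph{not} imply that even mixed moments are bounded by Gaussian Wick products of $G(z_i,z_j)$. Polarization writes $\lrangle{X_{z_1}\cdots X_{z_{2m}}}$ in terms of moments of linear combinations $(g,X)$, and the bound it yields is governed by the \emph{diagonal} quantities $(g,Gg)$, i.e. by powers of $G(0,0)$; all off-diagonal decay is lost. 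Already for $m=1$: sub-Gaussianity of $X_0\pm X_y$ only gives $\pm 2\lrangle{X_0X_y}\le 2G(0,0)\pm 2G(0,y)-\lrangle{X_0^2}-\lrangle{X_y^2}$, and since the infrared bound provides no matching \emph{lower} bound on $\lrangle{X_0^2}$ (obtaining one at the accuracy needed would require the paper's main theorem, hence this very lemma — a circularity), no decay in $y$ follows. Equivalently, $\Gamma\preceq\beta^{-1}G$ as quadratic forms does not imply $\Gamma(0,y)\lesssim\beta^{-1}G(0,y)$ entrywise. Note also that if your Wick-domination step were valid it would yield transverse two-point decay $|x-y|^{-(d-2)}$, strictly stronger than what the paper itself can prove elementarily (Lemma~\ref{lem:UB_twoPts_funct_spin} only gets $\frac{c}{\beta}\frac{\log|x-y|}{|x-y|}$); that is a warning sign that no soft argument of this kind can suffice.

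Your fallback Fourier argument has the same defect: the hypotheses $0\le\hat\Gamma(p)\le C\beta^{-1}|p|^{-2}$ carry no regularity in $p$, and without derivative control on $\hat\Gamma$ a dyadic decomposition extracts no oscillation gain whatsoever. Concretely, $\hat\Gamma(p)=|p|^{-2}\mathds{1}_{\{\cos(p\cdot x_0)\ge 0\}}$ satisfies both hypotheses, yet its inverse transform at $x_0$ is bounded below by a positive constant uniformly in $|x_0|$ (the positive part of the cosine has average $1/\pi$, and $\int|p|^{-2}dp$ converges for $d\ge3$); so these two properties alone are compatible with no decay at all. The paper's actual proof replaces both of your steps: (i) the two-point decay is obtained by applying the infrared bound to the indicator of a segment $\{0,\rme_1,\dots,n\rme_1\}$, which bounds the \emph{summed} correlation $\sum_{k,l\le n}\lrangle{S^1_{k\rme_1}S^1_{l\rme_1}}$ by $\beta^{-1}\sum_{k,l}G(k\rme_1,l\rme_1)\le c\beta^{-1}n\log n$, and then invokes the reflection-positivity \emph{monotonicity} of $n\mapsto\lrangle{f_0f_{n\rme_1}}$ to convert the summed bound into the pointwise bound $n^{2}\lrangle{S^1_0S^1_{n\rme_1}}\le c\beta^{-1}n\log n$; this monotonicity input is exactly what substitutes for the Fourier regularity you lack, and it is absent from your proposal. (ii) For general odd monomials, instead of Wick contractions the paper conditions on everything but the signs of $\phi$, observes that the sign field is a ferromagnetic Ising model with a random-cluster representation, and uses the parity of $\normI{p}$ and $\normI{q}$ to force a percolation connection between the odd sites of $p$ and those of $q$, giving $\lrangle{\sigma_A\sigma_B}'\le\sum_{x\in A}\sum_{y\in B}\lrangle{\sigma_x\sigma_y}'$; Cauchy--Schwarz and Markov's inequality with the moment bounds of Theorem~\ref{thm:moments_u_theta} then produce the $\beta^{\epsilon}$ and $T^{\alpha}$ losses. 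Without these two ingredients your proof cannot be repaired along the lines you propose.
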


\begin{proof}[Proof of Lemma~\ref{lem:int_by_part_no_grad}]
Let us start with $L$ finite and $h>0$, and let us represent  $\langle\phi_{x_0} F\mathcal F \rangle_{L;h}$
as in \eqref{eq:12}. For any $0<m<1$, we rewrite the integral over $\phi$ in the numerator of \eqref{eq:12} (in the case of our observable of interest) as follows: 
$$  \int\limits_{[-\sqrt{\beta}\pi,\sqrt{\beta}\pi]^{\Lambda_L}}\!\!\!\!d\phi \, e^{-\frac12(\phi, (G_L^m)^{-1}\phi)+\frac{m^2}2\|\phi\|^2+W_{L;h}(\tilde u,\phi)}\phi_{x_0}F(\phi).$$
If we now integrate $\phi_{x_0}$ by parts with respect to the reference Gaussian weight $e^{-\frac12(\phi, (G_L^m)^{-1}\phi)}$, we obtain (denote \(\partial_y \equiv \frac{\partial}{\partial\phi_y}\)):
	\begin{align*}
		\lrangle{\phi_{x_0}F\calF}_{L;h} &= \sum_{y\in\mathrm{supp}_F}  G^m_L(x_0,y)\lrangle{\partial_{y}F\calF}_{L;h} + \sum_{z\in\Lambda_L} G^m_L(x_0,z)\lrangle{F \partial_{z}W_{L;h}\calF}_{L;h} \\
		&\quad + m^2\sum_{z\in\Lambda_L} G^m_L(x_0,z)\lrangle{\phi_z F\calF}_{L;h} 
		\\
		&\quad + \sum_{z\in\Lambda_L} G^m_L(x_0,z)\lrangle{(\delta(\phi_z+ \sqrt{\beta}\pi) - \delta(\phi_z- \sqrt{\beta}\pi)) F \calF}_{L;h}.
	\end{align*}
Now, taking \(L\to\infty\) for $h\in D^*$, and then $h\to 0^+$ along a sequence in $D^*$, all the terms in the 
two sides converge to their infinite volume, zero field, limits. Bounding the last term via the third item of Theorem~\ref{thm:moments_u_theta} and 
the Gaussian estimate~\eqref{eq:Gaussian_est:sum_of_G}, we obtain:
	\begin{align*}
		\lrangle{\phi_{x_0}F\calF} &= \sum_{y\in\mathrm{supp}_F}  G^m(x_0,y)\lrangle{\partial_{y}F\calF} + \sum_{z\in\mathbb Z^d} G^m(x_0,z)\lrangle{F \partial_{z}W\calF} \\
		&\quad + m^2\sum_{z\in\mathbb Z^d} G^m(x_0,z)\lrangle{\phi_z F\calF} + R_{\infty}(m,F\calF),
	\end{align*}
where in the second term in the right side we denoted by $W$ the formal $L\to\infty$, $h\to0$ limit of \eqref{eq:7}, and \(|R_{\infty}(m,F\calF)|\leq m^{-2}(C\beta)^{1/2+p+p'} e^{-c{\beta}}\) for some  \(c,C>0\) independent of $\beta$ and $m$ (as long as \(T<T_0\)). We denote the three other terms in the R.H.S. \(I,II,III\). Using~\eqref{eq:Gaussian_est:mass_removal_G} and the moment bounds 
in the first two items of Theorem~\ref{thm:moments_u_theta}, we find
	\begin{equation*}
		I = \sum_{y\in\mathrm{supp}_F}  G(x_0,y)\lrangle{\partial_{y}F\calF} + R_I(m,F\calF,x_0)
	\end{equation*} with \(|R_I(m,F\calF,x_0)|\leq C_{p+p'}m\) for some \(C_{p+p'}\) depending only on \(p+p'\). We then treat \(III\) when \(m=m_{\gamma}=T^{\gamma}\). Using Lemma~\ref{lem:setof_pt_to_setof_pt_decay}, one obtains that for any \(\epsilon>0\),
	\begin{equation*}
		|\lrangle{\phi_z F\calF}| \leq CT^{\gamma} + C\beta^{\epsilon/2}\sum_{y\in \mathrm{supp}_F} \frac{\log(1+|z-y|)}{1+|z-y|},
	\end{equation*}with \(C\) depending only on \(p+p',\epsilon,\gamma\). In particular, since \(\log(1+r)/(1+r)  \leq C_{\epsilon'}(1+r)^{\epsilon'-1}\) for any \(\epsilon'>0\), one has that, 
by choosing \(\epsilon'= \epsilon/(2\gamma)\) and using~\eqref{eq:Gaussian_est:sum_of_G} and~\eqref{eq:Gaussian_est:sum_of_G_gradG_with_power} (and remembering that we set \(m=T^{\gamma}\)),
	\begin{equation*}
		|III|\leq CT^{\gamma-\epsilon}
	\end{equation*}with \(C\) depending only on \(p+p',\epsilon,\gamma\). It remains to treat \(II\). We first compute
	\begin{align*}
		\sum_{z\in\mathbb Z^d} G^m(x_0,z) \partial_{z}W&= \sum_{z\in\mathbb Z^d} G^m(x_0,z) \partial_{z} \sum_{x\in\mathbb Z^d}\sum_{e}\Big[ \beta\rho_x\rho_{x+e}\cos(\sqrt{T}\nabla_x^{e}\phi) +\frac{1}{2}(\nabla_x^{e}\phi)^2\Big]\\
		&= \sum_{x\in\mathbb Z^d}\sum_{e} \nabla_x^{e}G^m(x_0,\cdot) \Big[ -\rho_x\rho_{x+e}\sqrt{\beta}\sin(\sqrt{T}\nabla_x^{e}\phi) +\nabla_x^{e}\phi\Big],
	\end{align*} where $\rho_x=\sqrt{1-T\|\tilde u_x\|^2}$. Now plugging in \(m=m_{\gamma} =T^{\gamma}\), and proceeding as in the proof of Lemma \ref{lem:tay} (i.e., in brief: Taylor expanding to order \(2\gamma-1\) both the \(\sin\) and the square root in the \(\rho_x\rho_{x+e}\) term, and using the bound on moments in Theorem~\ref{thm:moments_u_theta}
combined with Hölder's inequality), one finds that by the Gaussian estimate on the sum of gradients~\eqref{eq:Gaussian_est:sum_of_gradG},
	\begin{equation*}
		\Big|\sum_{k=1}^{2\gamma-1} T^{k} \sum_{x_1,\rme}\nabla_{x_1}^{\rme} G^{m_{\gamma}}(x_0,\cdot) \sum_{r+p'= k} c'_{r}\lrangle{F\calF (\nabla_{x_1}^{\rme}\phi)^{2r+1} \mathcal G^{p'}_{x_1,\rme}} - II \Big| \leq CT^{\gamma},
	\end{equation*}for some \(C\) depending only on \(\gamma+p+p'\), where \(c'_{r} =\frac{(-1)^{r+1}}{(2r+1)!}\), $\mathcal G^{p'}_{x,\rme}$ was defined after \eqref{eq:8}, and $\sum_{x_1}$ denotes summatin 
	over $\mathbb Z^d$. 
	Now,  one can use Lemma~\ref{lem:setof_pt_to_setof_pt_decay} combined with \eqref{eq:Gaussian_est:sum_of_gradG} and 
	\eqref{eq:Gaussian_est:sum_of_G_gradG_with_power} in the same fashion as before on the \(k\)s with \(n<k\leq 2\gamma-1\), to obtain that for any \(\epsilon>0\) there exists \(C\) depending on \(\epsilon\) and \(p+p'+\gamma\) only,
	\begin{equation*}
		\Big|\sum_{k=1}^{n} T^{k} \sum_{x_1,\rme}\nabla_{x_1}^{\rme} G^{m_{\gamma}}(x_0,\cdot) \sum_{r+p' = k} c'_{r}\lrangle{F\calF (\nabla_{x_1}^{\rme}\phi)^{2r+1} \mathcal G^{p'}_{x_1,\rme}} - II \Big| \leq CT^{n+1-\epsilon}.
	\end{equation*}Gathering all the above, and letting $m=m_\gamma=T^\gamma$ also in the error terms $R_\infty$ and $R_I$, gives the result.
\end{proof}

\begin{proof}[Proof of Lemma~\ref{lem:int_by_part_grad}]
	Write \(F_{p,q}= \phi^p (\nabla\phi)^{q}\). Starting as in the proof of Lemma~\ref{lem:int_by_part_no_grad}, for \(m=e^{-(\log T)^2}\),
	\begin{align*}
		\lrangle{(\nabla_{x_0}^{\rme'}\phi) F_{p,q} \calF} &= \sum_{y\in\mathrm{supp}_{F_{p,q}}}  \nabla_{x_0}^{\rme'}G^m(\cdot,y)\lrangle{\partial_{y}F_{p,q}\calF} + \sum_{z\in\mathbb Z^d} \nabla_{x_0}^{\rme'}G^m(\cdot,z)\lrangle{F_{p,q} \partial_{z}W\calF} \\
		&\quad + m^2\sum_{z\in\mathbb Z^d} \nabla_{x_0}^{\rme'}G^m(\cdot,z)\lrangle{\phi_z F_{p,q} \calF} + R_{\infty}(m,p,q,\calF,x_0)\\
		&=I+II+III+R_{\infty}(m,p,q,\calF,x_0),
	\end{align*}with \(|R_{\infty}(m,p,q,\calF,x_0)|\leq m^{-1} (C\beta)^{p'+\frac12(\normI{p}+\normI{q})}e^{-c{\beta} }\), for some \(C,c>0\) independent of $\beta$ and $m$. 
	Then, using~\eqref{eq:Gaussian_est:sum_of_gradG}, one has
	\begin{equation*}
		|III|\leq C_{p'+\normI{p}+\normI{q}} m.
	\end{equation*}Moreover, using~\eqref{eq:Gaussian_est:mass_removal_G} and Theorem~\ref{thm:moments_u_theta}, one has
	\begin{equation*}
	\big| I - \sum_{y\in\mathrm{supp}_{F_{p,q}}}  \nabla_{x_0}^{\rme'}G(\cdot,y)\lrangle{\partial_{y}F_{p,q}\calF} \big|\leq C_{p'+\normI{p}+\normI{q}} m.
	\end{equation*} To finish the proof of Lemma~\ref{lem:int_by_part_grad}, consider \(II\) and proceed as in the proof of Lemma~\ref{lem:int_by_part_no_grad} to obtain
	\begin{equation*}
		\Big|\sum_{k=1}^{\alpha} T^{k} \sum_{x_1,\rme}\nabla_{x_0,x_1}^{\rme',\rme} G^{m} \sum_{r+p' = k} c'_{r}\lrangle{F_{p,q}\calF (\nabla_{x_1}^{\rme}\phi)^{2r+1} \mathcal G^{p'}_{x_1,\rme}}- II \Big| \leq CT^{\alpha+1}(\log T)^2,
	\end{equation*}where we used~\eqref{eq:Gaussian_est:sum_of_gradgradG_general} and \(m=e^{-(\log T)^2}\), and \(C\) depends only on \(p'+\normI{p}+\normI{q}\) and \(\alpha\). We then apply Lemma~\ref{lem:setof_pt_to_setof_pt_decay} and~\eqref{eq:Gaussian_est:sum_of_gradgradG_with_power} to obtain that, for any \(r,p'\) with \(r+p'\leq \alpha\)
	and any $\epsilon>0$, 
	\begin{equation*}
		\label{eq:proof_of_lem_int_by_part_grad_eq1}
		\Big|\sum_{x_1,\rme}\nabla_{x_0,x_1}^{\rme',\rme} G^{m} \lrangle{F_{p,q}\calF (\nabla_{x_1}^{\rme}\phi)^{2r+1} \mathcal G^{p'}_{x_1,\rme} }\Big| \leq
 CT^{\alpha} + C\beta^{\epsilon} \sum_{x\in\mathrm{supp}_p}\sum_{y\in\mathrm{supp}_q\cup\{x_0\}} \frac{\log^2(1+|x-y|)}{1+|x-y|}
	\end{equation*}with \(C\) depending only on \(p'+\normI{p}+\normI{q}\), \(\alpha\), and \(\epsilon\), and $\sum_{x_1}$ denotes summation 
	over $\mathbb Z^d$.  This allows us to get rid of the terms with \(n+1\leq k\leq \alpha\). Combining all the previous estimates gives the lemma.
\end{proof}

\section{Inductive expansion}
\label{sec:Inductive_exp}
In this section we prove of our main results, Theorem \ref{thm:main} and Corollaries \ref{cor:magnet_second_order} and \ref{cor:2}, on the basis of an iterative application  of the integration-by-parts lemmas
stated and proved in Section \ref{sec:int_by_part} and of the moment bounds summarized in Theorem \ref{thm:moments_u_theta}. We start with the proof of Corollary \ref{cor:magnet_second_order}, which illustrates the general method in a simple, non-trivial, case. The general case is proved by induction, and discussed step-by-step in the following 
subsections. We recall that, whenever we write $\sum_{x_1}$, we always mean $\sum_{x_1\in\mathbb Z^d}$, and similarly for $\sum_{x_i}$, etc. As already mentioned above, we explicitly focus on the case $N>2$, the case $N=2$ being significantly simpler, in that the $u$ variables are not present; the interpretation and adaptation of the following discussion  
to the case $N=2$ is left to the reader. 

\subsection{Magnetization to second order}

We first present the proof of Corollary \ref{cor:magnet_second_order}. Shortcuts are taken as the full procedure will be described in the remaining of the section. First,
rewrite $S^N_0$ in terms of the coordinates $(\tilde u_0,\phi_0)$, so that 
$$\lrangle{S_0^N}=\lrangle{\sqrt{1-T\|\tilde u_0\|^2}\cos(\sqrt{T}\phi_0)}.$$
We expand the function under the average in the right hand side in Taylor series 
in $T$ up to second order and, applying Lemma \ref{lem:tay} (see also Remark \ref{rem:taylorr}), we obtain 
\begin{equation}\label{eq:6.1:1}
	\lrangle{S_0^N} = 1 -\frac{1}{2} T \lrangle{\phi_0^2} -\frac{1}{2} T \lrangle{\norm{\tilde{u}_0}^2} -\frac{1}{8}T^2 \lrangle{\norm{\tilde{u}_0}^4} + \frac{1}{4}T^2\lrangle{\norm{\tilde{u}_0}^2\phi_0^2}  + \frac{1}{24}T^2\lrangle{\phi_0^4} + O(T^3).
\end{equation}
Treat each term separately. First, using Lemma~\ref{lem:int_by_part_no_grad} with \(n=0\),
\begin{equation}\label{eq:6.1:2}\begin{split}
&	\lrangle{\phi_0^4} = 3G(0,0)\lrangle{\phi_0^2} + O(T^{1-\epsilon}) = 3G(0,0)^2 + O(T^{1-\epsilon}),\\
&	\lrangle{\norm{\tilde{u}_0}^2\phi_0^2} = G(0,0)\lrangle{\norm{\tilde{u}_0}^2} + O(T^{1-\epsilon}).
\end{split}\end{equation}
Then, using the leftover $O(N-1)$ symmetry of the model in the subspace orthogonal to the magnetic field, eq.\eqref{eq:remaining_symmetry_u_theta},  the moment bounds of Theorem~\ref{thm:moments_u_theta}, and applying Lemma~\ref{lem:int_by_part_no_grad} once more, we obtain
\begin{equation}\label{eq:6.1:3}\begin{split}
	\lrangle{\norm{\tilde{u}_0}^2} & = (N-2)\big(\lrangle{\phi_0^2} -\frac{T}{3}\lrangle{\phi_0^4} - T \lrangle{\norm{\tilde{u}_0}^2 \phi_0^2}\big) +O(T^2)\\
	& = (N-2)\big(\lrangle{\phi_0^2} -TG(0,0)^2 - T G(0,0)^2(N-2)\big) +O(T^{2-\epsilon}),
\end{split}\end{equation}
and
\begin{equation}\label{eq:6.1:4}\begin{split}
\lrangle{\norm{\tilde{u}_0}^4} & = (N-2)\lrangle{\phi_0^4} + (N-2)(N-3) \lrangle{(\tilde{u}_0^1)^2 \phi_0^2} + O(T)\\
	& =(N-2)3G(0,0)^2 + (N-2)(N-3) G(0,0)^2 + O(T^{1-\epsilon}).
\end{split}\end{equation}So far we used: (1) the leftover $O(N-1)$ symmetry in the directions orthogonal to the magnetic field, in order to generate \(\phi\)s when only \(\tilde{u}\)s remain; (2) Lemma~\ref{lem:int_by_part_no_grad}, but only in its ``trivial'' form (\(n=0\)). We then need to generate the expansion of \(\lrangle{\phi_0^2}\) to first order. This is where the controlled inductive procedure starts to show up. Using Lemma~\ref{lem:int_by_part_no_grad} with \(n=1\) (and \(\gamma=2\)), we obtain
\begin{equation}\begin{split}\label{def}
	 & \lrangle{\phi_0^2} = G(0,0) + O(T^{2-\epsilon})\\
	& \quad + T \sum_{x_1,e}\nabla_{x_1}^{\rme} G^{T^2}(0,\cdot)\Big(\frac{1}{3!} \lrangle{\phi_0 (\nabla_{x_1}^{e}\phi)^{3}} + \frac{1}{2} \lrangle{\phi_0 \nabla_{x_1}^{e}\phi (\norm{\tilde{u}_{x_1}}^{2}  +  \norm{\tilde{u}_{x_1+e}}^{2}) }\Big).
\end{split}\end{equation}One can then use Lemma~\ref{lem:int_by_part_grad} with \(n=0\) on the terms inside the sum to obtain
\begin{equation}\label{eq:35}\begin{split}
&	\lrangle{\phi_0 (\nabla_{x_1}^{e}\phi)^3} =  \nabla_{x_1}^{e}G(\cdot,0)\lrangle{ (\nabla_{x_1}^{e}\phi)^2} + 2\nabla^{e,e}_{x_1,x_1}G
\lrangle{\phi_0 \nabla_{x_1}^{e}\phi}+ R_{1}(x_1)\\
&	\lrangle{\phi_0 \nabla_{x_1}^{e}\phi ( \norm{\tilde{u}_{x_1}}^{2}+\norm{\tilde{u}_{x_1+e}}^{2})} =  \nabla_{x_1}^{e}G(\cdot,0)\big(\lrangle{ \norm{\tilde{u}_{x_1}}^{2}}
+\lrangle{\norm{\tilde{u}_{x_1+e}}^{2}}\big) + R_{2}(x_1)
\end{split}\end{equation}with \( R_{i}(x_1)\) satisfying, for $i=1,2$,  
\begin{equation}\label{eq:Ri}
	|R_{i}(x_1)|\leq CT^{3} + CT^{1-\epsilon} \frac{\log^2(1+|x_1|)}{1+|x_1|}.
\end{equation} Using again Lemmas~\ref{lem:int_by_part_no_grad} and \ref{lem:int_by_part_grad} 
with $n=0$ and the residual $O(N-1)$ symmetry in the directions orthogonal to the magnetic field, 
in \eqref{eq:35} we can rewrite
\begin{equation}\label{eq:36}\begin{split}
& \lrangle{ (\nabla_{x_1}^{e}\phi)^2} =\nabla^{e,e}_{x_1,x_1}G+O(T^{1-\epsilon}), \\
& \lrangle{\phi_0 \nabla_{x_1}^{e}\phi} = \nabla^{e}_{x_1}G(0,\cdot)+R_3(x_1),\\
& \lrangle{\norm{\tilde{u}_{x_1}}^{2}}= \lrangle{\norm{\tilde{u}_{x_1+e}}^{2}}=(N-2)G(0,0)+O(T^{1-\epsilon}),
\end{split}
\end{equation}
with $R_3(x_1)$ bounded as in \eqref{eq:Ri}. Plugging \eqref{eq:36} in \eqref{eq:35}, and then the resulting expressions in \eqref{def}, we find that, thanks to \eqref{eq:Gaussian_est:sum_of_gradG} and \eqref{eq:Gaussian_est:sum_of_G_gradG_with_power}, 
the contributions from the terms involving the error terms $R_i(x_1)$ and $O(T^{1-\epsilon})$ to the second line of \eqref{def} can all bounded by $CT^{2-\epsilon}$, for any $\epsilon>0$ and some $C>0$ depending on $\epsilon$, but independent of $T$. Therefore, we have 
\begin{equation}\begin{split}\label{def2}
	 & \lrangle{\phi_0^2} = G(0,0) + O(T^{2-\epsilon})\\
	& \quad + T \sum_{x_1,e}\nabla_{x_1}^{e} G^{T^2}(0,\cdot)\Big(\frac{1}{2} \nabla_{x_1}^{e}G(\cdot,0)\nabla^{e,e}_{x_1,x_1}G
	+(N-2) \nabla_{x_1}^{e}G(\cdot,0) G(0,0)\Big).
\end{split}\end{equation}
Finally, noting that, for any $\epsilon>0$,  $\big|\nabla_{x_1}^{e} G^{T^2}(0,\cdot)-\nabla_{x_1}^{e} G(0,\cdot)\big|\le C T^{2-2\frac{1+\epsilon}{d-1}}\frac1{(1+|x_1|)^{1+\epsilon}}$ 
(which follows from the second bound in Theorem \ref{thm:Gaussian_asymptotics} and from the use of Lemma \ref{lem:decay_and_mass_removal_implies_massDecay} 
with $m=T^2$, $\gamma=d-1$ and $\gamma'=1+\epsilon$), 
we can remove the mass \(T^2\) from the massive propagator up to an additional error of order $O(T^{2+\frac{d-3}{d-1}-\frac{2\epsilon}{d-1}})$. 
In conclusion, for any $\epsilon>0$,
\begin{equation*}\begin{split}
	\lrangle{S_0^N} & = 1 -\frac{T}2 (N-1)  G(0,0)
	+ \frac{T^2}{8}(N-1)(3N-5)G(0,0)^2 \\
	& - \frac{T^2}{2}(N-1)\sum_{x_1,e}\big(\nabla_{x_1}^{e} G(0,\cdot)\big)^2\Big(\frac{1}{2} \nabla_{x_1,x_1}^{e,e}G + (N-2)G(0,0)\Big)+O(T^{3-\epsilon}),
\end{split}\end{equation*}
which concludes the proof of Corollary \ref{cor:magnet_second_order}. 

\medskip

In the incoming subsections we will apply the same ideas used here in order to compute at a generic order in the low temperature expansion the correlation functions of the model. 
More precisely, we will inductively prove that, for any $p\in\mathfrak P$, $\tilde p\in\tilde{\mathfrak P}$, $n\ge 0$, $\epsilon>0$,
\begin{equation}\label{eq6.2.2}\lrangle{\phi^p\tilde u^{\tilde p}}=\lrangle{\phi^p\tilde u^{\tilde p}}^{(n)}+O(T^{n+1-\epsilon}),\end{equation}
where $\lrangle{\ }^{(n)}$ was defined in Definition \ref{def:3.1}. In view of Lemma \ref{lem:tay}, \eqref{eq6.2.2} implies our main result, Theorem \ref{thm:main}, with an explicit 
bound on the remainder and an explicit procedure for computing the coefficients $a_i$.

\subsection{Inductive hypotheses}

\begin{definition}[Inductive hypotheses \(\calH_{M,K}\)]
	We say that \(\calH_{M,K}\) holds if for any \(0<\epsilon<1\), and any integer \(\alpha> M\), the two following hold:
	\begin{itemize}
		\item[\(\calH_{M,K}^1\):] There exists \(C=C(\epsilon, K, M)\) such that: for any integer \(0\leq n \leq M\), \(p:\Zd\to \Z_+\) even, \(\tilde{p}:\Zd\times[N-2]\) even with \(\frac{\normI{\tilde{p}}+\normI{p}}{2}\leq K\),
		\begin{equation}
			\label{eq:induc_hyp:no_grad}
			\big| \lrangle{\phi^p \tilde{u}^{\tilde{p}}} - \lrangle{\phi^p \tilde{u}^{\tilde{p}}}^{(n)} \big|\leq CT^{n+1-\epsilon}.
		\end{equation}
		\item[\(\calH_{M,K}^2\):] There exists \(C'=C'(\epsilon,K,M,\alpha)\) and \(c_K\) such that: for any integer \(0\leq n \leq M\), even \(\tilde{p}:\Zd\times[N-2]\to \Z_+\), odd \(p:\Zd\to\Z_+\), and odd \(q:\bbE_d\to\Z_+\) with \(\frac{\normI{\tilde{p}}+\normI{p}+\normI{q}}{2}\leq K\),
		\begin{multline}
			\label{eq:induc_hyp:grad}
			\big| \lrangle{\phi^p \tilde{u}^{\tilde{p}} (\nabla\phi)^q} - \lrangle{\phi^p \tilde{u}^{\tilde{p}} (\nabla\phi)^q}^{(n)} \big| \leq\\
			\leq C'T^{\alpha} + C'T^{n+1-\epsilon}\sum_{x\in \supp_p}\sum_{y\in\supp_q} \frac{\log^{2+n}(1+|x-y|)}{1+|x-y|}.
		\end{multline}
	\end{itemize}
	We say that \(\calH_{M,\infty}\) holds if \(\calH_{M,K}\) holds for every \(K\geq 0\), and that \(\calH_{\infty,K}\) holds if \(\calH_{M,K}\) holds for every \(M\geq 0\).
\end{definition}

\begin{remark}
	By linearity, \(\calH_{M,K}\) implies a control over polynomials in \(\tilde{u},\phi,\nabla\phi\). The constants in the error terms depend then on the coefficients of the polynomial.
\end{remark}

\begin{remark}\label{Hinfty0}
	We automatically have the validity of \(\calH_{\infty,0}\).
\end{remark}

To lighten notation, we will not recall the domain and image of \(p,q,\tilde{p}\). The main result of this section is
\begin{theorem}
	\label{thm:main_induction}
	\(\calH_{M,K}\) holds for any \(M,K\geq 0\).
\end{theorem}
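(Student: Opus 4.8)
The plan is to prove Theorem~\ref{thm:main_induction} by a double induction, with an outer induction on the truncation order $M$ and an inner induction on the degree parameter $K$, using Remark~\ref{Hinfty0} (the validity of $\mathcal H_{\infty,0}$) as the base case. The engine of both inductions is the comparison, term by term, of the two integration-by-parts lemmas of Section~\ref{sec:int_by_part} (which act on the true correlations $\lrangle{\ }$) with their \emph{formal} counterparts satisfied by the operator $\lrangle{\ }^{(n)}$ of Definition~\ref{def:3.1}.

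First I would record that $\lrangle{\ }^{(n)}$ obeys \emph{exact} integration-by-parts identities mirroring Lemmas~\ref{lem:int_by_part_no_grad} and~\ref{lem:int_by_part_grad}, but with the massless Green's function $G$ in place of $G^{m_\gamma}$ and with no remainder term. Concretely, for $F$ a monomial of odd $\phi$-degree and $\calF$ a monomial in $\tilde u$,
\[
\lrangle{\phi_{x_0}F\calF}^{(n)} = \sum_{y\in\supp_F} G(x_0,y)\lrangle{\partial_y F\calF}^{(n)} + \mathds 1_{n\ge1}\sum_{k=1}^n T^k\sum_{x_1,e}\nabla_{x_1}^{e}G(x_0,\cdot)\sum_{r+p'=k}c'_r\lrangle{F\calF(\nabla_{x_1}^{e}\phi)^{2r+1}\calG^{p'}_{x_1,e}}^{(n-k)},
\]
and likewise in the gradient case. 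This follows directly from the extraction formula~\eqref{eq:phi_contrib_extraction}, which represents $\lrangle{\ }^{(n)}$ through the connected Gaussian correlations $\Phi(\phi^p;\cdots)$: these satisfy the standard Gaussian integration by parts for truncated expectations, so that extracting a single $\phi_{x_0}$ reproduces exactly the two terms above, absolute summability being guaranteed by Lemma~\ref{app:lem:sum_of_connected_corr}.

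With both identities at hand, I would subtract them and propagate the induction. To prove $\mathcal H_{M,K}^1$ for $\|p\|_1>0$ I extract one $\phi_{x_0}$: the contraction term $\sum_y G(x_0,y)\bigl(\lrangle{\partial_y F\calF}-\lrangle{\partial_y F\calF}^{(n)}\bigr)$ has degree lowered by two and is controlled by the inner hypothesis $\mathcal H_{M,K-1}^1$, while the perturbative term, carrying an explicit $T^k$ with $k\ge1$, only requires the gradient correlations at order $n-k\le M-1$ and \emph{arbitrary} degree, which is exactly $\mathcal H_{M-1,\infty}^2$ supplied by the outer induction (this is precisely why the outer step must furnish all values of $K$ at once). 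The discrepancy $\nabla G^{m_\gamma}-\nabla G$ between massive and massless propagators is absorbed into the remainder via Theorem~\ref{thm:Gaussian_asymptotics} and Lemma~\ref{lem:decay_and_mass_removal_implies_massDecay} with $\gamma$ chosen large, and the explicit remainder of Lemma~\ref{lem:int_by_part_no_grad} is already $O(T^{n+1-\epsilon})$. When $\|p\|_1=0$ I first convert $\tilde u$'s into $\phi$'s using the residual $O(N-1)$ symmetry~\eqref{eq:remaining_symmetry_u_theta} on the true side and~\eqref{eq:symm2def} on the formal side; since the two are applied identically they match up to higher-order-in-$T$ corrections handled by lower $M$, and one then proceeds as above. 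The proof of $\mathcal H_{M,K}^2$ is analogous, extracting instead a gradient factor through Lemma~\ref{lem:int_by_part_grad}. The base cases are $\mathcal H_{\infty,0}$ (Remark~\ref{Hinfty0}) and, for every $M$, the order $n=0$, where the indicator $\mathds 1_{n\ge1}$ suppresses the perturbative term so that pure contraction reduces $K$ to $K-1$.

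The main obstacle, and the step I expect to require the most care, is the summation over the internal coordinates $x_1,\dots,x_k$ of the perturbative and remainder terms. Because the correlations decay only polynomially, with the slow tails furnished by Lemma~\ref{lem:setof_pt_to_setof_pt_decay} for the true correlations and by Lemmas~\ref{lem:Taylor_n_a_priori_decay_no_grad}--\ref{lem:Taylor_n_a_priori_decay_grad} for the formal ones, one must convolve them against (gradient) Green's functions over $\Zd$ using the Gaussian estimates of Appendix~\ref{app:B}, keeping precise track both of the accumulating powers of $\log(1+|x-y|)$, which is what forces the exponent $2+n$ in $\mathcal H_{M,K}^2$, and of the $\beta^{\epsilon}=T^{-\epsilon}$ factors, so that the net bound remains $O(T^{n+1-\epsilon})$ with the correct spatial decay. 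Carrying out this bookkeeping through the nested recursion, in which each perturbative term reopens a gradient correlation that must itself be integrated by parts, is the technical heart of the argument and the principal point where the method departs from~\cite{Bricmont+Fontaine+Lebowitz+Lieb+Spencer-1980}.
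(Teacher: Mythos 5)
Your proposal is correct and follows essentially the same route as the paper: a double induction (outer in $M$, inner in $K$) seeded by $\calH_{\infty,0}$ and $\calH_{0,\infty}$, whose engine is precisely the paper's comparison of the integration-by-parts Lemmas~\ref{lem:int_by_part_no_grad} and~\ref{lem:int_by_part_grad} with the exact, remainder-free identity satisfied by $\lrangle{\ }^{(n)}$ (established in the paper via the claim \eqref{eq:trunc_grad_induction} and the resummation leading to \eqref{eq:proof_induction_step_eq2}), with the same allocation of hypotheses ($\calH^1_{M+1,K}$/$\calH^2_{M+1,K}$ for contraction terms, $\calH^2_{M,\infty}$ for the perturbative $T^k$ terms), the same treatment of the $p=0$ case by the residual $O(N-1)$ symmetry, and the same mass-removal and coordinate-summation estimates from Appendix~\ref{app:B}. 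You also correctly identify both why $\calH^2$ must be carried along and why the outer step must supply all degrees $K$ at once, which are the two structural points on which the paper's induction hinges.
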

The proof of Theorem~\ref{thm:main_induction} will occupy the remainder of this section. Note that what we are in fact only interested in is the validity of \(\calH_{M,K}^1\). \(\calH_{M,K}^2\) is only needed for technical reasons (which will be highlighted when proving the induction step for \(\calH_{M,K}^1\)).

\subsection{Initiating the procedure}

\begin{lemma}
	\label{lem:induction:0_K_to_0_Kplus1}
	Suppose \(\calH_{0,K}\) holds. Then, \(\calH_{0,K+1}\) holds.
\end{lemma}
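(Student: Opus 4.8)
The plan is to establish the two components $\calH^1_{0,K+1}$ and $\calH^2_{0,K+1}$ separately, in each case stripping a single field off the correlation via the integration-by-parts Lemmas~\ref{lem:int_by_part_no_grad} and~\ref{lem:int_by_part_grad} taken at $n=0$ (so that the perturbation sums $\mathds 1_{n\ge 1}(\cdots)$ are switched off), thereby reducing every expectation of total order $K+1$ to expectations of order $K$ that are controlled by the hypothesis $\calH_{0,K}$. The leading (contraction) terms produced by the two lemmas will be matched against $\lrangle{\,\cdot\,}^{(0)}$ directly from Definition~\ref{def:3.1}. Since only order-$K$ correlations re-enter, the argument is not circular; it is enough to treat tuples with $(\normI p+\normI{\tilde p})/2=K+1$, resp.\ $(\normI p+\normI q+\normI{\tilde p})/2=K+1$, the smaller orders being covered by $\calH_{0,K}$.

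For $\calH^1_{0,K+1}$ I would distinguish two cases. If $p\neq 0$, pick $x_0\in\supp_p$, write $\phi^p=\phi_{x_0}\phi^{p-\delta_{x_0}}$, and apply Lemma~\ref{lem:int_by_part_no_grad} with $F=\phi^{p-\delta_{x_0}}$, $\calF=\tilde u^{\tilde p}$, $n=0$, yielding $\lrangle{\phi^p\tilde u^{\tilde p}}=\sum_{y\in\supp_F}G(x_0,y)\lrangle{\partial_y(\phi^{p-\delta_{x_0}})\tilde u^{\tilde p}}+O(T^{1-\epsilon})$. Each correlation on the right is even/even of order $K$, so $\calH^1_{0,K}$ replaces it by its $\lrangle{\,\cdot\,}^{(0)}$ counterpart up to $O(T^{1-\epsilon})$; since $\lrangle{\phi^p\tilde u^{\tilde p}}^{(0)}=\Phi(\phi^p)\lrangle{\tilde u^{\tilde p}}^{(0)}$ by \eqref{eq:phi_contrib_extraction} and $\Phi(\phi^p)=\sum_y G(x_0,y)\Phi(\partial_y\phi^{p-\delta_{x_0}})$ by Wick's rule, the leading terms coincide. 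If $p=0$ (so $\tilde p\neq 0$), I would first use the residual symmetry: by permutation invariance of the transverse components I may assume $\tilde p^1\neq 0$, and \eqref{eq:remaining_symmetry_u_theta} gives $\lrangle{\tilde u^{\tilde p}}=\lrangle{\mathfrak F^{\tilde p^1}\tilde u^{\tilde p^{\ge 2}}}$ with $\mathfrak F_x=\sqrt{\beta(1-T\norm{\tilde u_x}^2)}\sin(\sqrt T\phi_x)=\phi_x+O(T)$. A first-order Taylor expansion of $\mathfrak F^{\tilde p^1}$, whose remainder monomials are bounded by the moment estimates of Theorem~\ref{thm:moments_u_theta} through H\"older's inequality, gives $\lrangle{\tilde u^{\tilde p}}=\lrangle{\phi^{\tilde p^1}\tilde u^{\tilde p^{\ge 2}}}+O(T)$; the surviving correlation has nonzero $\phi$-part and is handled by the previous case, while $\lrangle{\tilde u^{\tilde p}}^{(0)}=\lrangle{\phi^{\tilde p^1}\tilde u^{\tilde p^{\ge 2}}}^{(0)}$ by \eqref{eq:symm2def}.

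For $\calH^2_{0,K+1}$, take $p,q$ odd and $\tilde p$ even of order $K+1$, choose an edge $(x_0,e')$ with $q_{x_0,e'}$ odd, set $q'=q-\delta_{(x_0,e')}$ (even), and apply Lemma~\ref{lem:int_by_part_grad} at $n=0$ to $\lrangle{(\nabla^{e'}_{x_0}\phi)\,\phi^p(\nabla\phi)^{q'}\tilde u^{\tilde p}}$. This gives $\lrangle{\phi^p(\nabla\phi)^q\tilde u^{\tilde p}}=\sum_{y\in\supp_p\cup\supp_{q'}}\nabla^{e'}_{x_0}G(\cdot,y)\lrangle{\partial_y(\phi^p(\nabla\phi)^{q'})\tilde u^{\tilde p}}+R'_1$, where $R'_1$ already has the shape of the right-hand side of \eqref{eq:induc_hyp:grad} at $n=0$ (note $\supp_{q'}\cup\{x_0\}\subseteq\supp_q$). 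When $\partial_y$ hits $\phi^p$ one has $y\in\supp_p$ and an even/even correlation carrying the even gradient factor $(\nabla\phi)^{q'}$, which I would expand into $\phi$-monomials and control by $\calH^1_{0,K}$ (flat error $O(T^{1-\epsilon})$); when $\partial_y$ hits $(\nabla\phi)^{q'}$ one has $y\in\supp_{q'}$ and an odd/odd correlation $\lrangle{\phi^p(\nabla\phi)^{q''}\tilde u^{\tilde p}}$ of order $K$, controlled by $\calH^2_{0,K}$. Summing these errors against $\nabla^{e'}_{x_0}G(\cdot,y)$ over the finite sets $\supp_p,\supp_{q'}$, bounding $|\nabla^{e'}_{x_0}G(\cdot,y)|$ by $C(1+|x_0-y|)^{-1}$ in the first case and by a constant in the second, and using $\supp_{q'},\supp_{q''}\subseteq\supp_q$, regenerates the spatial sum $\sum_{x\in\supp_p}\sum_{y\in\supp_q}\log^2(1+|x-y|)/(1+|x-y|)$ up to an order-independent constant. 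The leading contraction term matches $\lrangle{\phi^p(\nabla\phi)^q\tilde u^{\tilde p}}^{(0)}$ exactly as in the non-gradient case, via \eqref{eq:phi_contrib_extraction} at $n=0$ (after expanding gradients into $\phi$'s) and Wick's rule.

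The main obstacle I anticipate is the bookkeeping in the gradient case: one must check that feeding the spatially-decaying bound of $\calH^2_{0,K}$ back through the convolution with $\nabla^{e'}_{x_0}G(\cdot,y)$ reproduces \emph{precisely} the $\supp_p\times\supp_q$ weight required by \eqref{eq:induc_hyp:grad}, which forces one to track carefully to which support each index belongs and that the differentiated supports $\supp_{q'},\supp_{q''}$ stay inside $\supp_q$. Everything else is routine: at $n=0$ all the genuinely delicate features of the integration-by-parts lemmas — the perturbation sums and the summation over intermediate coordinates weighted by connected Gaussian correlations — are inactive, so this lemma really only requires matching leading terms together with the elementary remainder bounds and the a priori moment estimates of Theorem~\ref{thm:moments_u_theta}.
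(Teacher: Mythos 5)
Your proposal is correct and follows essentially the same route as the paper's proof: both parts are handled by the integration-by-parts Lemmas~\ref{lem:int_by_part_no_grad} and~\ref{lem:int_by_part_grad} at $n=0$, the derivative terms are split according to whether $\partial_y$ hits $\phi^p$ (controlled by $\calH^1_{0,K}$, after expanding the even gradient factor into $\phi$-monomials) or the even gradient part (controlled by $\calH^2_{0,K}$, with the double gradient $\nabla^{e_0,e}_{x_0,z}G$ uniformly bounded), the $p=0$ case is reduced via the residual $O(N-1)$ symmetry and the Taylor/moment estimates of Lemma~\ref{lem:tay}, and the leading terms are matched to $\lrangle{\cdot}^{(0)}$ through the extraction formula \eqref{eq:phi_contrib_extraction} and Gaussian integration by parts. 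The only (immaterial) difference is that you treat $\calH^1_{0,K+1}$ before $\calH^2_{0,K+1}$, whereas the paper does the reverse; since each part relies only on $\calH_{0,K}$, the ordering does not matter.
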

\begin{proof}
	Suppose \(\calH_{0,K}\) holds. Let \(0<\epsilon<1,\alpha>0\). We prove \(\calH_{0,K+1}^{2}\) and then \(\calH_{0,K+1}^{1}\).
	\subsubsection*{\(\calH_{0,K+1}^2\) part.} Let \(p\) be odd, \(q=\mathds{1}_{(x_0,e_0)} + \tilde{q}\) with \(\tilde{q}\) even, and \(\tilde{p}\) even with \(\frac{\normI{\tilde{p}}+\normI{p}+\normI{q}}{2}\leq K+1\). Using Lemma~\ref{lem:int_by_part_grad} with $n=0$, one obtains
	\begin{multline*}
		\big|\lrangle{\phi^p (\nabla\phi)^q \tilde{u}^{\tilde{p}}} -\sum_{y\in \supp_p \cup\supp_{\tilde{q}}}\nabla_{x_0}^{e_0} G(\cdot, y)\lrangle{\partial_{y} (\phi^p (\nabla\phi)^{\tilde{q}}) \tilde{u}^{\tilde{p}} } \big|\leq \\
		\leq CT^{\alpha} + CT^{1-\epsilon}\sum_{x\in\supp_p}\sum_{y\in\supp_q} \frac{\log^2(1+|x-y|)}{1+|x-y|}
	\end{multline*} for some \(C=C(\epsilon,\alpha,K+1)\). We can then use \(\calH_{0,K}^1\) to obtain
	\begin{equation*}
		|\lrangle{ (\nabla\phi)^{\tilde{q}} \partial_y \phi^p  \tilde{u}^{\tilde{p}} } - \lrangle{(\nabla\phi)^{\tilde{q}} \partial_y \phi^p  \tilde{u}^{\tilde{p}} }^{(0)}|\leq C' T^{1-\epsilon}
	\end{equation*}for some \(C'=C'(\epsilon,K)\). So,
	\begin{multline*}
		\big|\sum_{y\in \supp_p }\nabla_{x_0}^{e_0} G(\cdot, y)\big(\lrangle{(\nabla\phi)^{\tilde{q}} \partial_{y} \phi^p  \tilde{u}^{\tilde{p}} } - \lrangle{(\nabla\phi)^{\tilde{q}} \partial_{y} \phi^p  \tilde{u}^{\tilde{p}} }^{(0)} \big) \big|\leq \\
		\leq C'T^{1-\epsilon}\sum_{y\in \supp_p } \frac{c}{(1+|x_0-y|)^{d-1}},
	\end{multline*}where we used the decay of \(\nabla_{x_0}^eG(\cdot, y)\), see the second inequality in Theorem \ref{thm:Gaussian_asymptotics}. Now,
	\begin{equation*}
		\sum_{y}\nabla_{x_0}^{e_0} G(\cdot, y) \lrangle{ \phi^p \partial_{y}(\nabla\phi)^{\tilde{q}}  \tilde{u}^{\tilde{p}} } = \sum_{z,e}\tilde{q}_{z,e} \nabla_{x_0,z}^{e_0,e}G \lrangle{ \phi^p (\nabla\phi)^{\tilde{q}-\mathds{1}_{(z,e)}}  \tilde{u}^{\tilde{p}} }.
	\end{equation*}We can then use \(\calH_{0,K}^2\), the uniform boundedness of \(|\nabla_{x_0,z}^{e_0,e}G|\), and the previous estimates to obtain
	\begin{multline*}
		|\lrangle{\phi^p (\nabla\phi)^q \tilde{u}^{\tilde{p}}} -\sum_{y\in \supp_p \cup\supp_{\tilde{q}}}\nabla_{x_0}^{e_0} G(\cdot, y)\lrangle{\partial_{y} (\phi^p (\nabla\phi)^{\tilde{q}}) \tilde{u}^{\tilde{p}} }^{(0)} |\leq \\
		\leq C''T^{\alpha} + C''T^{1-\epsilon}\sum_{x\in\supp_p}\sum_{y\in\supp_q} \frac{\log^2(1+|x-y|)}{1+|x-y|},
	\end{multline*}with \(C''=C''(\epsilon,\alpha,K+1)\). Finally, using the extraction property~\eqref{eq:phi_contrib_extraction},
	\begin{align*}
		\lrangle{\phi^p (\nabla\phi)^q \tilde{u}^{\tilde{p}}}^{(0)} &= \Phi(\phi^p (\nabla\phi)^q) \lrangle{ \tilde{u}^{\tilde{p}}}^{(0)}\\
		&= \sum_y \nabla_{x_0}^{e_0} G(\cdot, y) \Phi\big(\partial_y(\phi^p (\nabla\phi)^{\tilde{q}})\big) \lrangle{ \tilde{u}^{\tilde{p}}}^{(0)}\\
		&= \sum_y \nabla_{x_0}^{e_0} G(\cdot, y) \lrangle{\partial_y(\phi^p (\nabla\phi)^{\tilde{q}}) \tilde{u}^{\tilde{p}}}^{(0)}
	\end{align*}where we used Gaussian integration by part in the second line. This concludes the proof of \(\calH_{0,K+1}^2\).

	\subsubsection*{\(\calH_{0,K+1}^1\) part.} Start with the case \(p\neq 0\). Then, \(p=\mathds{1}_{x_0} + p'\) with \(p'\) odd. Using Lemma~\ref{lem:int_by_part_no_grad} with $n=0$,
	\begin{equation*}
		\big| \lrangle{\phi^p \tilde{u}^{\tilde{p}}} - \sum_y G(x_0,y)\lrangle{\partial_y \phi^{p'} \tilde{u}^{\tilde{p}}} \big|\leq CT^{1-\epsilon}
	\end{equation*}with \(C=C(\epsilon,K+1)\). Using then \(\calH_{0,K}^1\) and the extraction property~\eqref{eq:phi_contrib_extraction} as before, one obtains the wanted claim. It remains to treat the \(p=0\) case. In that case, there exists \(k\in\{1,\cdots, N-2\}\) such that \(\tilde{p}^k\) is non-zero (and even). Denote 
\(\bar{p} = (\tilde p^1,\ldots,\tilde p^{k-1},0,\tilde p^{k+1},\ldots,\tilde p^{N-2})\). Using the residual $O(N-1)$ symmetry in the directions $\{1,\ldots,N-1\}$, eq.\eqref{eq:symmdef}, we get 
\begin{equation}
		\lrangle{ \tilde u^{\tilde p}} = \beta^{\|{\tilde{p}^k}\|/2 }\lrangle{\prod_{x}\Big(\sqrt{1-T\|\tilde u_x\|^2}\sin\big(\sqrt{T}\phi_x\big)\Big)^{\tilde p^k_x} \tilde{u}^{\bar{p}}}. 
		\end{equation}
Now, we apply Lemma \ref{lem:tay} with $n=\|{\tilde{p}^k}\|/2$ to the right hand side, thus getting 
$$\lrangle{ \tilde u^{\tilde p}} = \lrangle{\phi^{\tilde p^k} \tilde{u}^{\bar{p}}} + R_1(\tilde{p}),$$
with \(|R_1(\tilde{p})| \leq CT\), \(C=C(K+1)\). The rest is the same as in the \(p\neq 0\) case. This conclude the proof of \(\calH_{0,K+1}^1\).
\end{proof}

\begin{remark}\label{H0infty}
From Lemma \ref{lem:induction:0_K_to_0_Kplus1} and the validity of $\mathcal H_{0,0}$, one obtains the validity of \(\calH_{0,\infty}\).
\end{remark}

\subsection{Induction step}

To keep notation readable, we introduce the shorthands
\begin{equation*}
	f_i \equiv (\nabla_{x_i}^{e_i} \phi)^{2r_i +2},\quad g_i \equiv \calG_{x_i,e_i}^{p_i'},\quad f_A = \prod_{i\in A} f_i,\quad g_A = \prod_{i\in A} g_i.
\end{equation*}

We will use a simple identity on connected correlation of Gaussian:
\begin{claim}
	For any \(F\) monomial in \(\phi\) of odd degree,
	\begin{multline}
		\label{eq:trunc_grad_induction}
		\Phi(\phi_zF; f_1; \cdots ;f_k) = \sum_{y} G(z,y) \Phi(\partial_y F; f_1;\cdots; f_k ) + \\ +\sum_{i=1}^{k} (2r_i + 2) \nabla_{x_i}^{e_i}G(z,\cdot) \Phi\big(F (\nabla_{x_i}^{e_i}\phi)^{2r_i+1}; f_1;\cdots;f_{i-1};f_{i+1};\cdots; f_k \big).
	\end{multline}
\end{claim}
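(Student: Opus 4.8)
The plan is to prove the identity purely algebraically, by combining the cumulant expansion \eqref{Urs2} of the truncated correlation with the \emph{scalar} Gaussian integration-by-parts formula $\Phi(\phi_z X)=\sum_y G(z,y)\,\Phi(\partial_y X)$, valid for every polynomial $X$ (here $\partial_y\equiv\partial/\partial\phi_y$; the sum over $y$ is finite since $X$ is a local monomial, and $G$ is finite in $d\ge 3$). I would work directly with \eqref{Urs2} rather than through a moment-generating functional, because coupling an auxiliary source to $\phi_z$ would naturally produce the fully separated cumulant $\Phi(\phi_z;F;f_1;\cdots;f_k)$, whereas we want $\phi_z$ merged into the slot-$0$ observable. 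Concretely, label the slots $\{0,1,\dots,k\}$ by $A_0=\phi_zF$ and $A_i=f_i$, and write $\Phi(\phi_zF;f_1;\cdots;f_k)=\sum_{\pi}(-1)^{|\pi|-1}(|\pi|-1)!\prod_{Y\in\pi}\Phi(A_Y)$, the sum running over partitions of $\{0,\dots,k\}$. In each partition only the block $Y_0\ni 0$ carries the factor $\phi_z$, so the first step is to apply the scalar identity to $\Phi(A_{Y_0})=\Phi(\phi_z\,F\prod_{j\in Y_0\setminus\{0\}}f_j)$ and to expand $\partial_y$ by the product rule.

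The key elementary computation is that for a single gradient power $\sum_y G(z,y)\,\partial_y f_j=(2r_j+2)\,\nabla_{x_j}^{e_j}G(z,\cdot)\,(\nabla_{x_j}^{e_j}\phi)^{2r_j+1}$, which follows from $\partial_y(\nabla_{x_j}^{e_j}\phi)=\mathds 1_{y=x_j+e_j}-\mathds 1_{y=x_j}$. Applying the product rule to $F\prod_{j\in Y_0\setminus\{0\}}f_j$ thus splits $\Phi(A_{Y_0})$ into a piece where $\partial_y$ hits $F$, namely $\sum_yG(z,y)\,\Phi\big((\partial_yF)\prod_{j\in Y_0\setminus\{0\}}f_j\big)$, and, for each $l\in Y_0\setminus\{0\}$, a piece where $\partial_y$ hits $f_l$, producing the field-independent prefactor $(2r_l+2)\nabla_{x_l}^{e_l}G(z,\cdot)$ times $\Phi\big(F(\nabla_{x_l}^{e_l}\phi)^{2r_l+1}\prod_{j\in Y_0\setminus\{0,l\}}f_j\big)$. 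The first family of pieces, summed over all partitions with their original weights $(-1)^{|\pi|-1}(|\pi|-1)!$, reassembles verbatim into $\sum_yG(z,y)\,\Phi(\partial_yF;f_1;\cdots;f_k)$, because replacing the slot-$0$ observable $\phi_zF$ by $\partial_yF$ leaves the partition structure untouched; this yields the first term on the right-hand side of \eqref{eq:trunc_grad_induction}.

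The only delicate step — and the one I expect to be the main obstacle — is reassembling the second family into $\sum_{i=1}^k(2r_i+2)\nabla_{x_i}^{e_i}G(z,\cdot)\,\Phi\big(F(\nabla_{x_i}^{e_i}\phi)^{2r_i+1};f_1;\cdots;f_{i-1};f_{i+1};\cdots;f_k\big)$. Fixing $i$ and collecting all contributions in which $\partial_y$ hits $f_i$ (so that $i$ lies in the block of $0$), I would set up the bijection $\pi\mapsto\sigma$ between partitions of $\{0,\dots,k\}$ having $0$ and $i$ in a common block and partitions of $\{0,\dots,k\}\setminus\{i\}$, obtained by deleting $i$ from its block. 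This map preserves the number of blocks — removing $i$ from the block of $0$ leaves a nonempty block — so the cumulant weights $(-1)^{|\pi|-1}(|\pi|-1)!$ match, and the products of block expectations coincide with those of the truncated correlation whose slot-$0$ observable is $F(\nabla_{x_i}^{e_i}\phi)^{2r_i+1}$ and from which slot $i$ has been removed. Hence this family sums precisely to the claimed term, and adding the two families gives \eqref{eq:trunc_grad_induction}. I note that the identity is purely algebraic and that the oddness of $\deg F$ is never used in its derivation; it only serves, in the applications, to guarantee that $\phi_zF$ has even degree.
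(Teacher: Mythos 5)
Your proof is correct and is essentially the paper's own argument: both expand the truncated correlation through the partition formula \eqref{Urs2}, integrate $\phi_z$ by parts inside the block containing slot $0$, and reassemble the two resulting families of terms into the cumulants appearing on the right-hand side of \eqref{eq:trunc_grad_induction}. The only (immaterial) difference is bookkeeping: the paper first resums the partition sum into a marked-block form involving the centered bracket $\Phi(\phi_z F f_A)-\Phi(\phi_z F)\Phi(f_A)$ and integrates by parts afterwards, whereas you integrate by parts first and then reassemble via the block-deletion bijection, which is the same combinatorial identity read in the opposite direction.
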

\begin{proof}
	It is a simple consequence of the sum-over-partitions formula \eqref{Urs2} for cumulants and of Gaussian integration by part. Letting $A_0$ be the element of 
$\pi\in\mathcal 	P(\{0,1,\ldots,k\})$ containing $0$, we have
	\begin{align*}
		&\Phi(\phi_zF; f_1; \cdots ;f_k) = \sum_{\pi\in\calP(\{0,\cdots, k\})} (|\pi|-1)! (-1)^{|\pi|-1} \Big(\prod_{B\in \pi, 0\notin B} \Phi(f_B)\Big) \Phi(\phi_zF f_{A_0\setminus 0})\\
		&\quad= \sum_{\pi\in\calP_k} (|\pi|-1)! (-1)^{|\pi|-1} \sum_{A\in \pi} \Big(\prod_{B\in \pi, B\neq A} \Phi(f_B)\Big) \big[ \Phi(\phi_zF f_{A}) - \Phi(\phi_zF )\Phi( f_{A}) \big].
\end{align*}
Integrating by parts $\phi_z$ in the two terms in brackets gives
			\begin{align*}
		&\Phi(\phi_zF; f_1; \cdots ;f_k) =\sum_{\pi\in\calP_k} (|\pi|-1)! (-1)^{|\pi|-1} \sum_{A\in \pi} \Big(\prod_{B\in \pi, B\neq A} \Phi(f_B)\Big)\cdot\\
		&\qquad \cdot \Big[\sum_{i\in A} (2r_i+2) \Phi(\phi_z\nabla_{x_i}^{e_i}\phi)\Phi\big(F (\nabla_{x_i}^{e_i}\phi )^{2r_i+1} f_{A\setminus i}\big)  +\sum_{y} G(z,y) \Phi(\partial_y F; f_{A}) \Big]\\
		&\quad= \sum_{i=1}^{k} (2r_i+2)\Phi(\phi_z\nabla_{x_i}^{e_i}\phi) \Phi\big( F (\nabla_{x_i}^{e_i}\phi)^{2r_i+1};f_1; \cdots; f_{i-1} ; f_{i+1} ;\cdots; f_k\big) +\\
		&\qquad+ \sum_{y} G(z,y) \Phi(\partial_y F; f_1; \cdots ;f_k),
	\end{align*}
	as desired.
\end{proof}

\begin{lemma}
	\label{lem:induction:Mplus1_K_and_M_inf_to_Mplus1_Kplus1}
	Suppose \(\calH_{M+1,K}\) and \(\calH_{M,\infty}\) hold. Then, \(\calH_{M+1,K+1}\) holds.
\end{lemma}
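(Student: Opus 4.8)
The plan is to prove $\calH_{M+1,K+1}$ by the same two-stage scheme used in Lemma~\ref{lem:induction:0_K_to_0_Kplus1} (which is exactly the $M=0$ case of the present statement), establishing $\calH_{M+1,K+1}^2$ first and then $\calH_{M+1,K+1}^1$, but now with $n$ ranging up to $M+1$ so that the ``perturbation'' terms in the integration-by-parts lemmas are genuinely active. The two hypotheses play complementary roles: $\calH_{M+1,K}$ controls, up to the full order $M+1$, the correlations obtained after \emph{lowering} the number of fields by integration by parts (these always carry $\le K$ fields), while $\calH_{M,\infty}$ controls the correlations that carry \emph{more} fields but appear multiplied by an explicit positive power of $T$, hence are only needed at order $\le M$. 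The key arithmetic behind this is that extracting one field via Lemma~\ref{lem:int_by_part_grad} or Lemma~\ref{lem:int_by_part_no_grad} either reduces the degree by two (the Gaussian term) or produces a factor $T^k$, $k\ge1$, in front of a higher-degree correlation, which then needs only order $n-k\le M$ accuracy.

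For $\calH_{M+1,K+1}^2$, given $p$ odd, $q$ odd, $\tilde p$ even with $\tfrac12(\normI p+\normI q+\normI{\tilde p})\le K+1$ and $0\le n\le M+1$, I would write $q=\mathds 1_{(x_0,e_0)}+\tilde q$ with $\tilde q$ even and apply Lemma~\ref{lem:int_by_part_grad} with this $n$ and with the auxiliary parameter $\alpha$ chosen larger than $M+1$; its remainder $R_{n+1}'$ already has precisely the shape required by \eqref{eq:induc_hyp:grad}. The Gaussian term $I=\sum_{y}\nabla_{x_0}^{e_0}G(\cdot,y)\lrangle{\partial_y(\phi^p(\nabla\phi)^{\tilde q})\tilde u^{\tilde p}}$ involves correlations of degree at most $2K$: when $\partial_y$ hits $\phi^p$ these are even, no-gradient correlations (after expanding $(\nabla\phi)^{\tilde q}$ as a polynomial in $\phi$ and using linearity) controlled by $\calH_{M+1,K}^1$, while when $\partial_y$ hits $(\nabla\phi)^{\tilde q}$ one rewrites it as $\sum_{z,e}\tilde q_{z,e}\nabla_{x_0,z}^{e_0,e}G\,\lrangle{\phi^p(\nabla\phi)^{\tilde q-\mathds 1_{(z,e)}}\tilde u^{\tilde p}}$ and invokes $\calH_{M+1,K}^2$, summing over $y$ with the decay of $\nabla_{x_0}^{e_0}G(\cdot,y)$ from Theorem~\ref{thm:Gaussian_asymptotics} and the coordinate-sum estimates of Appendix~\ref{app:B}. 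The perturbation term $II=\sum_{k=1}^{n}T^k\sum_{x_1,e}\nabla_{x_0,x_1}^{e_0,e}G^m\sum_{r+p'=k}c'_r\lrangle{\phi^p(\nabla\phi)^{\tilde q}(\nabla_{x_1}^{e}\phi)^{2r+1}\calG^{p'}_{x_1,e}\tilde u^{\tilde p}}$ is of odd-$\phi$/odd-gradient type with extra fields; here I would apply $\calH_{M,\infty}^2$ at order $n-k\le M$, then carry out the sum over $x_1$ against $\nabla_{x_0,x_1}^{e_0,e}G^m$ using \eqref{eq:Gaussian_est:sum_of_gradgradG_with_power}, remove the mass via Lemma~\ref{lem:decay_and_mass_removal_implies_massDecay} as in the magnetization computation, and finally match the resulting main term with $\lrangle{\phi^p(\nabla\phi)^q\tilde u^{\tilde p}}^{(n)}$ by applying the extraction formula \eqref{eq:phi_contrib_extraction} to $\lrangle{\,\cdot\,}^{(n)}$ and checking, via the Gaussian identity \eqref{eq:trunc_grad_induction}, that the truncation operator obeys the very same integration-by-parts relation.

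For $\calH_{M+1,K+1}^1$, with $p,\tilde p$ even and $\tfrac12(\normI p+\normI{\tilde p})\le K+1$, I would treat $p\ne0$ by writing $p=\mathds 1_{x_0}+p'$ ($p'$ odd) and running the identical argument with Lemma~\ref{lem:int_by_part_no_grad} in place of Lemma~\ref{lem:int_by_part_grad}, using $\calH_{M+1,K}$ for the Gaussian term and $\calH_{M,\infty}$ for the perturbation term, and matching against $\lrangle{\,\cdot\,}^{(n)}$ through \eqref{eq:phi_contrib_extraction}. The case $p=0$ (and $\tilde p\neq0$) is reduced to the previous one exactly as in Lemma~\ref{lem:induction:0_K_to_0_Kplus1}: choosing $k$ with $\tilde p^k$ nonzero, the residual $O(N-1)$ symmetry \eqref{eq:symmdef}--\eqref{eq:symm2def} together with Lemma~\ref{lem:tay} converts $\tilde u^{\tilde p}$ into $\phi^{\tilde p^k}\tilde u^{\bar p}$ up to a remainder of order $T$, thereby generating the $\phi$-fields needed to restart the procedure.

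The main obstacle I expect is \emph{not} any single integration by parts but the control of the iterated coordinate summations and the attendant proliferation of logarithmic factors. Each application of $\calH^2$ inside term $II$ introduces a factor $\log^{2+(n-k)}(1+|x-y|)/(1+|x-y|)$ which must then be summed against a discrete double-gradient Green's function; the estimates of Appendix~\ref{app:B} increase the logarithmic power by a bounded amount per step, and it is precisely to absorb this growth that the exponent $2+n$ in \eqref{eq:induc_hyp:grad} is allowed to depend on $n$. Ensuring that all these sums, after mass removal at $m=e^{-(\log T)^2}$, collapse to the claimed $C'T^{\alpha}+C'T^{n+1-\epsilon}\sum_{x\in\supp_p}\sum_{y\in\supp_q}\log^{2+n}(1+|x-y|)/(1+|x-y|)$ — with the $\epsilon$-loss coming only from the passage between $\log$-type and power-type decay — is the delicate bookkeeping at the heart of the step, and is where the careful choice of the auxiliary exponents $\alpha$ and $\gamma$ (both taken $>M+1$) and the compatibility of the extraction formula with \eqref{eq:trunc_grad_induction} must be checked simultaneously.
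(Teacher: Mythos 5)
Your proposal follows essentially the same route as the paper's proof: split into the $\calH^2$ and then $\calH^1$ parts, apply Lemma~\ref{lem:int_by_part_grad} (resp.\ Lemma~\ref{lem:int_by_part_no_grad}) with $q=\mathds{1}_{(x_0,e_0)}+\tilde q$ (resp.\ $p=\mathds{1}_{x_0}+\bar p$), control the Gaussian term with $\calH_{M+1,K}$ and the $T^k$-weighted perturbation terms with $\calH_{M,\infty}^2$, remove the mass via Lemma~\ref{lem:decay_and_mass_removal_implies_massDecay} and the Appendix~\ref{app:B} summation estimates, identify the main term with $\lrangle{\,\cdot\,}^{(M+1)}$ through the extraction formula \eqref{eq:phi_contrib_extraction} combined with the Gaussian identity \eqref{eq:trunc_grad_induction}, and handle $p=0$ by the residual $O(N-1)$ symmetry and Lemma~\ref{lem:tay}. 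Your identification of the degree-versus-order bookkeeping (why $\calH_{M+1,K}$ and $\calH_{M,\infty}$ each suffice where they are used) and of the logarithmic-power growth absorbed by the exponent $2+n$ matches the paper's argument, so this is the correct proof strategy.
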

\begin{proof}
	Let \(\epsilon>0,\alpha>M+1\). We only have to consider \(n=M+1\). We again treat first \(\calH_{M+1,K+1}^2\) and then prove \(\calH_{M+1,K+1}^1\).
	
	\subsubsection*{\(\calH_{M+1,K+1}^2\) part.} As \(q\) is odd, there exist \(x_0\in\Zd\), \(e_0\in\rmB_+\) such that \(q= \mathds{1}_{(x_0,e_0)} + \tilde{q}\) with \(\tilde{q}\) even. Then, by Lemma~\ref{lem:int_by_part_grad},
	\begin{multline}
		\label{eq:proof_of_lem_induc_step_eq1}
		\Big| \lrangle{\nabla_{x_0}^{\rme_0} \phi \phi^p(\nabla\phi)^{\tilde{q}} \tilde{u}^{\tilde{p}} } - \sum_{y}\nabla_{x_0}^{e_0}G(\cdot, y) \lrangle{\partial_y (\phi^p(\nabla\phi)^{\tilde{q}}) \tilde{u}^{\tilde{p}}} \\
		- \sum_{k=1}^{M+1} T^k\sum_{x_1,e_1}\nabla_{x_0,x_1}^{e_0,e_1}G^{m} \sum_{r_1+p'_1 = k} c'_{r_1} \lrangle{(\nabla_{x_1}^{\rme_1}\phi)^{2r_1+1}\phi^p(\nabla\phi)^{\tilde{q}} \tilde{u}^{\tilde{p}} g_1} \Big|\leq\\
		\leq CT^{\alpha} + CT^{M+2-\epsilon}\sum_{x\in\supp_p}\sum_{y\in\supp_q} \frac{\log^2(1+|x-y|)}{1+|x-y|},
	\end{multline}with \(C= C(\epsilon,\alpha, K+1, M+1)\), and \(m=e^{-(\log T)^2}\). We can then use \(\calH_{M+1,K}^1\) to obtain
	\begin{equation*}
		\big|\lrangle{(\nabla\phi)^{\tilde{q}} \partial_y \phi^p \tilde{u}^{\tilde{p}}} - \lrangle{(\nabla\phi)^{\tilde{q}} \partial_y \phi^p \tilde{u}^{\tilde{p}}}^{(M+1)}\big|\leq C' T^{M+2-\epsilon},
	\end{equation*}with \(C'=C'(\epsilon, K, M+1)\). So, using the decay of \(|\nabla_{x_0}^{e_0}G(\cdot, y)|\) (see the second bound of Theorem \ref{thm:Gaussian_asymptotics}),
	\begin{equation*}
		\Big|\sum_{y}\nabla_{x_0}^{e_0}G(\cdot, y) \big[\lrangle{(\nabla\phi)^{\tilde{q}} \partial_y \phi^p \tilde{u}^{\tilde{p}}}- \lrangle{(\nabla\phi)^{\tilde{q}} \partial_y \phi^p \tilde{u}^{\tilde{p}}}^{(M+1)}\big]\Big|\leq C'T^{M+2-\epsilon}\sum_{y\in \supp_{p}}\frac{c_d}{(1+|x_0-y|)^{d-1}}
	\end{equation*}with \(c_d\) depending only on \(d\). Next,
	\begin{equation*}
		\sum_{y}\nabla_{x_0}^{e_0}G(\cdot, y) \lrangle{ \phi^p \partial_y(\nabla\phi)^{\tilde{q}} \tilde{u}^{\tilde{p}}} = \sum_{z,e}\tilde{q}_{z,e}\nabla_{x_0,z}^{e_0,e}G \lrangle{ \phi^p (\nabla\phi)^{\tilde{q}-\mathds{1}_{(z,\rme)}} \tilde{u}^{\tilde{p}}}.
	\end{equation*}So, by the uniform boundedness of \(|\nabla_{x_0,z}^{e_0,e}G|\), and \(\calH_{M+1,K}^2\),
	\begin{multline*}
		\Big|\sum_{y}\nabla_{x_0}^{e_0}G(\cdot, y) \big[\lrangle{ \phi^p \partial_y(\nabla\phi)^{\tilde{q}} \tilde{u}^{\tilde{p}}}-  \lrangle{ \phi^p \partial_y(\nabla\phi)^{\tilde{q}} \tilde{u}^{\tilde{p}}}^{(M+1)}\big] \Big|\leq\\
		\leq C''T^{\alpha} + C''T^{M+2-\epsilon} \sum_{y\in \supp_{\tilde{q}}} \sum_{x\in \supp_p} \frac{\log^{M+3}(1+|x-y|)}{1+|x-y|}
	\end{multline*}with \(C''= C''(\epsilon,\alpha,K+1,M+1)\). Finally, using \(\calH_{M,\infty}^2\), there exists \(C'''=C'''(\epsilon,\alpha,K+1,M+1)\) such that for \(1\leq k\leq M\),
	\begin{multline*}
		\big|\lrangle{(\nabla_{x_1}^{e_1}\phi)^{2r_1+1}\phi^p(\nabla\phi)^{\tilde{q}} \tilde{u}^{\tilde{p}} g_1} - \lrangle{(\nabla_{x_1}^{e_1}\phi)^{2r_1+1}\phi^p(\nabla\phi)^{\tilde{q}} \tilde{u}^{\tilde{p}} g_1}^{(M+1-k)}\big|\leq \\
		\leq C''' T^{\alpha} + C'''T^{M+2-k-\epsilon} \sum_{x\in\supp_p}\sum_{y\in \supp_{\tilde{q}} \cup \{x_1\}} \frac{\log^{M+3-k}(1+|x-y|)}{1+|x-y|}.
	\end{multline*}Plugging all the estimates in~\eqref{eq:proof_of_lem_induc_step_eq1}, using triangular inequality, and using Theorem~\ref{thm:Gaussian_estimates}, more precisely~\eqref{eq:Gaussian_est:sum_of_gradgradG_general} and~\eqref{eq:Gaussian_est:sum_of_gradgradG_with_power}, we obtain
	\begin{equation}\label{llong}\begin{split}  
&		\Big| \lrangle{\nabla_{x_0}^{e_0} \phi \phi^p(\nabla\phi)^{\tilde{q}} \tilde{u}^{\tilde{p}} } - \sum_{y}\nabla_{x_0}^{e_0}G(\cdot, y) \lrangle{\partial_y (\phi^p(\nabla\phi)^{\tilde{q}}) \tilde{u}^{\tilde{p}}}^{(M+1)} \\
&		- \sum_{k=1}^{M+1} T^k\sum_{x_1,e_1}\nabla_{x_0,x_1}^{e_0,e_1}G^{m} \sum_{r_1+p'_1 = k} c'_{r_1} \lrangle{(\nabla_{x_1}^{e_1}\phi)^{2r_1+1}\phi^p(\nabla\phi)^{\tilde{q}} \tilde{u}^{\tilde{p}} g_1}^{(M+1-k)} \Big|\leq\\
&		\leq \tilde{C}T^{\alpha} + \tilde{C}T^{M+2-\epsilon}(\log T)^2\sum_{x\in\supp_p}\sum_{y\in\supp_q} \frac{\log^{M+3}(1+|x-y|)}{1+|x-y|},\end{split}
	\end{equation}with \(\tilde{C}=\tilde{C}(\epsilon,\alpha,K+1,M+1)\). If we now use the bound 
$$|\nabla_{x_0,x_1}^{e_0,e_1}G - \nabla_{x_0,x_1}^{e_0,e_1}G^{m}|\leq Cm^{1/(2d)}(1+|x_1-x_0|)^{-d+1/2}$$ following from Theorems~\ref{thm:Gaussian_estimates} and~\ref{thm:Gaussian_asymptotics} and Lemma~\ref{lem:decay_and_mass_removal_implies_massDecay}, the bound
$$|\lrangle{(\nabla_{x_1}^{e_1}\phi)^{2r_1+1}\phi^p(\nabla\phi)^{\tilde{q}} \tilde{u}^{\tilde{p}} g_1}^{(M+1-k)}|\le \sum_{y\in\supp_p\, \cup\, \supp_{\tilde q}}(1+|x_1-y|)^{-d+1+\epsilon}$$
following from Lemma~\ref{lem:Taylor_n_a_priori_decay_grad}, and the summability result of Lemma~\ref{app:lem:sum_Zd_a_b_to_log}, we see that in the second line of \eqref{llong} we can replace the massive propagator \(\nabla_{x_0,x_1}^{e_0,e_1}G^{m}\) by \(\nabla_{x_0,x_1}^{e_0,e_1}G\) up to an error term \(\leq C T m^{1/(2d)}\), which can be re-absorbed in $\tilde C T^\alpha$ up to a redefinition of $\tilde C$. It remains to prove that 
	\begin{multline}
		\label{eq:proof_induction_step_eq2}
\sum_{y}\nabla_{x_0}^{e_0}G(\cdot, y) \lrangle{\partial_y (\phi^p(\nabla\phi)^{\tilde{q}}) \tilde{u}^{\tilde{p}}}^{(M+1)} + \sum_{k=1}^{M+1} T^{k}\sum_{x_1,e_1}\nabla_{x_0,x_1}^{e_0,e_1}G\ \cdot\\
\cdot \sum_{r_1+p'_1 = k} c'_{r_1} \lrangle{(\nabla_{x_1}^{e_1}\phi)^{2r_1+1}\phi^p(\nabla\phi)^{\tilde{q}} \tilde{u}^{\tilde{p}} g_1}^{(M+1-k)} \equiv \lrangle{\nabla_{x_0}^{e_0} \phi\, \phi^p(\nabla\phi)^{\tilde{q}} \tilde{u}^{\tilde{p}} }^{(M+1)} .
	\end{multline}Setting \(F= \phi^p(\nabla\phi)^{\tilde{q}}\), we can now use the extraction formula~\eqref{eq:phi_contrib_extraction} to obtain:
	\begin{multline*}
		\lrangle{\partial_y F \tilde{u}^{\tilde{p}}}^{(M+1)} = \Phi(\partial_y F )\lrangle{\tilde{u}^{\tilde{p}}}^{(M+1)}+\\
		+ \sum_{k=1}^{M+1}\frac{1}{k!} \sum_{s=k}^{M+1} T^{s}  \sum_{\substack{s_1,\cdots,s_k \geq 1\\ \sum s_l = s}} \sum_{\substack{x_1,\cdots, x_k\\ e_1,\cdots,e_k}}\ \sum_{\substack{r_1,p'_1,\cdots, r_k,p'_k\ge 0 \\ p'_l+r_l = s_l} }\big(\prod_{l=1}^k c_{r_l}\big)\Phi\big( \partial_yF ; f_1;\cdots; f_k\big)\lrangle{\tilde{u}^{\tilde{p}} \prod_{l=1}^{k}g_l}^{(M+1-s)},
	\end{multline*}where $c_r=\frac{(-1)^{r+1}}{(2r+2)!}=\frac1{2r+2}c_r'$. Similarly, renaming $k$ as $s_1$ in the second line of \eqref{eq:proof_induction_step_eq2}, 
for \(s_1=1,\cdots M+1\),
	\begin{multline*}
		\lrangle{(\nabla_{x_1}^{e_1}\phi)^{2r_1+1}F \tilde{u}^{\tilde{p}} g_1}^{(M+1-s_1)} = \Phi\big((\nabla_{x_1}^{e_1}\phi)^{2r_1+1}F\big) \lrangle{ \tilde{u}^{\tilde{p}} g_1}^{(M+1-s_1)}	\\
		+ \mathds{1}_{M+1>s_1}\sum_{k=1}^{M+1-s_1}\frac{1}{k!} \sum_{s=k}^{M+1-s_1} T^{s} \sum_{\substack{s_2,\cdots,s_{k+1} \geq 1\\ \sum s_l = s}} \sum_{\substack{x_2,\cdots, x_{k+1}\\ e_2,\cdots,e_{k+1}}}\ \sum_{\substack{r_2,p'_2,\cdots, r_{k+1},p'_{k+1}\ge 0\\ p'_l+r_l = s_l} }\cdot \\ \cdot\big(\prod_{l=2}^{k+1} c_{r_l}\big)\Phi\big( (\nabla_{x_1}^{e_1}\phi)^{2r_1+1}F ; f_2;\cdots; f_{k+1}\big)\lrangle{\tilde{u}^{\tilde{p}} \prod_{l=1}^{k+1}g_l}^{(M+1-s_1-s)}.
	\end{multline*}Plugging these in the L.H.S. of~\eqref{eq:proof_induction_step_eq2}, and using the integration by parts formula for the Gaussian measure $\Phi$, one obtains
	\begin{align*}
		&\Phi(\nabla_{x_0}^{e_0}\phi F )\lrangle{\tilde{u}^{\tilde{p}}}^{(M+1)} + \sum_{k=1}^{M+1}\frac{1}{k!} \sum_{s=k}^{M+1} T^{s} \sum_{\substack{s_1,\cdots,s_k \geq 1\\ \sum s_l = s}} \sum_{\substack{x_1,\cdots, x_k\\ e_1,\cdots,e_k}}\ \sum_{\substack{r_1,p'_1,\cdots, r_k,p'_k\ge 0\\ p'_l+r_l = s_l} }\ \big(\prod_{l=1}^k c_{r_l}\big)\times\\
		& \hskip3.5truecm\times \sum_{y}\nabla_{x_0}^{e_0}G(\cdot, y) \Phi\big( \partial_yF ; f_1;\cdots; f_k\big)\lrangle{\tilde{u}^{\tilde{p}} \prod_{l=1}^{k}g_l}^{(M+1-s)} \\
		& + \sum_{s_1=1}^{M+1} T^{s_1} \sum_{x_1,e_1}  \sum_{r_1+p'_1 = s_1} c_{r_1}(2r_1+2) \nabla_{x_0,x_1}^{e_0,e_1}G\,\Phi\big((\nabla_{x_1}^{\rme_1}\phi)^{2r_1+1}F\big) \lrangle{ \tilde{u}^{\tilde{p}} g_1}^{(M+1-s_1)} \\
		& + \sum_{k=2}^{M+1} \frac{1}{k!} \sum_{s=k}^{M+1} T^{s} \sum_{\substack{s_1,\cdots,s_{k} \geq 1\\ \sum s_l = s}} \sum_{\substack{x_1,\cdots, x_{k}\\ e_1,\cdots,e_{k}}}\  \sum_{\substack{r_1,p'_1,\cdots, r_{k},p'_{k}\ge 0\\ p'_l+r_l = s_l} } \sum_{i=1}^{k} (2r_i+2) \times\\
		&\quad  \times \big(\prod_{l=1}^{k} c_{r_l}\big) \nabla_{x_0,x_i}^{e_0,e_i}G\, \Phi\big( (\nabla_{x_i}^{e_i}\phi)^{2r_i+1}F ; f_1;\cdots;f_{i-1};f_{i+1};\cdots; f_{k}\big)\lrangle{\tilde{u}^{\tilde{p}} \prod_{l=1}^{k}g_l}^{(M+1-s)},
	\end{align*}where we symmetrized the role of index \(1\) in the last line. Re-grouping the terms with the same values of \(k,s\), we can rewrite this as
	\begin{multline*}
		\Phi(\nabla_{x_0}^{e_0}\phi F )\lrangle{\tilde{u}^{\tilde{p}}}^{(M+1)}\ +\ \sum_{k=1}^{M+1}\frac{1}{k!} \sum_{s=k}^{M+1} T^{s} \sum_{\substack{s_1,\cdots,s_{k} \geq 1\\ \sum s_l = s}} \sum_{\substack{x_1,\cdots, x_{k}\\ e_1,\cdots,e_{k}}} \, \sum_{\substack{r_1,p'_1,\cdots, r_{k},p'_{k}\ge 0\\ p'_l+r_l = s_l} } \lrangle{\tilde{u}^{\tilde{p}} \prod_{l=1}^{k}g_l}^{(M+1-s)}\times\\
		\times \big(\prod_{l=1}^{k} c_{r_l}\big) \Big[ \sum_{i=1}^{k} (2r_i+2) \nabla_{x_0,x_i}^{e_0,e_i}G\, \Phi\big( (\nabla_{x_i}^{e_i}\phi)^{2r_i+1}F ; f_1;\cdots;f_{i-1};f_{i+1};\cdots; f_{k}\big)  +\\+ \sum_{y}\nabla_{x_0}^{e_0}G(\cdot, y) \Phi\big( \partial_yF ; f_1;\cdots; f_k\big)\Big].
	\end{multline*}But now, it follows from~\eqref{eq:trunc_grad_induction} that the term in brackets is simply \(\Phi\big(\nabla_{x_0}^{e_0}\phi F; f_1;\cdots; f_k\big)\), which implies~\eqref{eq:proof_induction_step_eq2} (by another look at the extraction formula~\eqref{eq:phi_contrib_extraction}).

	\subsubsection*{\(\calH_{M+1,K+1}^1\) part.} We first consider \(p\neq 0\). We can then write \(p=\mathds{1}_{x_0} + \bar{p}\) with \(\bar{p}\) odd, \(x_0\in\Zd\). Using Lemma~\ref{lem:int_by_part_no_grad} with \(\gamma = 2(d-1)(M+1)\), we have
	\begin{multline*}
		\Big|\lrangle{\phi^p \tilde{u}^{\tilde{p}}} - \sum_{y} G(x_0,y)\lrangle{\partial_y\phi^{\bar{p}} \tilde{u}^{\tilde{p}}} \\
		- \sum_{k=1}^{M+1} T^k \sum_{x_1,e_1} \nabla_{x_1}^{e_1}G^{m_{\gamma}}(x_0,\cdot) \sum_{r_1+p_1'= k} c'_{r_1}\lrangle{(\nabla_{x_1}^{e_1}\phi)^{2r_1+1} \phi^{\bar{p}}\tilde{u}^{\tilde{p}} g_1} \Big|\leq CT^{M+2-\epsilon}
	\end{multline*}with \(C= C(K+1,M+1,\epsilon)\), and \(m_{\gamma} =T^\gamma  = T^{2(d-1)(M+1)}\). We can then use \(\calH_{M,\infty}^2\)  to obtain that for \(k=1,\cdots,n\),
	\begin{multline*}
		\big|\lrangle{(\nabla_{x_1}^{e_1}\phi)^{2r_1+1} \phi^{\bar{p}}\tilde{u}^{\tilde{p}} g_1} - \lrangle{(\nabla_{x_1}^{e_1}\phi)^{2r_1+1} \phi^{\bar{p}}\tilde{u}^{\tilde{p}} g_1}^{(M+1-k)}\big|\leq \\
		\leq C'T^{\gamma+ M +1} + T^{M+2-k-\epsilon/2}C' \sum_{y\in \supp_{\bar{p}}} \frac{\log^{M+3-k}(1+|y-x_1|)}{1+|y - x_1|},
	\end{multline*}with \(C'=C'(K+1,M+1,\epsilon)\), and use \(\calH_{M+1,K}^1\) to obtain
	\begin{equation*}
		\big|\lrangle{\partial_y\phi^{\bar{p}} \tilde{u}^{\tilde{p}}} - \lrangle{\partial_y\phi^{\bar{p}} \tilde{u}^{\tilde{p}}}^{(M+1)} \big|\leq C'' T^{M+2-\epsilon},
	\end{equation*}with \(C''=C''(K+1,M+1,\epsilon)\). Using these and~\eqref{eq:Gaussian_est:sum_of_gradG},~\eqref{eq:Gaussian_est:sum_of_G_gradG_with_power}, we get
	\begin{multline*}
		\Big|\lrangle{\phi^p \tilde{u}^{\tilde{p}}} - \sum_{y} G(x_0,y)\lrangle{\partial_y\phi^{\bar{p}} \tilde{u}^{\tilde{p}}}^{(M+1)} \\
		- \sum_{k=1}^{M+1} T^k \sum_{x_1,\rme_1} \nabla_{x_1}^{e_1}G^{m_{\gamma}}(x_0,\cdot) \sum_{r_1+p_1'= k} c'_{r_1}\lrangle{(\nabla_{x_1}^{e_1}\phi)^{2r_1+1} \phi^{\bar{p}}\tilde{u}^{\tilde{p}} g_1}^{(M+1-k)} \Big|\leq C'''T^{M+2-\epsilon},
	\end{multline*}with \(C'''=C'''(K+1,M+1,\epsilon)\). We can then use the decay $$|\lrangle{(\nabla_{x_1}^{e_1}\phi)^{2r_1+1} \phi^{\bar{p}}\tilde{u}^{\tilde{p}} g_1}^{(M+1-k)}|\leq \sum_{y\in\supp_{\bar{p}}}\frac{c}{(1+|x_1-y|)^{d-5/4}}$$ from Lemma~\ref{lem:Taylor_n_a_priori_decay_grad}, combined with $$|\nabla_{x_1}^{e_1}G^{m_{\gamma}}(x_0,\cdot) - \nabla_{x_1}^{e_1}G(x_0,\cdot)|\leq cm_{\gamma}^{1/(2d-2)} (1+|x_0-x_1|)^{-d+3/2}$$ (which follows from~\eqref{eq:Gaussian_est:mass_removal_G}, the decay bound of Theorem~\ref{thm:Gaussian_asymptotics}, and Lemma~\ref{lem:decay_and_mass_removal_implies_massDecay}), and with Lemma~\ref{app:lem:sum_Zd_a_b_to_log}, 
 to replace the massive propagator \(\nabla_{x_1}^{e_1}G^{m_{\gamma}}\) by its non-massive version, \(\nabla_{x_1}^{e_1}G\), up to an error smaller than \( c'Tm_{\gamma}^{1/(2d-2)} = c'T^{M+2}\). The equality
	\begin{multline*}
		 \sum_{y} G(x_0,y)\lrangle{\partial_y\phi^{\bar{p}} \tilde{u}^{\tilde{p}}}^{(M+1)} \\
		+ \sum_{k=1}^{M+1} T^k \sum_{x_1,\rme_1} \nabla_{x_1}^{e_1}G(x_0,\cdot) \sum_{r_1+p_1'= k} c_{r_1}'\lrangle{(\nabla_{x_1}^{e_1}\phi)^{2r_1+1} \phi^{\bar{p}}\tilde{u}^{\tilde{p}} g_1}^{(M+1-k)}\equiv \lrangle{\phi^p \tilde{u}^{\tilde{p}}}^{(M+1)} 
	\end{multline*}follows exactly as in the proof of \(\calH_{M+1,K+1}^{2}\), and concludes the proof of \(\calH_{K+1,M+1}^1\) when \(p\neq 0\). We now turn to the \(p = 0\) case. Without loss of generality, we can assume that \(\tilde{p}^1\) is non-zero and even. Let \(\tilde{q}_x^k = \tilde{p}_x^k\) for \(k=2,\cdots,N-2\) and \(\tilde{q}_x^1 =0\). We then use the remaining \(O(N-1)\) symmetry through~\eqref{eq:remaining_symmetry_u_theta} to obtain
	\begin{equation*}
		\lrangle{\tilde{u}^{\tilde{p}}} = \lrangle{(\tilde{u}^1)^{\tilde{p}^1} \tilde{u}^{\tilde{q}}} = \beta^{\|{\tilde{p}^1}\|/2}\lrangle{\prod_{x} \big(\sqrt{1-T\norm{\tilde{u}_x}^2}\sin\big(\sqrt{T}\phi_x\big) \big)^{\tilde{p}_x^1} \tilde{u}^{\tilde{q}}}.
	\end{equation*}
We now apply Lemma \ref{lem:tay}: letting $\phi^{\tilde p^1}(1+\sum_{s\ge 1}T^s \sum_{n,\tilde n} a^{n,\tilde n}_{\tilde p^1, s} \phi^{n}\tilde u^{\tilde n})$ be the formal low temperature expansion of 
$\beta^{\|{\tilde{p}^1}\|/2 }\prod_{x}\Big(\sqrt{1-T\|\tilde u_x\|^2}\sin\big(\sqrt{T}\phi_x\big)\Big)^{\tilde p^1_x} $ obtained via the first two of \eqref{exptutti} (cf. with \eqref{expAalpha}), 
where the sum over $n,\tilde n$ runs over even tuples in $\mathfrak P,\tilde{\mathfrak P}$, and the coefficients $a^{n,\tilde n}_{\tilde p^1, s}$
are non zero only if $\|n\|_1+\|\tilde n\|_1=2s$, we obtain 
	\begin{equation*}
		\Big|\lrangle{\tilde{u}^{\tilde{p}}} -\lrangle{\phi^{\tilde p^1}\tilde u^{\tilde q}}-\sum_{s=1}^{M+1} T^s \sum_{n,\tilde n} a^{n,\tilde n}_{\tilde p^1, s}  \lrangle{\phi^{n+\tilde p^1}\tilde u^{\tilde n+\tilde q}} \Big| \leq CT^{M+2},
	\end{equation*}with \(C= C(K,M)\). 
	The claim then reduces to the \(p\neq 0\) case and to \(\calH^1_{M,\infty}\). This concludes the proof of the validity of \(\calH_{M+1,K+1}^1\) and therefore of the Lemma.
\end{proof}

From Lemma \ref{lem:induction:Mplus1_K_and_M_inf_to_Mplus1_Kplus1}, the validity of $\mathcal H_{1,0}$ (see Remark \ref{Hinfty0}) and of $\mathcal H_{0,\infty}$ (see Remark \ref{H0infty}), the conclusion of Theorem \ref{thm:main_induction} follows. In particular, this implies that \eqref{eq6.2.2} holds, for any $n\ge 0$ and $\epsilon>0$, and, as remarked 
after \eqref{eq6.2.2}, this also implies our main result, Theorem \ref{thm:main}. 

\subsection{Proof of Corollary \ref{cor:2}}

\begin{proof} 
It is sufficient to check convergence for finite degree polynomials in  \(\sqrt{\beta}(S^1,S^2,$ $\ldots,S^{N-1})\), as they are dense in the continuous local functions; and, by linearity, we can further reduce to finite degree monomials. Therefore, using the change of variables from $S$ to $(\phi,\tilde u)$, one needs to control 
$$\lrangle{ \prod_{x}\Big( \big(\sqrt{\beta}\sqrt{1-T\|\tilde u_x\|^2}\sin\big(\sqrt{T}\phi_x\big)\big)^{\alpha^{N-1}_x} \prod_{k=1}^{N-2}\tilde{u}_x^{\alpha^k_x}\Big)},$$ 
for some $\alpha^1,\ldots,\alpha^{N-1}:\mathbb Z^d\to\mathbb Z_+$ of finite support. By Lemma \ref{lem:tay},  this is equal to \(\lrangle{ \prod_{x}\Big( \phi_x^{\alpha^{N-1}_x} \prod_{k=1}^{N-2}\tilde{u}_x^{\alpha^k_x}\Big)} +O(T)$. By Theorem \ref{thm:main_induction} this is, in turn, equal to $$\lrangle{ \prod_{x}\Big( \phi_x^{\alpha^{N-1}_x} \prod_{k=1}^{N-2}\tilde{u}_x^{\alpha^k_x}\Big)}^{(0)} +O(T^{1-\epsilon})= \Phi(\phi^{\alpha^{N-1}}) \lrangle{ \prod_{x}\prod_{k=1}^{N-2}\tilde{u}_x^{\alpha^k_x}}^{(0)}+ O(T^{1-\epsilon}),$$ for any $\epsilon>0$. 
Using the residual $O(N-1)$ symmetry \eqref{eq:remaining_symmetry_u_theta} and iterating \((N-2)\) times gives the result.
\end{proof}

\section*{Acknowledgements}
We gratefully acknowledges financial support of the European 
Research Council through the ERC CoG UniCoSM, grant agreement n.\,724939. A.G. also acknowledges support from MIUR, through the PRIN 2017 project MaQuMA, PRIN201719VMAST01 and thanks GNFM-INdAM Gruppo Nazionale per la Fisica Matematica.
S.O. is supported by the Swiss NSF grant 200021\_182237 and is a member of the NCCR SwissMAP. Most of this work was completed while S.O. was supported by an Swiss NSF early PostDoc.Mobility Grant. He thanks the university Roma Tre for its hospitality.

\appendix

\section{Proof of Proposition \ref{thm:Infrared_infinite_volume_magn_bound}}\label{prova:IB}

Recall that 
\begin{equation*}
	\psi(v) \equiv \psi_{\beta}(v) = \lim_{L\to \infty} \frac{1}{|\Lambda_L|}\log Z_{L;\beta}\big( e^{\sum_{x\in \Lambda_L} v\cdot S_x }\big).
\end{equation*}
By \(O(N)\) invariance, \(\psi(Rv) = \psi(v)\) for any matrix \(R\in O(N)\); moreover, if $\psi$ is differentiable at $hs$, for some $h>0$ and $s\in\mathbb S^{N-1}$, then 
$\psi$ is differentiable at $hs'$, for any $s'\in\mathbb S^{N-1}$. 
Recall also that \(v\mapsto \psi(v)\) is a convex function from \(\R^N\) to \(\R\). Therefore, \(\psi\) is differentiable almost everywhere. Define
\begin{equation*}
	D^* = \{h\in [0,\infty):\  \psi \textnormal{ is differentiable at } hs,\ \forall s\in \bbS^{N-1}\},
\end{equation*}
which is (at least) dense in \([0,\infty)\). For \(h\in D^*\), let \(\rmJ_{h\hat\rme_N}:\R^N\to \R\) denote the derivative (gradient) of \(\psi\) at \(h\hat\rme_N\). Since \(\psi\) is the limit of a sequence of convex differentiable functions, one has that for every \(s\in \bbS^{N-1}\),
	\begin{equation*}
		 \rmJ_{h\hat\rme_N}(s) = \lim_{L\to \infty} \frac{1}{|\Lambda_L|} \frac{Z_{L;\beta}\big( \sum_{x\in \Lambda_L}  S_x\cdot s
		e^{h\sum_{x\in \Lambda_L}  S_x^N }\big)}{Z_{L;\beta}\big( e^{h\sum_{x\in \Lambda_L}  S_x^N }\big)} = \lim_{L\to \infty} \lrangle{S_0\cdot s}_{L;\beta,h}.
	\end{equation*}In particular, by the residual $O(N-1)$ symmetry in the directions orthogonal to $\hat\rme_N$, $$\rmJ_{h\hat\rme_N}(\hat\rme_k)=0,\quad  1\le k<N.$$ 
A simple consequence of this fact is that, if we define $m^*:=\rmJ_{h\hat\rme_N}(\hat\rme_N)$, then, for any  translation invariant Gibbs measure $\nu$ of the spin \(O(N)\) model at inverse temperature \(\beta\) and magnetic field \(h\rme_N\), 
\begin{equation}\label{firstpoint}\nu(S_0\cdot s) = s^N m^*.\end{equation}
	The proof is a classical fact, which follows from differentiability of \(\psi\), see e.g. \cite[Proof of Theorem 2.5]{Biskup-2009} for the proof of a similar fact. A slightly more subtle consequence is that 
\begin{equation}\label{secondpoint}\nu(|M_n-\nu(S_0)|)\xrightarrow{n\to\infty} 0,\end{equation}
where \(M_{n} = \frac{1}{(2n+1)^d} \sum_{x\in \{-n,\dots, n\}^d} S_x \).
To prove this, we use the ergodic decomposition of \(\nu\): there exists a probability measure \(P_{\nu}\) on the space of ergodic Gibbs measures such that \(\nu(f) = \int dP_{\nu}(\eta) \eta(f)\) for every \(\nu\)-integrable \(f\) (see~\cite[Theorem 14.10]{Georgii-2011}). Moreover, by the ergodic theorem, for any \(\eta\) ergodic, \(\lim_{n\to \infty} \eta(|M_n-\eta(S_0)|) = 0\) (see~\cite[Appendix 14.A]{Georgii-2011}). Now, any \(\eta\) ergodic is in particular translation invariant, therefore, by \eqref{firstpoint}, \(\eta(S_0) = (0,\dots,0, m^*) = \nu(S_0) \). So, one has the pointwise convergence \(\lim_{n\to \infty} \eta(|M_n - \nu(S_0)|) = 0\). To conclude, by Vitali's dominated convergence theorem one has
	\begin{equation*}
		\lim_{n\to \infty}\nu(|M_n - \nu(S_0)|) = \lim_{n\to \infty}\int dP_{\nu}(\eta) \eta(|M_n - \nu(S_0)|) = 0.
	\end{equation*}
Given \eqref{secondpoint}, the proof of eqs.\eqref{eq:infrared_infin_vol_centered_spins}-\eqref{eq:infrared_magnet} follows \cite{Frohlich+Simon+Spencer-1976}. Let \(h\in D^*\), and let \(f:\Z^d\to \R^N\) be a finitely supported function. Then, set \(C_f= \sum_x f(x)\in \R^N\), and \(f_n(x) = f(x) - \frac{1}{(2n+1)^d} C_f \mathds{1}_{x\in \{-n,\dots, n\}^d}\). Since \(\sum_{x} f_n(x) = 0\), one can apply~\eqref{eq:infrared_fin_vol.0} and take \(L\to \infty\) to obtain
\begin{equation*}
	\lrangle{e^{\sum_{x} S_x\cdot f_n(x) }}_{h}\leq e^{\frac{1}{2\beta}(f_n,G f_n) }.
\end{equation*}Now, as in~\cite[Corollary 2.5]{Frohlich+Simon+Spencer-1976}, \((f_n,G f_n)\to (f,G f)\) as \(n\to \infty\) (using the decay of the Green function in \(d\geq 3\)). To conclude, observe that \(\sum_x S_x\cdot f_n(x) = \sum_x (S_x- M_n)\cdot f(x) \). Then, for every integer \(p\geq 1\),
\begin{multline*}
	\big|\lrangle{\big(\sum_{x} S_x\cdot f_n(x) \big)^p}_{h} - \lrangle{\big(\sum_{x} (S_x-\lrangle{S_0}_h)\cdot f(x) \big)^p}_{h}\big|= \\
	= \big|\sum_{k=1}^{p}\binom{p}{k} \lrangle{\big(\sum_{x} (S_x-\lrangle{S_0}_h)\cdot f(x) \big)^{p-k}\big( (\lrangle{S_0}_h-M_n) \cdot C_f \big)^k}_{h}\big|\\
	\leq \sum_{k=1}^{p}\binom{p}{k} (2c_f)^{p-k} (2|C_f|)^{k-1}|C_f|\lrangle{| \lrangle{S_0}_h-M_n|}_{h}
\end{multline*}where \(c_f = \sum_{x} |f(x)|\). Now, \eqref{secondpoint} implies that the last line converges to \(0\) as \(n\to \infty\); therefore, the convergence \(\lrangle{e^{\sum_{x} S_x\cdot f_n(x) }}_{h}\to \lrangle{e^{\sum_{x} (S_x-\lrangle{S_0}_h)\cdot f(x) }}_{h}\) as \(n\to \infty\) follows. The bound~\eqref{eq:infrared_magnet} follows from~\eqref{eq:infrared_infin_vol_centered_spins} as in~\cite{Frohlich+Simon+Spencer-1976} (choose $f(x)=\epsilon\delta_{x,0}\hat\rme_i)$, expand in $\epsilon$ at second order, 
and sum over $i=1,\ldots,N$, recalling that $\sum_{i=1}^N(S^i_0)^2=1$).

\section{Upper bounds on correlation functions}
\label{app:UB_corr_funct}

This appendix contains the proof of Lemma~\ref{lem:setof_pt_to_setof_pt_decay}. We first derive a bound on the two point function. The log can be removed when \(d\geq 4\) as is clear from the proof.

\begin{lemma}
	\label{lem:UB_twoPts_funct_spin}
	For any \(d\geq 3\), there exists \(c\geq 0\) such that for any distinct \(x,y\in\Zd\), and any \(k=1,\cdots,N\),
	\begin{equation}
	\label{eq:UB_twoPts_funct_spin}
		\lrangle{S_x^k; S_y^k}\leq \frac{c}{\beta}\frac{\log(n)}{n}
	\end{equation}where \(n=\normsup{x-y}\).
\end{lemma}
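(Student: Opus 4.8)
The plan is to combine the momentum-space form of the infrared bound from Proposition~\ref{thm:Infrared_infinite_volume_magn_bound} with the transfer-matrix (spectral) representation of the two-point function furnished by reflection positivity. Write $g(z)=\lrangle{S_0^k;S_z^k}$ for the connected correlation. Being a covariance, $g$ is positive-definite; testing the centered infrared bound \eqref{eq:infrared_infin_vol_centered_spins} against $f=t\,\varphi\,\hat\rme_k$ and expanding to second order in $t$ gives, in Fourier space,
$$0\le \hat g(p)\le \frac{1}{2\beta E(p)},\qquad E(p)=\sum_{j=1}^d(1-\cos p_j).$$
I would first reduce to points on a coordinate axis: orienting $\hat\rme_1$ along the largest component of $x-y$, so that $|x_1-y_1|=n=\normsup{x-y}$, the Messager--Miracle-Sol\'e monotonicity (a standard consequence of reflection positivity) gives $g(x-y)\le \gamma(n)$, where $\gamma(m):=g(m\,\hat\rme_1)$. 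It therefore suffices to bound $\gamma(n)$.

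Reflection positivity across the hyperplanes orthogonal to $\hat\rme_1$ then yields the spectral representation
$$\gamma(m)=\int_{[0,\infty)} e^{-\mu |m|}\,d\rho(\mu),$$
with $\rho$ a finite positive measure (depending on $k$), of total mass $\gamma(0)\le 1$. Summing over $m\in\Z$ and matching with the transverse Fourier transform identifies the generating function $\hat\gamma(p_1)=\sum_m\gamma(m)e^{-ip_1m}$ with $\frac{1}{(2\pi)^{d-1}}\int_{[-\pi,\pi]^{d-1}}\hat g(p_1,p_\perp)\,dp_\perp$, which by the infrared bound is dominated by the transverse integral of $\tfrac{1}{2\beta E}$. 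The dimensional dichotomy enters precisely here: with $E_\perp(p_\perp)=\sum_{j=2}^d(1-\cos p_j)$, the integral $\int \frac{dp_\perp}{(1-\cos p_1)+E_\perp(p_\perp)}$ stays bounded as $p_1\to 0$ when $d\ge 4$, but diverges logarithmically when $d=3$, giving $\hat\gamma(p_1)\le \tfrac{C}{\beta}\log\tfrac{1}{|p_1|}$ for small $p_1$ (and $\hat\gamma$ bounded for $d\ge4$).

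The last step converts this into a bound on the small-$\mu$ mass of $\rho$. Writing $\hat\gamma(p_1)=\int \frac{\sinh\mu}{\cosh\mu-\cos p_1}\,d\rho(\mu)$ and using that the kernel is of order $1/p_1$ on the range $\mu\in[p_1/2,p_1]$, the bound on $\hat\gamma$ gives $\rho([p_1/2,p_1])\le \tfrac{C}{\beta}\,p_1\log\tfrac{1}{p_1}$; a dyadic summation then yields $\rho([0,\delta])\le \tfrac{C}{\beta}\,\delta\log\tfrac1\delta$ when $d=3$ (and $\rho([0,\delta])\le \tfrac{C}{\beta}\,\delta$ when $d\ge4$). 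Splitting $\gamma(n)=\int e^{-\mu n}\,d\rho$ at the scale $\mu=1/n$, the part $\mu\le 1/n$ is at most $\rho([0,1/n])$, while the tail $\int_{1/n}^\infty e^{-\mu n}\,d\rho$ is controlled dyadically by the same mass bound together with the exponential factor; both contributions are $\le \tfrac{C}{\beta}\tfrac{\log n}{n}$ (resp. $\tfrac{C}{\beta n}$ in $d\ge4$, which is why the logarithm disappears there). This is the asserted estimate.

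The main obstacle I anticipate is the reflection-positivity bookkeeping rather than the analysis: setting up the transfer-matrix spectral representation for a single component $S^k$ (centered, so that it applies to the connected correlation and to $k=N$ as well), and justifying the Messager--Miracle-Sol\'e reduction of a general separation $x-y$ to the coordinate axis for both the transverse and the longitudinal components. The analytic core---passing from the momentum-space infrared bound to the mass bound on $\rho$ and then to pointwise decay---is a soft dyadic argument, but one must track the $d=3$ versus $d\ge4$ split with care, since it is exactly the borderline logarithmic divergence of the transverse integral in $d=3$ that produces the extra factor $\log n$ (and its absence in higher dimensions).
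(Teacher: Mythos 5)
Your route is genuinely different from the paper's. The paper stays entirely in real space: after the same reflection-positivity Schwarz reduction to on-axis points $x-y=n\hat\rme_1$ (which is what you call the Messager--Miracle-Sol\'e step), it tests the infrared bound \eqref{eq:infrared_infin_vol_centered_spins} on the indicator of the segment $\{0,\hat\rme_1,\ldots,n\hat\rme_1\}$ pointing in a fixed spin direction, expands to second order in the amplitude to get $\sum_{k,l=0}^{n}\lrangle{S^1_{k\hat\rme_1}S^1_{l\hat\rme_1}}\leq \frac{1}{\beta}\sum_{k,l=0}^{n}G(k\hat\rme_1,l\hat\rme_1)\leq \frac{c}{\beta}\,n\log n$, and then uses only the monotonicity of the on-axis correlation (the cited Seiler / Lees--Taggi fact, a weak corollary of your spectral representation) to bound the left-hand side below by $n^2\lrangle{S^1_0;S^1_{n\hat\rme_1}}$. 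That is all: no Fourier analysis, no spectral measure. Your plan (momentum-space infrared bound, marginal onto the axis direction, transfer-matrix measure $\rho$, dyadic Tauberian step) is heavier but coherent, and it makes transparent why the logarithm disappears for $d\geq 4$ (as does the paper's proof, since $\sum_k(1+k)^{2-d}$ then converges). However, as written it has one genuine hole and two smaller ones.

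The hole: your dyadic summation controls $\rho((0,\delta])$, not $\rho([0,\delta])$. The kernel $\frac{\sinh\mu}{\cosh\mu-\cos p_1}$ vanishes at $\mu=0$, so the bound on $\hat\gamma$ says nothing about a possible atom $\rho(\{0\})>0$; such an atom means exactly $\inf_m \lrangle{S^k_0;S^k_{m\hat\rme_1}}\geq \rho(\{0\})>0$, i.e.\ failure of clustering of the truncated correlation. This cannot be waved away: the paper stresses that the infinite-volume state $\lrangle{\cdot}$ is \emph{not} known to be ergodic or mixing, so decay of the truncated two-point function is part of what is being proven, and your final splitting of $\gamma(n)=\int e^{-\mu n}\,d\rho$ simply fails if the atom is present. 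The gap is fixable inside your own framework: the constant sequence $\rho(\{0\})$ and $\gamma_c(m)=\int_{(0,\infty)}e^{-\mu|m|}d\rho(\mu)$ are each positive definite, so the spectral measure of $(\gamma(m))_m$ (the marginal of $\hat g$ in the first momentum coordinate) splits as $2\pi\rho(\{0\})\,\delta_{p_1=0}$ plus a positive measure; since the infrared bound dominates this marginal by the absolutely continuous, integrable measure $\frac{C}{\beta}\log(1/|p_1|)\,dp_1$, the atom must vanish. The paper's argument never meets this issue, because monotonicity alone converts the summed bound into a pointwise one regardless of whether the correlation tends to zero.

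Two smaller points. First, for the tail $\int_{1/n}^{\infty}e^{-\mu n}d\rho$ you need mass bounds carrying the $1/\beta$ prefactor on \emph{all} of $(1/n,\infty)$: the dyadic kernel estimate covers $\mu\lesssim 1$, but for $\mu\geq 1$ you must add the (one-line) observation that the kernel is bounded below by a constant there while $\hat\gamma(\pi)\leq C/\beta$, giving $\rho([1,\infty))\leq C/\beta$; otherwise you only get an additive error $e^{-cn}$, which is \emph{not} $O(\beta^{-1}n^{-1}\log n)$ uniformly in $\beta$, and uniformity in $\beta$ is the whole point of the lemma. Second, the identification of $\hat\gamma$ with the marginal of $\hat g$, and the passage from the quadratic-form inequality $\lrangle{S^k;S^k}\leq \frac1\beta G$ to domination of measures in Fourier space, need the standard Herglotz/Fej\'er approximation arguments ($\sum_m\gamma(m)$ need not converge a priori, and the pointwise bounds on $\hat\gamma$ hold only almost everywhere); routine, but they should be stated.
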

\begin{proof}
	By translation invariance, we can reduce to the case \(y=0\). Then, by symmetry, one can assume \(\normsup{x} = x_1\). Furthermore, using reflection positivity, one has
	\begin{equation*}
		|\lrangle{f_0f_x}| = |\lrangle{f_0 \Theta f_{x^{\perp}}}| \leq \lrangle{f_0 f_{x^{\parallel}}}^{1/2} \lrangle{f_{x^{\perp}} f_{x}}^{1/2},
	\end{equation*}where \(\Theta\) is the reflection through the hyperplane \(\{y:\ y_1= x_1/2 \}\), \(f_x = S_x^k\), \(1\leq k< N\) or \(f_x = S_x^N-\lrangle{S_0^N}\), and \(x=x^{\parallel} + x^{\perp}\), \(x^{\parallel} = (x\cdot \rme_1)\, \rme_1\). Since, by translation invariance, \(\lrangle{f_{x^{\perp}} f_x } = \lrangle{f_0f_{x^{\parallel}}}\), the general claim is implied by its restriction to \(x= n\rme_1\).
	
First treat \(k\neq N\). By symmetry, it is sufficient to prove the result for \(k=1\). Set \(g(x)= z\rme_1\mathds{1}_{x\in\{0,\ldots, n\rme_1\}}\). Use then~\eqref{eq:infrared_infin_vol_centered_spins} for this \(g\). Expand both sides of the inequality, simplify the constant term, divide by \(z^2\) and take \(z\searrow 0\) to obtain
	\begin{equation*}
		\sum_{k,l=0}^{n} \lrangle{S_{k\rme_1}^1S_{l\rme_1}^1} \leq \frac{1}{\beta} \sum_{k,l=0}^{n} G(k\rme_1, l\rme_1),
	\end{equation*}where \(G(k\rme_1, l\rme_1) \leq c/(1+|k-l|)^{d-2}\). We can then use the standard fact (following from reflection positivity, see~\cite{Seiler-1982} or~\cite{Lees+Taggi-2019}) that \(\lrangle{f_{0} f_{n\rme_1} }\) in monotonic in \(n\) (for the same \(f\)s as before) to obtain
	\begin{equation*}
		n^{2}\lrangle{S_0^1S_{n\rme_1}^1}\leq \frac{c}{\beta} n\sum_{k=0}^{n} (1+k)^{2-d} \leq \frac{c}{\beta} n\log(n).
	\end{equation*}The proof for \(k=N\) follows exactly the same path but the truncature becomes non-trivial.
	\end{proof}

\begin{proof}[Proof of Lemma~\ref{lem:setof_pt_to_setof_pt_decay}]
To simplify notations, we conduct the proof with infinite volume notation. What we are really doing is to work in finite volume with \(h>0\) and take limits afterwards. We will use several times that if one conditions on everything but the sign field of \(\phi\), the latter is distributed according to a ferromagnetic Ising model, denoted \(\sigma\) in what follows, and is therefore coupled to a Random Cluster (RC) model \cite{Fortuin+Kasteleyn-1972}. We denote \(\lrangle{\cdot}'\) the conditional measure of the sign field \(\sigma\), and \(\Phi'\) the associated Random Cluster measure. We first prove the intermediate bound: there exist \(C<\infty, c>0\) such that for any \(x,y\in\Zd\) and \(T>0\),
	\begin{equation}
	\label{eq:UB_phi_TwoPtsFunct}
		\lrangle{\phi_x \phi_y}\leq \frac{C\log(1+|x-y|)}{1+|x-y|} + Ce^{-c{\beta}}.
	\end{equation}Indeed, introducing the event \(B=\{|\theta_x|\leq \frac{\pi}{2} \}\cap\{|\theta_y|\leq \frac{\pi}{2}\}\cap\{\norm{u_x}\leq \frac{\sqrt{3}}{2}\}\cap\{\norm{u_y}\leq \frac{\sqrt{3}}{2}\}\), one has (using the fact that, on $B$, \(\frac{1}{\rho_x\rho_y}\leq 4\), and \(\frac{|\theta_x\theta_y|}{|\sin\theta_x\sin\theta_y|}\leq \pi^2/4\))
	\begin{align*}
		0\leq \lrangle{\phi_x \phi_y} &= \beta\lrangle{\mathds{1}_B|\theta_x \theta_y|\lrangle{\sigma_x\sigma_y}'} + \beta\lrangle{\mathds{1}_{B^c}|\theta_x \theta_y|\lrangle{\sigma_x\sigma_y}'}\\
		&\leq \pi^2\beta\lrangle{\mathds{1}_B\rho_x\rho_y|\sin\theta_x\sin\theta_y|\lrangle{\sigma_x\sigma_y}'} + \pi^2\beta\mu(B^c)\\
		&\leq \pi^2\beta\Big(\lrangle{S_x^{N-1} S_{y}^{N-1}} + \beta\mu(B^c)\Big)
	\end{align*}
	To conclude, apply Lemma~\ref{lem:UB_twoPts_funct_spin} and observe that
	\begin{equation*}
		\mu(B^c)\leq 2\mu(|\theta_0|>\pi/2) + 2\mu(\|u_0\|>\sqrt{3}/2)\leq Ce^{-c\beta}
	\end{equation*}by a union bound and an application of eqs.\eqref{eq:sign_Sk_non_neg}-\eqref{eq:sign_SN_non_neg}. Before we turn to the proof
of our main claim, let us also observe the following: define
	\begin{equation}\label{deftildeC}
		\tilde{C}_K = \sup_{0<T<T_0}\sup_{\substack{p,p',\calF\\ \normI{p}+ p'\leq K}} \lrangle{|\phi^p\calF|}
	\end{equation}where the second sup is over \(p:\Z^d\to \Z_+\), \(p'\in\Z_+\) and \(\calF\) monomial in \(\tilde{u}_x^k\) with degree at most \(2p'\). By Theorem~\ref{thm:moments_u_theta} and Hölder's inequality, \(\tilde{C}_K<\infty\). 
	
Let us now turn to the proof of \eqref{lemma:eq1}. 
Let \(A=\{x:\ p_x\text{ is odd}\}\) and \(B= \{x:\ q_x\text{ is odd}\}\). 
Note that, since \(\sum_x p_x\) and \(\sum_x q_x\) are odd, \(|A|\) and $|B|$ are odd as well. 
Using the RC representation of the Ising model associated with the $\sigma$ variables discussed above,
one has 
\begin{equation}
\label{boundRC}	|\lrangle{\calF \phi^p \phi^{q}}| \leq \lrangle{|\calF \phi^p \phi^{q}| \lrangle{\sigma_A\sigma_{B}}'} 
\end{equation}
Let now \(\calE_{D}\) be the percolation event ``each cluster contains an even (possibly \(0\)) number of sites of \(D\)''. Since \(|A|\) and $|B|$ are odd, \(\calE_{A\cup B}\) implies the existence of \(x\in A,y\in B\) such that \(x\) and \(y\) are in the same connected component, denoted \(x\leftrightarrow y\). One has
	\begin{equation*}
		\lrangle{\sigma_A\sigma_{B}}' = \Phi'(\calE_{A\cup B}) \\
		\leq \sum_{x\in A}\sum_{y\in B} \Phi'(x\leftrightarrow y) = \sum_{x\in A}\sum_{y\in B} \lrangle{\sigma_x\sigma_y}'.
	\end{equation*}
	Plugging this bound in \eqref{boundRC}, one obtains
	\begin{equation}
	\label{eq:proof:UB_corr_eq2}
		|\lrangle{\calF \phi^p \phi^{q}}| \leq \sum_{x\in A}\sum_{y\in B} \lrangle{|\calF \phi^{p-\delta_x} \phi^{q-\delta_y}| \phi_x\phi_y}.
	\end{equation} We then partition on whether \(|\calF \phi^{p-\delta_x} \phi^{q-\delta_y}|\leq \beta^{\epsilon}\) or not, and use Cauchy-Schwartz to obtain
	\begin{align}\label{prooflemmathird}
		\lrangle{|\calF \phi^{p-\delta_x} \phi^{q-\delta_y}| \phi_x\phi_y} &\leq \beta^{\epsilon}\lrangle{ \phi_x\phi_y} + \lrangle{|\calF \phi^{p} \phi^{q}|^2}^{1/2} \mu(|\calF \phi^{p-\delta_x} \phi^{q-\delta_y}| > \beta^{\epsilon})^{1/2}\\
		&\leq \beta^{\epsilon}\lrangle{ \phi_x\phi_y} + (\tilde{C}_{2K_0})^{1/2} \Big(\frac{\tilde{C}_{K_0\lceil 2\alpha/\epsilon\rceil}}{\beta^{2\alpha}}\Big)^{1/2}
	\end{align}where we used the definition \eqref{deftildeC}, we let $K_0=\normI{p}+\normI{q}+ p'$, and in the last line we used Markov's inequality. Applying the upper bound on correlation~\eqref{eq:UB_phi_TwoPtsFunct} in this last estimate and plugging the resulting bound in~\eqref{eq:proof:UB_corr_eq2} and in~\eqref{eq:proof:UB_corr_eq1}, 
we obtain \eqref{lemma:eq1}. For \eqref{lemma:eq2}, letting $C'=\bigcup_{(x,\rme): q_{x,\rme}'\neq 0 } \{x,x+\rme\}$, 
we expand $$(\nabla\phi)^{q'} = \sum_{C\subset C'}\sum_{\substack{q:C\to\Z_+ \\ \normI{q} \leq \normI{q'}}} c_{q'}(q)\phi^{q}$$ 
for appropriate coefficients $c_{q'}(q)$.
We can then write
	\begin{equation}
	\label{eq:proof:UB_corr_eq1}
		\lrangle{\calF \phi^p (\nabla\phi)^{q'}} =  \sum_{C\subset C'}\sum_{\substack{q:C\to\Z_+ \\ \normI{q} \leq \normI{q'}}} c_{q'}(q)\lrangle{\calF \phi^p \phi^{q}}.
	\end{equation}Now, by symmetry, \(\lrangle{\calF \phi^p \phi^{q}}\) is \(0\) if \(\sum_{x\in C} q_x\) is even. In particular, defining \(C_{q} = \{x\in C:\ q_x\text{ odd}\}\), \(|C_{q}|\) is odd. Then we bound $\lrangle{\calF \phi^p \phi^{q}}$ as in eqs.\eqref{boundRC} to \eqref{prooflemmathird} and, using \eqref{eq:UB_phi_TwoPtsFunct}, 
	we obtain \eqref{lemma:eq2}, as desired. 
\end{proof}

\section{Gaussian estimates}\label{app:B}

\subsection{Correlations and massive correlations}

Let us first state and prove the following standard bounds on the massive lattice Green's function. 
\begin{theorem}
	\label{thm:Gaussian_asymptotics}
	For $d\ge 3$, there exists \(c=c(d)\) such that for any \(m\geq 0\), \(x\in\Zd\), \(e,e'\in\rmB\),
	\begin{equation*}
		0\le G^m_{0x}\leq \frac{c}{(1+|x|)^{d-2}},\quad |\nabla_x^\rme G^m(0,\cdot)| \leq \frac{c}{(1+|x|)^{d-1}},\quad |\nabla_{0,x}^{\rme',\rme} G^m| \leq \frac{c}{(1+|x|)^{d}}.
	\end{equation*}
\end{theorem}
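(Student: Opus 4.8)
The plan is to work in Fourier space, writing $G^m(0,x) = (2\pi)^{-d}\int_{[-\pi,\pi]^d} \frac{e^{i p\cdot x}}{\lambda_m(p)}\,dp$ with $\lambda_m(p) = m^2 + 2\sum_{k=1}^d(1-\cos p_k)$, and to reduce as much as possible to the massless case, exploiting that the mass only improves decay. For the first bound I would first establish non-negativity and the pointwise domination $0\le G^m_{0x}\le G^0_{0x}$ directly from the Neumann series $G^m = \sum_{n\ge 0}(2d+m^2)^{-(n+1)}A^n$, where $A$ is the adjacency operator with non-negative integer entries (counting nearest-neighbour walks): each term is non-negative and decreasing in $m^2$. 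This immediately reduces the upper bound $G^m_{0x}\le c(1+|x|)^{2-d}$ to the classical massless estimate, which itself follows from the Fourier representation, boundedness of $G^0(0,0)$ for $d\ge 3$ handling small $|x|$ while the $|x|^{2-d}$ decay comes from the $|p|^{-2}$ singularity at the origin.

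The two gradient bounds cannot be obtained by pointwise domination, since discrete derivatives change sign; here I would argue uniformly in $m$ using only the lower bound $\lambda_m(p)\ge \lambda_0(p)\ge c|p|^2$. Applying the difference operators under the integral produces the multipliers $(e^{ip\cdot e}-1)$ and $(e^{ip\cdot e}-1)(e^{-ip\cdot e'}-1)$, which vanish to order $|p|$ and $|p|^2$ near the origin, exactly compensating the singularity of $\lambda_m^{-1}$ and leaving integrands that are $O(|p|^{-1})$ and $O(1)$ respectively. The decay in $|x|$ is then extracted by the standard oscillatory-integral argument: integrate by parts repeatedly against $e^{ip\cdot x}$ on the region where the integrand is smooth, while controlling the contribution of a neighbourhood of the origin by a dyadic decomposition on shells $2^{-j}\le |p|\le 2^{-j+1}$ down to scale $1/|x|$. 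Summing the dyadic contributions yields the claimed powers $(1+|x|)^{1-d}$ and $(1+|x|)^{-d}$, with constants independent of $m\ge 0$ because every step used only $\lambda_m(p)\ge c|p|^2$.

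Alternatively, and perhaps more transparently as regards the uniformity in $m$, I would use the heat-kernel representation $G^m(0,x)=\int_0^\infty e^{-m^2 t} p_t(0,x)\,dt$, where $p_t$ is the continuous-time simple random walk kernel, together with the Gaussian heat-kernel bound $p_t(0,x)\le C t^{-d/2}e^{-c|x|^2/t}$ and its difference versions $|\nabla^e_x p_t(0,x)|\le Ct^{-(d+1)/2}e^{-c|x|^2/t}$ and $|\nabla^{e',e}_{0,x}p_t|\le Ct^{-(d+2)/2}e^{-c|x|^2/t}$. Since $e^{-m^2 t}\le 1$, it suffices to bound the $m=0$ integrals, and the substitution $s=|x|^2/t$ turns each one into an elementary Gamma integral, convergent precisely for $d\ge 3$ in the first case and for all $d\ge 1$ in the gradient cases, producing the powers $|x|^{2-d}$, $|x|^{1-d}$, $|x|^{-d}$; the regularisation $(1+|x|)$ in the denominators absorbs the short-range regime via $G^0(0,0)<\infty$.

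The main obstacle is not conceptual but technical: establishing the sharp first- and second-difference Gaussian bounds on the lattice heat kernel uniformly for all $t>0$ and $x\in\Zd$ (or, in the Fourier route, carrying out the oscillatory-integral estimate cleanly through the singularity at $p=0$ while keeping all constants independent of $m$). Both are classical, so I would either cite a standard reference for the heat-kernel estimates or spell out the dyadic Fourier estimate, noting that all the decay rates and the $d\ge 3$ threshold fall out of the same elementary scaling computation.
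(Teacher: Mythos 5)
Your proposal is correct, but it proves the theorem by a genuinely different route than the paper. The paper starts from the Bessel-function representation $G^m(0,x)=\int_0^\infty e^{-(m^2+2d)t}\prod_{k=1}^d I_{x_k}(2t)\,dt$ and observes that, term by term, $G^m_{0x}$, $|\nabla^{e}_xG^m(0,\cdot)|$, and $|\nabla^{e',e}_{0,x}G^m|$ with $e\cdot e'=0$ are \emph{monotone decreasing in $m$}, so all these cases reduce to the classical massless estimates (cited from Lawler--Limic). The one case where this monotonicity argument is unavailable --- the second difference with $e=e'$ at a generic off-axis point --- is handled by a separate trick: reflection positivity of the Gaussian measure through a tilted hyperplane plus Cauchy--Schwarz, which bounds the parallel-parallel second difference by a geometric mean of orthogonal ones already controlled. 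Your Fourier route instead proves the bounds uniformly in $m$ in one stroke: the multipliers $(e^{ip\cdot e}-1)$ and $(e^{ip\cdot e}-1)(e^{-ip\cdot e'}-1)$ cancel the singularity of $\lambda_m(p)^{-1}$, and since $p$-derivatives of $\lambda_m$ do not depend on $m$ while $\lambda_m\ge\lambda_0\ge c|p|^2$, the dyadic oscillatory-integral estimate carries constants independent of $m$ with no case distinction on $e,e'$ and no use of reflection positivity. What the paper's approach buys is the ability to quote classical massless results and avoid any multiplier calculus; what yours buys is uniformity and structural simplicity (all three bounds come from the same scaling computation), at the price of actually carrying out the integration-by-parts/dyadic argument, which you correctly identify as routine but not free.

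One caveat on your alternative heat-kernel route (which, note, uses exactly the paper's representation, since $p_t(0,x)=e^{-2dt}\prod_k I_{x_k}(2t)$): the Gaussian upper bound $p_t(0,x)\le Ct^{-d/2}e^{-c|x|^2/t}$ is \emph{false} for lattice walks in the regime $|x|\gg t$, where the tail is Poissonian, of order $e^{-c|x|\log(|x|/t)}$; the same applies to the stated difference estimates. This is harmless for the polynomial conclusions (split the time integral at $t=|x|$ and use the crude bound $p_t(0,x)\le e^{-c|x|}$ for $t<|x|$), but as literally written that step is incorrect and would need the standard two-regime form of the kernel bounds. Since your primary (Fourier) route is complete and correct, this does not affect the validity of the proposal.
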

\begin{proof}
We use the following integral representation of \(G^m\) (see, for example,~\cite{Michta+Slade-2021}):
\begin{equation}
	\label{eq:Green_function_Bessel_rep}
	G^m(0,x) = \int_{0}^{\infty}dt e^{-(m^2+2d)t} \prod_{k=1}^d I_{x_k}(2t),
\end{equation}where \(I_{\nu}(t)\) is the modified Bessel function of the first kind. We shall use that \(I_{\nu}(t)\geq 0\), \(I_{-\nu}(t) = I_{\nu}(t)\), and \(I_{\nu}(t)\geq I_{\nu+1}(t)\) for \(\nu\in\Z_+, t\in\R_+\) \cite{Soni-1965}. The asymptotic \(I_{\nu}(t)\sim O((2\pi t)^{-1/2}e^{t})\) valid for large real argument makes the integral well defined at \(m=0\) for any \(d\geq 3\).

	The claim at \(m=0\) is classical and proven, for example, in~\cite{Lawler+Limic-2010}. The integral representation~\eqref{eq:Green_function_Bessel_rep} together with the monotonicity of \(I_{\nu}(t)\) in \(\nu\) implies \(G^m_{0x}\), \(|\nabla_x^\rme G^m(0,\cdot)|\), and \(|\nabla_{0,x}^{\rme',\rme}G^m|\) when \(\rme\cdot \rme'= 0\) are 
	monotone decreasing in $m$; therefore, in these cases, the bound for $m=0$ implies the one for $m\neq0$. 
By translation invariance, we are then left with bounding \(|\nabla_{0,x}^{\rme',\rme}G^m|\) for \(\rme= \rme'\); we can further restrict to \(\rme=\rme_1\) by symmetry. Consider first \(x_1=0\). Then, using~\eqref{eq:Green_function_Bessel_rep},
	\begin{equation*}
		|\nabla_{0,x}^{\rme_1,\rme_1} G^m| =  2\int_{0}^{\infty}dt e^{-(m^2+d)t} \big( I_{0}(t)-I_{1}(t) \big)\prod_{k=2}^{d} I_{x_k}(t),
	\end{equation*}which is again decreasing in \(m\). We then consider \(x_1\geq 1\) (which will cover all leftover cases by symmetry). W.l.o.g., we can suppose \(x_2,\cdots,x_{d}\geq 0\) and \(x_2=\max(x_2,\cdots,x_d)\). We first treat \(x_1\geq x_2\). Using the reflection positivity of the Gaussian measure $\Phi_m$, and denoting by $\Theta$
the reflection through the hyperplane \(\{y:\ y_2= \lfloor x_1/2\rfloor -y_1\}\) (which passes through sites of $\Lambda$ and is tilted by an angle $\pi/4$ w.r.t. the $i$-th coordinate axis, $i>2$)), we find: 
	\begin{multline*}
		|\nabla_{0,x}^{\rme_1,\rme_1} G^m| = |\Phi_m(\nabla_{0}^{\rme_1} \phi \nabla_{x}^{\rme_1} \phi)| = |\Phi_m(\nabla_{0}^{\rme_1} \phi \Theta \nabla_{\bar{x}}^{-\rme_2} \phi)|\leq\\
		\leq \Phi_m(\nabla_{0}^{\rme_1} \phi \Theta \nabla_{0}^{\rme_1}\phi_0)^{1/2} \Phi_m(\nabla_{\bar{x}}^{-\rme_2} \phi \Theta \nabla_{\bar{x}}^{-\rme_2} \phi)^{1/2} = \Phi_m(\nabla_{0}^{\rme_1} \phi \nabla_{\tilde{x}}^{-\rme_2}\phi)^{1/2} \Phi_m(\nabla_{\bar{x}}^{-\rme_2} \phi \nabla_{x}^{\rme_1} \phi)^{1/2}
	\end{multline*}where \(\bar{x} = (\lfloor x_1/2\rfloor-x_2, \lfloor x_1/2\rfloor- x_1,x_3,\cdots,x_d)\) and \(\tilde{x} = (\lfloor x_1/2\rfloor,\lfloor x_1/2\rfloor,0,\cdots, 0 )\). Now, as \(e\cdot \rme_2 =0\), \( \norm{\tilde{x}}\geq cx_1\geq c d^{-1/2}\norm{x}\), and \( \norm{\bar{x}- x} \geq c' x_1 \geq c' d^{-1/2}\norm{x}\), we can apply the previously obtained bounds to get the result. The case \(x_2>x_1\) is treated similarly using a reflection through \(\{y:\ y_2= \lfloor x_2/2\rfloor -y_1\}\).
\end{proof}

We next gather the following estimates on the sum of $G^m$ and of their derivatives, as well as on the difference between $G^m$ and  $G^0$, which are used systematically in the proof of our main result. 

\begin{theorem}
	\label{thm:Gaussian_estimates}
	There exists \(c<\infty\) such that for any \(0<m\le 1\),
	\begin{enumerate}
		\item \begin{equation}
		\label{eq:Gaussian_est:sum_of_G}
			\sum_{x\in\Zd} G^m(0,x) = \frac{1}{m^2}
		\end{equation}
		\item For any \(x\) \begin{equation}
		\label{eq:Gaussian_est:mass_removal_G}
			|G^{m}(0,x) - G(0,x)|\leq c m.
		\end{equation}
		\item for \(\rme \in\rmB\),
		\begin{equation}
		\label{eq:Gaussian_est:sum_of_gradG}
			\sum_{x\in \Zd} |\Phi_m\big( \phi_0(\phi_x-\phi_{x+\rme})\big)|\leq c m^{-1}.
		\end{equation}
		\item for \(\rme,\rme' \in\rmB\),
		\begin{equation}
		\label{eq:Gaussian_est:sum_of_gradgradG_general}
			\sum_{x\in \Zd} |\Phi_m\big((\phi_0-\phi_{\rme})(\phi_x-\phi_{x+\rme'})\big) |\leq c|\log(m)|.
		\end{equation}
		\item For \(0<\alpha<1\), there exists \(C<\infty\) such that for any \(y\in\bbZ^d\),
		\begin{gather}
		\begin{split}
		\label{eq:Gaussian_est:sum_of_G_gradG_with_power}
			\sum_{x\in\Zd} G^m(0,x) (1+|x-y|)^{-\alpha} \leq Cm^{-2+\alpha},\\
			\sum_{x\in\Zd} |\Phi_m\big( \phi_0(\phi_x-\phi_{x+\rme})\big)| (1+|x-y|)^{-\alpha} \leq Cm^{-1+\alpha}.
		\end{split}
		\end{gather}
		\item For every \(k\geq 0\) there exists \(C<\infty\) such that for any \(y\in\bbZ^d\),
		\begin{equation}
		\label{eq:Gaussian_est:sum_of_gradgradG_with_power}
			\sum_{x\in\Zd} |\Phi_m\big( (\phi_0-\phi_{\rme'})(\phi_x-\phi_{x+\rme})\big)| \frac{\log^k(1+|x-y|)}{1+|x-y|}\leq C\frac{\log^{k+1}(1+|y|)}{(1+|y|)}.
		\end{equation}
	\end{enumerate}
\end{theorem}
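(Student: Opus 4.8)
The plan is to prove the six bounds with two complementary tools: the Fourier representation of $G^m$ for the pointwise statements \eqref{eq:Gaussian_est:sum_of_G}--\eqref{eq:Gaussian_est:mass_removal_G}, and the Bessel integral representation \eqref{eq:Green_function_Bessel_rep} for the summed statements, where the mass enters only through a single factor $e^{-m^2 t}$ that tames the infrared (large-$t$) behaviour. For \eqref{eq:Gaussian_est:sum_of_G} I would note that the constant function is an eigenfunction, $(-\Delta+m^2)\mathbf 1=m^2\mathbf 1$, whence $\sum_x G^m(0,x)=\widehat{G^m}(0)=m^{-2}$. For \eqref{eq:Gaussian_est:mass_removal_G} I would write, with $\lambda(p)=2\sum_k(1-\cos p_k)$, the identity $\tfrac{1}{m^2+\lambda(p)}-\tfrac1{\lambda(p)}=-\tfrac{m^2}{\lambda(p)(m^2+\lambda(p))}$, so that $\sup_x|G^m(0,x)-G(0,x)|\le m^2\int\frac{dp}{(2\pi)^d}\frac{1}{\lambda(p)(m^2+\lambda(p))}$; the substitution $p=mq$ exhibits this integral as $O(m^{d-2})$ for $d\ge5$, $O(m^2|\log m|)$ for $d=4$ and $O(m)$ for $d=3$, so it is always $\le cm$, the exponent being pinned down by the critical dimension $d=3$.

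The heart of the matter is \eqref{eq:Gaussian_est:sum_of_gradG}--\eqref{eq:Gaussian_est:sum_of_gradgradG_general}, where I would use \eqref{eq:Green_function_Bessel_rep}. Since the discrete differences act coordinatewise, the sum over $x\in\Zd$ factorizes under the integral: $\sum_{n\in\Z}I_n(2t)=e^{2t}$, while $\sum_{n\in\Z}|I_{n+1}(2t)-I_n(2t)|=2I_0(2t)$ by unimodality of $n\mapsto I_n(2t)$ and telescoping. Thus $\sum_x|\nabla_x^{e}G^m(0,\cdot)|\le 2\int_0^\infty e^{-(m^2+2)t}I_0(2t)\,dt$; the small-$t$ part is bounded, and the asymptotics $I_0(2t)\sim e^{2t}(4\pi t)^{-1/2}$ reduce the large-$t$ part to $\int_1^\infty e^{-m^2 t}t^{-1/2}\,dt=O(m^{-1})$, giving \eqref{eq:Gaussian_est:sum_of_gradG}. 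For \eqref{eq:Gaussian_est:sum_of_gradgradG_general} the same device applies to the second difference: when $e=e'$ one meets $\sum_n|I_{n+1}-2I_n+I_{n-1}|(2t)$, which is the total variation of the first differences, of size $O\big((I_0-I_1)(2t)\big)$, and leads to a convergent $\int_1^\infty e^{-m^2 t}t^{-3/2}\,dt=O(1)$; when $e\ne e'$ the two differences hit distinct coordinates and one gets $4\int_0^\infty e^{-(m^2+4)t}I_0(2t)^2\,dt$, whose large-$t$ part, via $I_0(2t)^2\sim e^{4t}(4\pi t)^{-1}$, is $\int_1^\infty e^{-m^2 t}t^{-1}\,dt=O(|\log m|)$. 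The off-diagonal term is thus dominant and accounts for the logarithm.

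For the weighted sums \eqref{eq:Gaussian_est:sum_of_G_gradG_with_power} I would split at the correlation scale $|x-y|\sim m^{-1}$. In the far region $|x-y|>m^{-1}$ the weight is at most $m^\alpha$, so it factors out and the total-mass bounds \eqref{eq:Gaussian_est:sum_of_G} and \eqref{eq:Gaussian_est:sum_of_gradG} just proved yield $m^{-2+\alpha}$ and $m^{-1+\alpha}$ respectively. In the near region $|x-y|\le m^{-1}$ I would combine the uniform pointwise decay of Theorem~\ref{thm:Gaussian_asymptotics} with the exponential cutoff of $G^m$ at scale $m^{-1}$ (read off from a saddle-point analysis of \eqref{eq:Green_function_Bessel_rep}), decomposing dyadically the shells around $y$ and distinguishing whether $y$ lies inside or outside the ball of radius $m^{-1}$ about the origin. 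The estimate \eqref{eq:Gaussian_est:sum_of_gradgradG_with_power} is the easiest: since $|\nabla_{0,x}^{e',e}G^m|\le c(1+|x|)^{-d}$ uniformly in $m$ by Theorem~\ref{thm:Gaussian_asymptotics}, and the extra factor $(1+|x-y|)^{-1}$ renders the kernel absolutely summable uniformly in $m$, it reduces to the deterministic lattice-convolution bound of Lemma~\ref{app:lem:sum_Zd_a_b_to_log}.

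I expect the main obstacle to be \eqref{eq:Gaussian_est:sum_of_G_gradG_with_power}: unlike the other bounds it genuinely needs the exponential decay of $G^m$ beyond the correlation length, since the uniform polynomial bounds alone make the near-region sum divergent, and the required uniformity in $y$ forces the case split according to the position of $y$ relative to the ball of radius $m^{-1}$. The logarithm in \eqref{eq:Gaussian_est:sum_of_gradgradG_general}, arising solely from the off-diagonal second differences, is the other point where the bookkeeping must be carried out with care.
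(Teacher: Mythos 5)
Your route is genuinely different from the paper's for items (1)--(5): the paper does not prove these at all, but cites \cite{Bricmont+Fontaine+Lebowitz+Spencer-1980} (Prop.~A1(a),(b),(d),(e) and Prop.~A5), and only proves \eqref{eq:Gaussian_est:sum_of_gradgradG_with_power} itself, via the uniform bound $|\nabla^{\rme',\rme}_{0,x}G^m|\le c(1+|x|)^{-d}$ of Theorem~\ref{thm:Gaussian_asymptotics} and a three-region splitting of the sum. Your self-contained arguments for (1), (2), (3) and the far/near splitting in (5) are correct (in (2), the rate for $d\ge5$ is $O(m^2)$ rather than $O(m^{d-2})$, but the conclusion $\le cm$ is unaffected; in (5), the near-region sum over $\{|x-y|\le m^{-1}\}$ is in fact already controlled by the uniform polynomial decay alone, so the exponential cutoff of $G^m$ is not needed there).

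There is, however, a genuine error in your treatment of the diagonal case $\rme=\rme'$ of \eqref{eq:Gaussian_est:sum_of_gradgradG_general}. You claim $\sum_n|I_{n+1}-2I_n+I_{n-1}|(2t)=O\bigl((I_0-I_1)(2t)\bigr)=O(e^{2t}t^{-3/2})$. This is false for large $t$: writing $e^{-2t}I_n(2t)=p_t(n)\approx(4\pi t)^{-1/2}e^{-n^2/(4t)}$, the second difference behaves like $p_t''(n)$, and $\sum_n|p_t''(n)|\approx\int|p_t''|\sim c/t$, so the total variation of the first differences is of order $e^{2t}/t$, not $e^{2t}t^{-3/2}$. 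The telescoping you invoke presupposes that the first differences $J_n=I_{n+1}-I_n$ take their extreme values at $n=-1,0$, i.e.\ that $I_n(2t)$ is convex in $n$ away from $n=0$; in fact $I_n(2t)$ is concave in $n$ throughout $|n|\lesssim\sqrt{t}$, and $|J_n|$ peaks near $n\approx\pm\sqrt{2t}$ where it is of order $e^{2t}/t$. With the corrected bound the diagonal contribution is $\int_1^\infty e^{-m^2t}\,t^{-1}\,dt=O(|\log m|)$, so the statement of the theorem survives; but your conclusions that the diagonal term is $O(1)$ and that the off-diagonal term alone ``accounts for the logarithm'' are both wrong. Indeed the diagonal sum is genuinely of order $\log(1/m)$: summing $|\nabla^{\rme,\rme}_{0,x}G|\sim c|x|^{-d}$ over $|x|\le m^{-1}$ already produces it.

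A second, smaller gap is item (6). It does not ``reduce to'' Lemma~\ref{app:lem:sum_Zd_a_b_to_log}, because that lemma carries no logarithmic weights: if you dominate $\log^k(1+|x-y|)$ by $C_\epsilon(1+|x-y|)^{\epsilon}$ and then apply the lemma with exponents $(d,1-\epsilon)$, you obtain $C\log(1+|y|)/(1+|y|)^{1-\epsilon}$, which is strictly weaker than the claimed $C\log^{k+1}(1+|y|)/(1+|y|)$. One must instead redo the lemma's proof keeping the logarithms, splitting into $\{|x|\le|y|/2\}$, $\{|x-y|<|y|/2\}$ and the remaining region, where H\"older's inequality is used; this is exactly the computation the paper carries out for \eqref{eq:Gaussian_est:sum_of_gradgradG_with_power}, and it is the only part of the theorem the paper proves from scratch.
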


\begin{proof}
	All estimates are proved in~\cite{Bricmont+Fontaine+Lebowitz+Spencer-1980}, except for the last one. More precisely, 
	for \eqref{eq:Gaussian_est:sum_of_G}, 
see \cite[Prop.A1(a)]{Bricmont+Fontaine+Lebowitz+Spencer-1980}; for \eqref{eq:Gaussian_est:mass_removal_G}, 
see \cite[Prop.A5]{Bricmont+Fontaine+Lebowitz+Spencer-1980}; for \eqref{eq:Gaussian_est:sum_of_gradG},
see \cite[Prop.A1(b)]{Bricmont+Fontaine+Lebowitz+Spencer-1980}; for \eqref{eq:Gaussian_est:sum_of_gradgradG_general}, 
see \cite[Prop.A1(d)]{Bricmont+Fontaine+Lebowitz+Spencer-1980}; for \eqref{eq:Gaussian_est:sum_of_G_gradG_with_power}
see \cite[Prop.A1(e)]{Bricmont+Fontaine+Lebowitz+Spencer-1980}. We are left with proving \eqref{eq:Gaussian_est:sum_of_gradgradG_with_power}: 
using Theorem~\ref{thm:Gaussian_asymptotics}, we have the upper bound
	\begin{equation*}
		\sum_{x\in\Zd} \frac{c}{(1+ |x|)^d} \frac{\log^k(1+|x-y|)}{1+|x-y|} = \sum_{x:|x|\leq |y|/2} \cdot + \sum_{x:|x-y|< |y|/2}\cdot  + \sum_{x:|x|> |y|/2, |x-y|\geq |y|/2}\cdot.
	\end{equation*}
	In the first sum, one has \(|y|/2\le |x-y|\le 3|y|/2\), which allows us to bound from above the first term by \( C'\frac{\log^k(1+|y|)}{1+|y|}  \sum_{|x|\leq |y|/2} (1+|x|)^{-d}\le C
	\frac{\log^k(1+|y|)}{1+|y|} \log(1+|y|)\).  In the same fashion, in the second sum, \(|x|> |y|/2\) so the second term is  bounded from above by \(C'\frac{\log^k(1+|y|)}{(1+|y|)^d}
	\sum_{|x-y|<|y|/2}(1+|x-y|)^{-1}\le C\frac{\log^k(1+|y|)}{1+|y|}\). In the last case, using H\"older's inequality, we see that the third sum is  bounded from above by
	$$C'\Big(\sum_{|x|>|y|/2}\frac1{(1+|x|)^{d+1}}\Big)^{\frac{d}{d+1}}\Big(\sum_{|x-y|\ge |y|/2}\frac{\log^{k(d+1)}(1+|x-y|)}{(1+|x-y|)^{d+1}}\Big)^{\frac{1}{d+1}}\le C\frac{\log^k(1+|y|)}{1+|y|}.$$  This proves the claim (\(C\) depends on \(k,d\)).
\end{proof}


In order to compare massive and massless propagators, we use Theorem \ref{thm:Gaussian_asymptotics} in combination with eq.\eqref{eq:Gaussian_est:mass_removal_G} and the following Lemma.
\begin{lemma}
	\label{lem:decay_and_mass_removal_implies_massDecay}
	Suppose \(f_m(x)\) is such that
	\begin{itemize}
		\item \(\sup_{m\geq 0}|f_m(x)|\leq C(1+|x|)^{-\gamma}\),
		\item \(|f_m(x)-f_0(x)| \leq Cm\),
	\end{itemize}for some \(C,\gamma>0\). {Then, for any $0<\gamma'<\gamma$, there exists \(C'>0\) such that
	\begin{equation}\label{concl}
		|f_m(x)-f_0(x)|\leq C' m^{1-\gamma'/\gamma}(1+|x|)^{-\gamma'}.
	\end{equation}}
\end{lemma}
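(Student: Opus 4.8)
The plan is to prove \eqref{concl} by a simple geometric-mean (interpolation) argument between the two hypotheses, exploiting that one of the two available bounds on $f_m - f_0$ is uniform in $m$ but decays in $x$, while the other degenerates as $m\to 0$ but carries no spatial decay. Trading a power of one against a power of the other produces simultaneously a decay factor and a factor vanishing with $m$.

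First I would record the two a priori bounds on $D(x):=|f_m(x)-f_0(x)|$ that follow directly from the assumptions. From the first bullet and the triangle inequality, $D(x)\le |f_m(x)|+|f_0(x)|\le 2C(1+|x|)^{-\gamma}$; from the second bullet, $D(x)\le Cm$. Both hold for every $x\in\Zd$ and every $m\ge 0$.

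Next, fix $\gamma'\in(0,\gamma)$ and set $\theta:=\gamma'/\gamma\in(0,1)$. Writing $D(x)=D(x)^{\theta}\,D(x)^{1-\theta}$ and bounding the first factor by the spatial-decay estimate and the second by the mass estimate, I obtain
\[
  D(x)\ \le\ \big(2C(1+|x|)^{-\gamma}\big)^{\theta}\,(Cm)^{1-\theta}\ =\ (2C)^{\theta}C^{1-\theta}\,m^{1-\theta}\,(1+|x|)^{-\theta\gamma}.
\]
Since $\theta\gamma=\gamma'$ and $1-\theta=1-\gamma'/\gamma$, this is exactly \eqref{concl} with $C'=(2C)^{\gamma'/\gamma}C^{1-\gamma'/\gamma}$.

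There is no genuine obstacle: the statement is an elementary interpolation, and the only point to keep in mind is that the exponent split must be chosen as $\theta=\gamma'/\gamma$, so that the power of $(1+|x|)$ produced by the decay bound matches the target exponent $\gamma'$ while the complementary weight $1-\theta$ of the mass bound yields the claimed $m^{1-\gamma'/\gamma}$. One may, if desired, restrict to $m\le 1$ (as in the applications, where $m=T^{\gamma}$ or $m=e^{-(\log T)^2}$ is small), but this restriction is not needed for the argument.
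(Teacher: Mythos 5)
Your proof is correct and is essentially the paper's own argument: the paper likewise writes \(|f_m(x)-f_0(x)|\leq |f_m(x)-f_0(x)|^{1-\gamma'/\gamma}\,\bigl(2\sup_{m\geq 0}|f_m(x)|\bigr)^{\gamma'/\gamma}\) and then inserts the two hypotheses, which is exactly your geometric-mean interpolation with \(\theta=\gamma'/\gamma\). Nothing further is needed.
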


\begin{proof} One has \(|f_m(x)-f_0(x)|\leq |f_m(x)-f_0(x)|^{1-\gamma'/\gamma}(2\sup_{m\geq 0}|f_m(x)|)^{\gamma'/\gamma}$. From this, the two assumptions readily imply 
\eqref{concl}.
\end{proof}

We conclude this subsection with a simple lemma that we used in many instances.

\begin{lemma}
	\label{app:lem:sum_Zd_a_b_to_log}
	Let \(a,b> 0\) be such that \(a+b >d\). Then, there exists \(C=C(a,b)\) such that for any \(u\in \Zd\),
	\begin{equation*}
		\sum_{x\in \Zd} (1+|x|)^{-a} (1+|x-u|)^{-b} \leq C \frac{\log(1+|u|)}{(1+|u|)^{a+b-{d}}}.
	\end{equation*}
\end{lemma}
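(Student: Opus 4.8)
The plan is to prove the bound by a three–region decomposition of the sum over $x$, organized around the two ``centers'' $0$ and $u$. Write $R = 1+|u|$ and assume $|u|\ge 1$ (the excluded point $u=0$ is immaterial for the applications above, where the two base points are always distinct). Decompose $\Zd = \mathrm{I}\cup\mathrm{II}\cup\mathrm{III}$ with $\mathrm{I}=\{x:\ |x|\le |u|/2\}$, $\mathrm{II}=\{x:\ |x-u|\le |u|/2\}$, and $\mathrm{III}$ the complementary far region $\{x:\ |x|>|u|/2,\ |x-u|>|u|/2\}$. The only two elementary facts I would use are, for $\rho\ge 1$, the bulk estimate $\sum_{|x|\le\rho}(1+|x|)^{-a}\le C\rho^{d-a}$ valid for $a<d$ (and $\le C\log\rho$ for $a=d$), and the tail estimate $\sum_{|x|>\rho}(1+|x|)^{-c}\le C\rho^{d-c}$ valid for $c>d$; both follow by summing over dyadic shells.

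On region $\mathrm{I}$ one has $|x-u|\ge |u|/2$, so $(1+|x-u|)^{-b}\le CR^{-b}$, and the contribution is at most $CR^{-b}\sum_{|x|\le|u|/2}(1+|x|)^{-a}\le CR^{-b}\,R^{d-a}=CR^{-(a+b-d)}$, with an extra factor $\log R$ precisely when $a=d$. Region $\mathrm{II}$ is identical after exchanging the roles of the two factors, giving the same bound $CR^{-(a+b-d)}$ (with a $\log R$ when $b=d$). For the far region $\mathrm{III}$ I would split further according to which center is closer: on $\mathrm{III}\cap\{|x-u|\ge|x|\}$ we have $(1+|x-u|)^{-b}\le(1+|x|)^{-b}$, so the summand is $\le(1+|x|)^{-(a+b)}$, and since $a+b>d$ the tail estimate with $\rho=|u|/2\sim R$ yields $CR^{d-a-b}=CR^{-(a+b-d)}$; the complementary piece $\{|x|>|x-u|\}$ is symmetric. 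Summing the three contributions gives the claimed bound.

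The computation is entirely routine, so there is no real analytic obstacle; the only points deserving care are bookkeeping ones. First, the logarithmic factor in the statement is genuinely produced only at the borderline exponents — from region $\mathrm{I}$ (resp.\ $\mathrm{II}$) when $a=d$ (resp.\ $b=d$) — whereas in the strict interior regime $a,b<d$, $a+b>d$, which is the only regime in which the lemma is invoked in Section~\ref{sec:Inductive_exp}, the bound in fact holds with $\log(1+|u|)$ removed; stating it with the $\log$ merely gives a convenient, slightly weaker estimate. Second, one should keep in mind that the stated rate $a+b-d$ is the correct decay rate exactly when $\max(a,b)\le d$: this is precisely what makes regions $\mathrm{I}$ and $\mathrm{II}$ above contribute at rate $R^{-(a+b-d)}$ rather than faster, and it is satisfied in every application (there $a,b$ are of the form $d-\tfrac12,\ d-1-\epsilon$, etc., all $<d$). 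With these observations the three elementary summations complete the proof.
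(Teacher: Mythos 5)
Your proof is correct and takes essentially the same route as the paper's: the identical three-region decomposition into $\{|x|\le |u|/2\}$, $\{|x-u|\le |u|/2\}$ and the far region, with the same bulk/tail shell estimates; the only immaterial difference is that in the far region you split according to the nearer centre, whereas the paper uses the triangle inequality $|x|\le 3|x-u|$ there to reduce to a single tail sum with exponent $a+b>d$. Your side observation is also accurate: the statement (and the paper's proof of it) tacitly requires $\max(a,b)\le d$, since for $a>d$ the single term $x=0$ already contributes $(1+|u|)^{-b}$, which dominates $\log(1+|u|)\,(1+|u|)^{-(a+b-d)}$ for large $|u|$ — but every invocation in the paper has $a,b\le d$, so nothing downstream is affected.
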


\begin{proof}
	Separate the sum over \(x\) into \(\sum_{|x|\leq |u|/2} + \sum_{|x-u|< |u|/2} + \sum_{\substack{|x|> |u|/2\\ |x-u| \ge |u|/2}}\). In the first sum, one has \(|x-u|\geq |u|/2\), which allows us to bound from above the first term by \( C \frac{\log(1+|u|)}{(1+|u|)^{a+b-d}}\) (the log only appears if \(a=d\)). In the same fashion, in the second sum, \(|x|> |u|/2\) so the second term is  bounded from above by \(C \frac{\log(1+|u|)}{(1+|u|)^{a-d+b}}\) (the log only appears if \(b=d\)). In the last case, one has \(|x|\leq |x-u|+|u|\leq 3|x-u|\). So, the last sum is  bounded from above by \(C\sum_{|x|> |u|/2} (1+|x|)^{-a-b} \leq C(1+|u|)^{d-a-b}\).
\end{proof}

\subsection{Gaussian connected correlations}

Let \(a_1,\cdots, a_n\in \Zd\) and denote by \(A=\{a_1,\cdots,a_n\}\) the corresponding multi-set, i.e., the collection of the elements $a_1,\ldots,a_n$ counted with their multiplicities, in the case that a given site of $\mathbb Z^d$ appears more than once in the list. One has the Wick formula
\begin{equation}
\label{eq:Wick_rule}
	\Phi(\phi_A) = \sum_{\pi \in \pairings(A)} \prod_{\{a_1,a_2\}\in\pi}\Phi(\phi_{a_1}\phi_{a_2}), 
\end{equation}where $\phi_A=\prod_{a\in A}\phi_a$, and
 \(\pairings(A)\) is the set of partitions of \(A\) into two-elements classes. 
 Moreover, from \eqref{Urs2} one also has the corresponding formula for connected correlations:
   let \(A_i=\{a_1^i,\cdots, a_{k_i}^i\}\), \(i=1,\cdots, n\), be multi-sets of sites. Then,
\begin{equation}
\label{eq:Wick_rule_connected}
	\Phi(\phi_{A_1}; \cdots; \phi_{A_n}) = \sum_{\pi \in \pairings^c(A_1,\cdots,A_n)} \prod_{\{a_1,a_2\}\in\pi}\Phi(\phi_{a_1}\phi_{a_2}),
\end{equation}where \(\pairings^c(A_1,\cdots,A_n)\) is the subset of \(\pairings(A_1\cup \cdots \cup A_n)\) consisting of the pairings $\pi$ such that 
the graph with vertex set \(\{A_1,\cdots,A_n\}\) and edge set \(F_{\pi}=\{\{A_i,A_j\}: \exists\ a\in A_i, b\in A_j \ \text{such that}\ \{a,b\}\in \pi\}\) is connected. The same formulas extend (by linearity) to the case \(A_i = A_i^v\sqcup A_i^e\) with \(A_i^v\) a multi-set of sites and \(A_i^e\) a multi-set of edges (elements of \(\bbE_d\)) and \(\phi_{A_i}\) is replaced by \(\phi_{A_i^v }(\nabla \phi)_{A_{i}^e}\).

\begin{lemma}
	\label{app:lem:sum_of_connected_corr}
	Let \(K, n\geq 0\) be integers. Let \(\epsilon>0\). Then, there exists \(C=C(n,K,\epsilon)\) such that for any \(p,p':\Zd\to \Z_+\) odd, any \(q:\bbE_d\to \Z_+\) odd, and \(r_1,\cdots,r_n>0\) integers with \(\normI{p}+\normI{p'}+\normI{q}+\sum_{i=1}^n 2r_i\leq K\),
	\begin{equation}\label{eq:71}
		\sum_{\substack{x_1,\cdots, x_n\in \Zd\\ e_1,\cdots, e_n\in \rmB_+}}\big|\Phi\big(\phi^p \phi^{p'}; (\nabla_{x_1}^{e_1}\phi )^{2r_1};\cdots; (\nabla_{x_n}^{e_n}\phi )^{2r_n} \big)\big| \leq C\sum_{x\in \supp_p}\sum_{y\in\supp_{p'}} \frac{1}{(1+|x-y|)^{d-2-\epsilon}},
	\end{equation}and
	\begin{equation}\label{eq:72}
		\sum_{\substack{x_1,\cdots, x_n\in \Zd\\ e_1,\cdots, e_n\in \rmB_+}}\big|\Phi\big(\phi^p (\nabla\phi)^{q}; (\nabla_{x_1}^{e_1}\phi )^{2r_1};\cdots; (\nabla_{x_1}^{e_n}\phi )^{2r_n} \big)\big| \leq C\sum_{x\in \supp_p}\sum_{y\in\supp_{q}} \frac{1}{(1+|x-y|)^{d-1-\epsilon}}.
	\end{equation}
\end{lemma}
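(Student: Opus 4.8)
The plan is to expand the connected correlation via the Wick rule for cumulants \eqref{eq:Wick_rule_connected}, and then to bound the resulting sum over internal vertex positions by systematically ``integrating out'' the internal vertices, using the convolution estimate of Lemma~\ref{app:lem:sum_Zd_a_b_to_log} together with the pointwise decay of the (gradient) propagators from Theorem~\ref{thm:Gaussian_asymptotics}. Writing $B_0=\phi^p\phi^{p'}$ (resp. $B_0=\phi^p(\nabla\phi)^q$) for the external block and $B_i=(\nabla^{e_i}_{x_i}\phi)^{2r_i}$, $i=1,\dots,n$, for the internal blocks, formula \eqref{eq:Wick_rule_connected} expresses $\Phi(B_0;B_1;\dots;B_n)$ as a sum over pairings $\pi$ of all the field legs whose associated block graph is connected. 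Each such $\pi$ contributes a product of two-point functions, which I would bound factor by factor using $0\le G_{ab}\le c(1+|a-b|)^{-(d-2)}$, $|\nabla G|\le c(1+|a-b|)^{-(d-1)}$ and $|\nabla\nabla G|\le c(1+|a-b|)^{-d}$. Since $\normI p+\normI{p'}+\normI q+\sum_i 2r_i\le K$, the number of pairing topologies is bounded by a constant depending only on $K$, so it suffices to bound, for each fixed connected $\pi$, the sum over $x_1,\dots,x_n\in\Zd$ and $e_1,\dots,e_n\in\rmB_+$ of the corresponding product.

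The key structural input is a parity argument. Collapse all $p$-legs to a single ``red'' super-vertex $\mathfrak r$, all $p'$-legs (resp. $q$-legs) to a ``blue'' super-vertex $\mathfrak b$, and each block $B_i$ to a super-vertex; each pairing becomes an edge (a self-loop when both legs sit on the same super-vertex). Then $\deg\mathfrak r=\normI p$ and $\deg\mathfrak b=\normI{p'}$ (resp. $\normI q$) are \emph{odd}, while every internal super-vertex has even degree $2r_i$. Connectedness of the block graph forces every internal super-vertex to be joined to $\{\mathfrak r,\mathfrak b\}$; moreover the connected component of $\mathfrak r$ has even degree sum (handshake lemma) and contains the odd-degree vertex $\mathfrak r$, so it must contain a second odd-degree vertex, which can only be $\mathfrak b$. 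Thus $\mathfrak r$ and $\mathfrak b$ lie in the same component, i.e. every connected pairing contains a chain of propagators linking some $x\in\supp_p$ to some $y\in\supp_{p'}$ (resp. $\supp_q$).

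I would then sum out the internal vertices one at a time, always eliminating a vertex that currently carries at least two decaying legs; this is possible because every internal vertex has even degree $\ge 2$ and, by the connectivity just established, none of them decouples from the external block, so no translation-invariant (hence divergent) sub-sum ever appears. Concretely, a vertex carrying two legs of exponents $a,b\ge d-1$ is removed by Lemma~\ref{app:lem:sum_Zd_a_b_to_log}, which replaces them by a single effective leg of exponent $a+b-d\ge d-2$ at the cost of a logarithm, while any surplus legs at that vertex are bounded using the uniform boundedness of the propagators. Carried out along the chain from $\supp_p$ to $\supp_{p'}$ (resp. $\supp_q$), this collapses the whole diagram to a single effective two-point factor of exponent $d-2$ (resp. $d-1$, since one endpoint now carries a gradient), the remaining ``decorative'' lines and loops only improving the decay. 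The at most $O(K)$ accumulated logarithms are absorbed through $\log^k(1+r)\le C_\epsilon(1+r)^\epsilon$, degrading the exponents to $d-2-\epsilon$ and $d-1-\epsilon$; summing finally over the finitely many topologies and over the endpoints $x\in\supp_p$, $y\in\supp_{p'}$ (resp. $\supp_q$) yields \eqref{eq:71} and \eqref{eq:72}.

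The main obstacle is precisely this last summation over internal positions for a general connected diagram: one must organize the elimination order so that \emph{every} internal vertex is summed while presenting two genuinely decaying legs, and verify that loops and higher-degree ($2r_i>2$) vertices never spoil either the convergence or the target exponent of the connecting chain. The even-degree-plus-connectivity property is what guarantees such an order exists, and the $\epsilon$-loss is exactly what lets the repeated, only marginally convergent convolutions of Lemma~\ref{app:lem:sum_Zd_a_b_to_log}, each producing logarithms, be iterated without deteriorating the final power law.
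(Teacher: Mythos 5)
Your skeleton coincides with the paper's: expansion via the cumulant Wick rule \eqref{eq:Wick_rule_connected}, the handshake/parity argument forcing $\supp_p$ and $\supp_{p'}$ (resp.\ $\supp_q$) to lie in the same component of the pairing graph, sup-norm bounds on the pairings not used for decay, and repeated convolutions via Lemma~\ref{app:lem:sum_Zd_a_b_to_log} with the logarithms absorbed into the $\epsilon$. The gap sits exactly at the point you yourself call ``the main obstacle'' and then settle by assertion: that even degrees plus connectivity guarantee an elimination order in which every internal vertex is summed while carrying two decaying legs. This is not implied by those two properties alone, and your greedy rule fails as stated. Concretely, take $\normI{p}=\normI{p'}=1$ (sites $x,y$), $n=2$, $r_1=2$, $r_2=1$, and the connected pairing in which block $1$ is paired once to $\phi_x$, once to $\phi_y$, and twice to block $2$. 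Vertex $1$ carries four decaying legs, so your rule permits summing over $x_1$ first; but then \emph{every} choice of kept legs strands vertex $2$: keeping the two external legs sup-bounds the legs to vertex $2$ and leaves $\sum_{x_2}1=\infty$; keeping one external leg and one leg to vertex $2$ leaves vertex $2$ with a single leg of exponent $d-1$, whose sum diverges; keeping the two legs to vertex $2$ makes the $x_1$-sum a bounded function of $x_2$, and again $\sum_{x_2}$ of a constant diverges. The only admissible order is vertex $2$ first, then vertex $1$. So the order (and the choice of which legs to keep, since the final bound must retain a chain of exponent $d-2$ between $\supp_p$ and $\supp_{p'}$, not merely converge) is genuinely constrained, and the existence of an admissible order is the nontrivial combinatorial content of the lemma; it is precisely what you do not prove.

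The paper's proof supplies the missing organization, by a device your sketch lacks. Since exactly the two external vertices have odd degree, one extracts (Euler-type argument) a single walk, of length at most $K+1$, passing through \emph{every} internal vertex: from a site of $\supp_p$ to a site of $\supp_{p'}$ when no pairing joins the two supports directly, or a closed walk based at the merged external vertex when a direct pairing $\{x,y\}$ exists (in which case that pairing itself supplies the factor $\Phi(\phi_x\phi_y)\le c(1+|x-y|)^{2-d}$). All pairings off the walk are bounded by constants. Crucially, when the walk revisits an internal vertex, the repeated visits are treated as \emph{independent} summation variables; this relaxation of the diagonal constraint is a legitimate upper bound because all summands are non-negative. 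After it, every summation variable carries exactly two legs, and the convolutions of Lemma~\ref{app:lem:sum_Zd_a_b_to_log} are performed linearly along the walk: the intermediate $\nabla\nabla G$ legs (exponent $d$, by Theorem~\ref{thm:Gaussian_asymptotics}) never degrade the exponent, the two end legs (exponent $d-1$) each cost one unit, and one lands exactly on $d-2-\epsilon$ for \eqref{eq:71} and $d-1-\epsilon$ for \eqref{eq:72}. If you add this walk-extraction-plus-relaxation step --- or prove by other means that an exponent-preserving elimination order always exists --- your argument closes and becomes essentially the paper's proof.
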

\begin{proof} For $n=0$ the summands in the left sides should be interpreted as $|\Phi(\phi^p\phi^{p'})|$ and $|\Phi(\phi^p(\nabla\phi)^q)|$, respectively. In this case, 
the claim readily follows from~\ref{thm:Gaussian_asymptotics} and~\eqref{eq:Wick_rule}. We therefore suppose \(n\geq 1\). As the two claims are proved in exactly the same fashion, we prove only the first one and leave the adaptations to get the second to the reader. We will use the identity~\eqref{eq:Wick_rule_connected} together with Lemma~\ref{app:lem:sum_Zd_a_b_to_log}.
	
	It will useful to use a graphical language: introduce a set of sites \(\{0, 0', 1,\cdots, n\}\) corresponding to \(\phi^p,\phi^{p'},(\nabla_{x_1}^{e_1}\phi)^{2r_1}, \cdots, (\nabla_{x_n}^{e_n}\phi)^{2r_n} \). Edges will be given by the pairings in~\eqref{eq:Wick_rule_connected}. See Figure~\ref{fig:graphical_rep_connected_corr} for an illustration with \(n=4\).
	\begin{figure}
	\centering
	\includegraphics[scale=0.5]{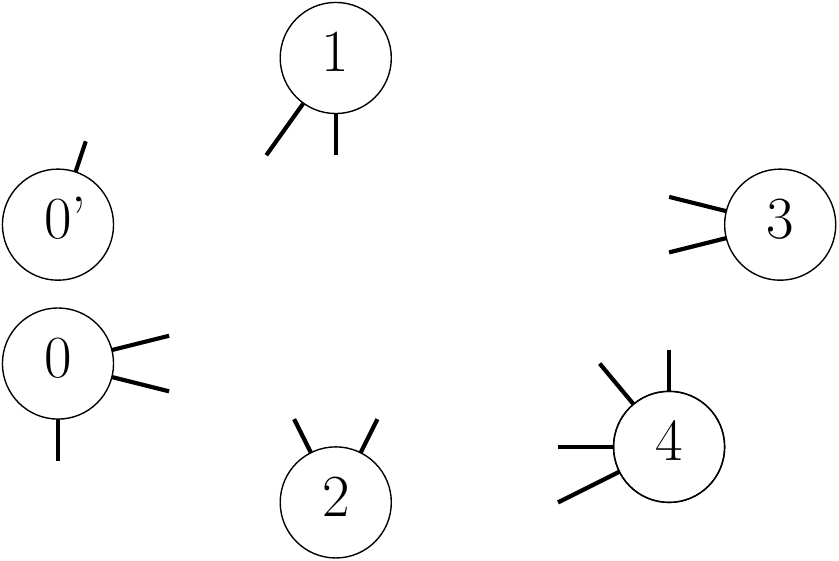}
	\hspace{2cm}
	\includegraphics[scale=0.5]{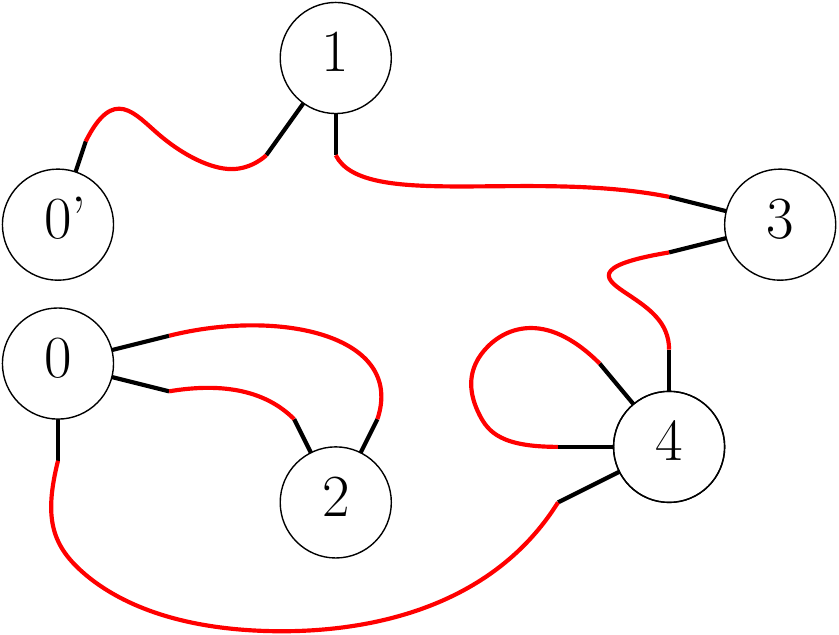}
	\caption{\(n=4\), \(r_1=r_2=r_3=1\), \(r_4=2\). Left: the ``vertices'' with the outgoing half-edges. Right: a possible pairing of the elements with \(0,0'\) not merged.}
	\label{fig:graphical_rep_connected_corr}
	\end{figure}
	
	We consider two cases: first, if there is a pairing \(\{x,y\}\) in \(\pi\) such that \(x\in\supp_p,y\in\supp_{p'}\), we merge the two vertices \(0,0'\) into one, to be denoted $00'$ (see Figure~\ref{fig:graphical_rep_connected_corr2}). In this case, since all the terms in the connected correlation have even degree, one can extract (for a given pairing \(\pi\)) two sites \(u,z\in \supp_p\cup\supp_{p'}\), and a path \(\gamma: 00'\to 00'\) passing through every site in \(\{1,\cdots, n\}\), of length \(|\gamma|\leq K+1\) (see Figure~\ref{fig:graphical_rep_connected_corr2}) 	
	\begin{figure}
		\centering
		\includegraphics[scale=0.5]{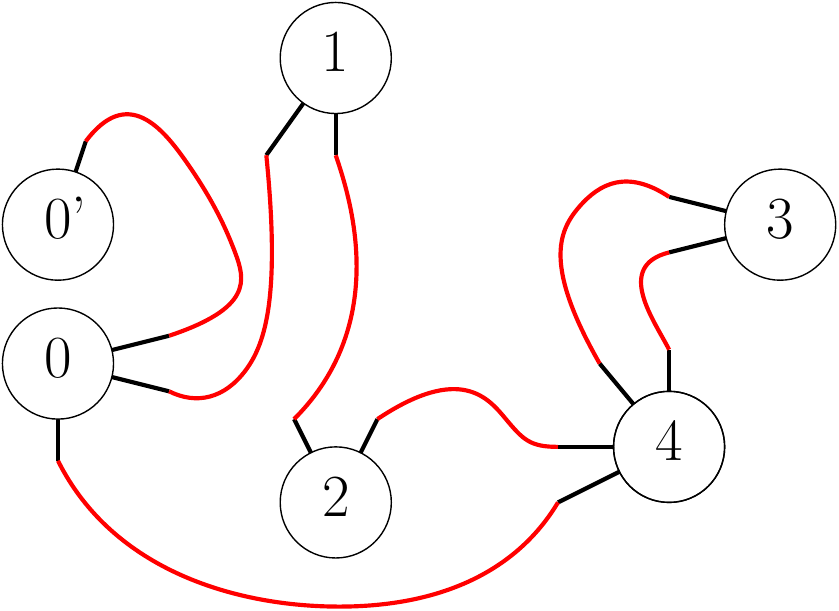}
		\hspace{2cm}
		\includegraphics[scale=0.5]{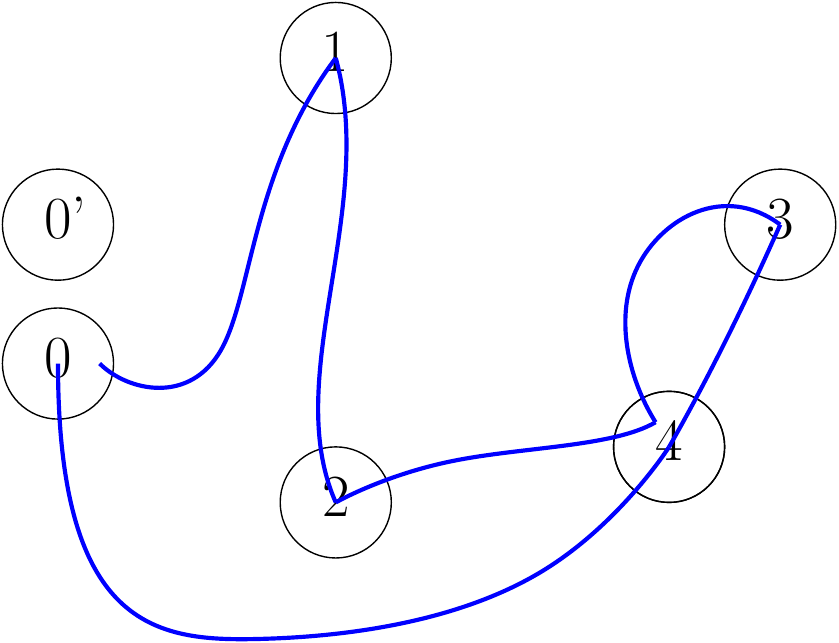}
		\caption{\(n=4\), \(r_1=r_2=r_3=1\), \(r_4=2\). Left: a possible pairing of the elements with \(0,0'\) merged. Right: a path extraction.}
		\label{fig:graphical_rep_connected_corr2}
	\end{figure}
such that the contribution to the total weight of \(\pi\) coming from the pair \(x,y\) is  \(\Phi(\phi_x\phi_y)\) while the contribution coming from \(\gamma\) is
	\begin{equation*}
		\Phi\big(\phi_u\nabla_{x_{\gamma_1}}^{e_{\gamma_1}}\phi \big)\Big(\prod_{k=1}^{M-2} \Phi\big(\nabla_{x_{\gamma_k}}^{e_{\gamma_k}}\phi\nabla_{x_{\gamma_{k+1}}}^{e_{\gamma_{k+1}}}\phi \big) \Big)\Phi\big(\phi_z \nabla_{x_{\gamma_{M-1}}}^{e_{\gamma_{M-1}}}\phi \big)	\end{equation*}
(here $\gamma$ is a tuple $\gamma=(\gamma_0,\gamma_1,\ldots,\gamma_M)$, with  
$\gamma_0=\gamma_M=00'$, $\gamma_i\in\{1,\ldots,n\}$ for $i=1,\ldots,M-1$, such that the multi-set $\{\gamma_1,\ldots,\gamma_{M-1}\}$ contains $\{1,\ldots,n\}$). 
Moreover, the contribution of the other pairs is bounded from above by \(c^K\) for some universal \(c<\infty\). Therefore,
in the case \(0,0'\) are merged, we have the following upper bound on our target:
	\begin{equation*}
		c^K\sum_{\substack{x\in \supp_p\\ y\in\supp_{p'}}} \Phi(\phi_x\phi_y) \sum_{u,z\in\supp_{p+p'}}\sum_{\gamma:u\to z} \sum_{\substack{x_1,\cdots,x_n\\ e_1,\cdots,e_n}} \Big|\Phi(\phi_u \nabla_{x_{\gamma_1}}^{e_{\gamma_1}}\phi) \Phi(\phi_z \nabla_{x_{\gamma_{M-1}}}^{e_{\gamma_{M-1}}}\phi) \prod_{k=1}^{M-2} \Phi(\nabla_{x_{\gamma_{k}}}^{e_{\gamma_{k}}} \phi \nabla_{x_{\gamma_{k+1}}}^{e_{\gamma_{k+1}}} \phi)\Big|,
	\end{equation*}where the sum is over \(\gamma\) as described previously. The contribution of a fixed \(\gamma\) is then bounded from above by
	\begin{multline*}
		\sum_{\substack{x_1,\cdots,x_{M-1}\\ e_1,\cdots,e_{M-1}}} \Big|\Phi(\phi_u \nabla_{x_{1}}^{e_{1}}\phi) \Phi(\phi_z \nabla_{x_{M-1}}^{e_{M-1}}\phi) \prod_{k=1}^{M-2} \Phi(\nabla_{x_{k}}^{e_{k}} \phi \nabla_{x_{k+1}}^{e_{k+1}} \phi)\Big|\leq \\
		\leq C^{M} \sum_{x_1,\cdots,x_{M-1}} (|u-x_1|+1)^{1-d}(|z-x_{M-1}|+1)^{1-d} \prod_{k=1}^{M-2} (|x_k-x_{k+1}|+1)^{-d}.
	\end{multline*}Repeated use of Lemma~\ref{app:lem:sum_Zd_a_b_to_log} gives the summability of the infinite sum, implying the claim (in the case that \(0,0'\) are merged), since the combinatorial factor coming from the sum over \(\gamma\) depends only on \(n,K\).
	
	We next turn to the case where there is no pairing in \(\pi\) linking \(0\) to \(0'\) (such as in Figure~\ref{fig:graphical_rep_connected_corr}). In this case one can find \(x\in\supp_p\), \(y\in\supp_{p'}\) and \(\gamma:0\to 0\) passing at least once through every sites of \(\{1,\cdots,n\}\) and with \(|\gamma|\leq K+1\). 
The contribution to the total weight of $\pi$ from such a path is 
	\begin{equation*}
		\Phi\big(\phi_x\nabla_{x_{\gamma_1}}^{e_{\gamma_1}}\phi \big)\Phi\big( \phi_y\nabla_{x_{\gamma_{M-1}}}^{e_{\gamma_{M-1}}}\phi \big)\prod_{k=1}^{M-2} \Phi\big(\nabla_{x_{\gamma_k}}^{e_{\gamma_k}}\phi\nabla_{x_{\gamma_{k+1}}}^{e_{\gamma_{k+1}}}\phi \big).
	\end{equation*}up to additional factors from the other pairs, bounded from above by \(c^K\) for some universal \(c<\infty\). 
	Proceeding as before, we obtain the following upper bound on the target quantity:
	\begin{equation*}
		c^K\sum_{\substack{x\in \supp_p\\ y\in\supp_{p'}}} \sum_{\gamma:x\to y} \sum_{\substack{x_1,\cdots,x_n\\ e_1,\cdots,e_n}} \Big|\Phi(\phi_x \nabla_{x_{\gamma_1}}^{e_{\gamma_1}}\phi) \Phi(\phi_y \nabla_{x_{\gamma_{M-1}}}^{e_{\gamma_{M-1}}}\phi) \prod_{k=1}^{M-2} \Phi(\nabla_{x_{\gamma_{k}}}^{e_{\gamma_{k}}} \phi \nabla_{x_{\gamma_{k+1}}}^{e_{\gamma_{k+1}}} \phi)\Big|.
	\end{equation*}The contribution of a fixed path \(\gamma\) is then bounded from above as before by \(C^{M}\) times
	\begin{equation*}
		\sum_{x_1,\cdots,x_{M-1}} (|x-x_1|+1)^{1-d}(|y-x_{M-1}|+1)^{1-d} \prod_{k=1}^{M-2} (|x_k-x_{k+1}|+1)^{-d}.
	\end{equation*}
Once again, repeated use of Lemma~\ref{app:lem:sum_Zd_a_b_to_log} gives the summability of the infinite sum, implying the claim in the case that
there is no pairing in \(\pi\) linking \(0\) to \(0'\). This concludes the proof. 
\end{proof}

\bibliographystyle{plain}
\bibliography{ON_exp_bib}

\end{document}